\begin{document}

\title{Using Invalid Instruments on Purpose: Focused Moment Selection and Averaging for GMM\footnote{I thank Aislinn Bohren, Xu Cheng, Gerda Claeskens, Bruce Hansen, Byunghoon Kang, Toru Kitagawa, Hannes Leeb, Adam McCloskey, Serena Ng, Alexei Onatski, Hashem Pesaran, Benedikt P\"{o}tscher, Frank Schorfheide, Neil Shephard, Richard J.\ Smith, Stephen Thiele, Melvyn Weeks, and seminar participants at 
Brown, Cambridge, Columbia, George Washington, Oxford, Queen Mary, Rutgers, St Andrews, UPenn, Vienna, and the 2011 Econometric Society European Meetings for their many helpful comments and suggestions. I thank Kai Carstensen for providing data for my empirical example.}}

\author{Francis J.\ DiTraglia}

 \date{\footnotesize Final Version: September 1, 2016} 

\maketitle 
\begin{abstract}
In finite samples, the use of a slightly endogenous but highly relevant instrument can reduce mean-squared error (MSE). 
Building on this observation, I propose a novel moment selection procedure for GMM -- the Focused Moment Selection Criterion (FMSC) -- in which moment conditions are chosen not based on their validity but on the MSE of their associated estimator of a user-specified target parameter.
The FMSC mimics the situation faced by an applied researcher who begins with a set of relatively mild ``baseline'' assumptions and must decide whether to impose any of a collection of stronger but more controversial ``suspect'' assumptions.
When the (correctly specified) baseline moment conditions identify the model, the FMSC provides an asymptotically unbiased estimator of asymptotic MSE, allowing us to select over the suspect moment conditions.
I go on to show how the framework used to derive the FMSC can address the problem of inference post-moment selection.
Treating post-selection estimators as a special case of moment-averaging, in which estimators based on different moment sets are given data-dependent weights, I propose  simulation-based procedures for inference that can be applied to a variety of formal and informal moment-selection and averaging procedures.
Both the FMSC and confidence interval procedures perform well in simulations.
I conclude with an empirical example examining the effect of instrument selection on the estimated relationship between malaria and income per capita.

	\bigskip
	\noindent\textbf{Keywords:} Moment selection, GMM estimation, Model averaging, Focused Information Criterion, Post-selection estimators

	\medskip
	\noindent\textbf{JEL Codes:} C21, C26, C52 
\end{abstract}

\section{Introduction}
In finite samples, the addition of a slightly endogenous but highly relevant instrument can reduce estimator variance by far more than bias is increased. 
Building on this observation, I propose a novel moment selection criterion for generalized method of moments (GMM) estimation: the focused moment selection criterion (FMSC). 
Rather than selecting only valid moment conditions, the FMSC chooses from a set of potentially mis-specified moment conditions based on the asymptotic mean squared error (AMSE) of their associated GMM estimators of a user-specified scalar target parameter $\mu$.
To ensure a meaningful bias-variance tradeoff in the limit, I employ a drifting asymptotic framework in which mis-specification, while present for any fixed sample size, vanishes asymptotically.
In the presence of such \emph{locally mis-specified} moment conditions, GMM remains consistent although, centered and rescaled, its limiting distribution displays an asymptotic bias. Adding an additional mis-specified moment condition introduces a further source of bias while reducing asymptotic variance. 
The idea behind the FMSC is to trade off these two effects in the limit as an approximation to finite sample behavior.\footnote{When finite-sample MSE is undefined, AMSE comparisons remain meaningful: see Online Appendix \ref{append:trim}.}
I suppose that two blocks of moment conditions are available: one that is assumed correctly specified, and another that may not be.
This mimics the situation faced by an applied researcher who begins with a ``baseline'' set of relatively mild maintained assumptions and must decide whether to impose any of a collection of stronger but also more controversial ``suspect'' assumptions.
When the (correctly specified) baseline moment conditions identify the model, the FMSC provides an asymptotically unbiased estimator of AMSE, allowing us select over the suspect moment conditions.\footnote{When this is not the case, it remains possible to use the AMSE framework to carry out a sensitivity analysis: see Online Appendix \ref{sec:digress}.}

The primary goal of the FMSC is to select estimators with low AMSE, but researchers typically wish to report confidence intervals along with parameter estimates.
Unfortunately the usual procedures for constructing asymptotic confidence intervals for GMM fail when applied to estimators chosen using a moment selection procedure.
A ``na\"{i}ve'' 95\% confidence interval constructed from the familiar textbook formula will generally under-cover: it will contain the true parameter value far less than 95\% of the time because it fails to account for the additional sampling uncertainty that comes from choosing an estimator based on the data. 
To address the challenging problem of inference post-moment selection, I continue under the local mis-specification framework to derive the limit distribution of ``moment average estimators,'' data-dependent weighted averages of estimators based on different moment conditions.
These estimators are interesting in their own right and include post-moment selection estimators as a special case.
I propose two simulation-based procedures for constructing confidence intervals for moment average and post-selection estimators, including the FMSC.
First is a ``2-Step'' confidence interval.
I prove that this interval guarantees asymptotically valid inference: the asymptotic coverage of a nominal $100 \times (1 - \alpha)\%$ interval cannot fall below this level.
The price of valid inference, however, is conservatism: the actual coverage of the 2-Step interval typically exceeds its nominal level.\footnote{This is unavoidable given certain impossibility results concerning post-selection inference. See, e.g.\ \cite{LeebPoetscher2005}.}
As a compromise between the conservatism of the 2-Step interval and the severe under-coverage of the na\"{i}ve interval I go on to propose a ``1-Step'' confidence interval. 
This interval is easier to compute than its 2-Step counterpart and performs well in empirically relevant examples, as I show both theoretically and in simulations below.
The 1-Step interval is far shorter than the corresponding 2-Step interval and, while it can under-cover, the magnitude of the size distortion is modest compared to that of the na\"{i}ve intervals typically reported in applied work.  

While my methods apply to general GMM models, I focus on two simple but empirically relevant examples: choosing between ordinary least squares (OLS) and two-stage least squares (TSLS) estimators, and selecting instruments in linear instrumental variables (IV) models. 
In the OLS versus TSLS example the FMSC takes a particularly transparent form, providing a risk-based justification for the Durbin-Hausman-Wu test, and leading to a novel ``minimum-AMSE'' averaging estimator that combines OLS and TSLS.
The FMSC, averaging estimator, and related confidence interval procedures work well in practice, as I demonstrate in a series of simulation experiments and an empirical example from development economics.

The FMSC and minimum-AMSE averaging estimator considered here are derived for a scalar parameter interest, as this is the most common situation encountered in applied work.\footnote{For an extension of the FMSC to vector target parameters, see Online Appendix \ref{append:mult}.}
As a consequence, Stein-type results do not apply: it is impossible to construct an estimator with uniformly lower risk than the ``valid'' estimator that uses only the baseline moment conditions.
Nevertheless, as my simulation results show, selection and averaging can substantially outperform the valid estimator over large regions of the parameter space, particularly when the ``suspect'' moment conditions are highly informative and \emph{nearly} correct.
This is precisely the situation for which the FMSC is intended.

My approach to moment selection is inspired by the focused information criterion of \citet{ClaeskensHjort2003}, a model selection criterion for maximum likelihood estimation. 
Like \citet{ClaeskensHjort2003}, I study AMSE-based selection under mis-specification in a drifting asymptotic framework. 
In contradistinction, however, I consider moment rather than model selection, and general GMM rather than maximum likelihood estimation.
\cite{Schorfheide2005} uses a similar approach to select over forecasts constructed from mis-specified vector autoregression models, developed independently of the FIC. 
While the use of locally mis-specified moment conditions dates back at least as far as \cite{Newey1985}, the idea of using this framework for AMSE-based moment selection, however, is novel.

The existing literature on moment selection primarily aims to consistently select all correctly specified moment conditions while eliminating all invalid ones\footnote{Under the local mis-specification asymptotics considered below, consistent moment selection criteria simply choose \emph{all} available moment conditions. For details, see Theorem \ref{pro:andrews}.}
This idea begins with \cite{Andrews1999} and is extended by  \cite{AndrewsLu} and \cite{HongPrestonShum}.
More recently, \cite{Liao} proposes a shrinkage procedure for consistent GMM moment selection and estimation. 
In a similar vein, \cite{CanerHanLee} extend and generalize earlier work by \cite{Caner2009} on LASSO-type model selection for GMM to carry out simultaneous model and moment selection via an adaptive elastic net penalty. 
Whereas these proposals examine only the validity of the moment conditions under consideration, the FMSC balances validity against relevance to minimize AMSE.
Although \cite{HallPeixe2003} and \cite{ChengLiao} do consider relevance, their aim is to avoid including redundant moment conditions after consistently eliminating invalid ones.
Some other papers that propose choosing, or combining, instruments to minimize MSE include \cite{DonaldNewey2001}, \cite{DonaldImbensNewey2009}, and \cite{KuersteinerOkui2010}.
Unlike the FMSC, however, these papers consider the \emph{higher-order} bias that arises from including many valid instruments rather than the first-order bias that arises from the use of invalid instruments.

Another distinguishing feature of the FMSC is focus: rather than a one-size-fits-all criterion, the FMSC is really a method of constructing application-specific moment selection criteria.
Consider, for example, a dynamic panel model.
If your target parameter is a long-run effect while mine is a contemporaneous effect, there is no reason to suppose \emph{a priori} that we should use the same moment conditions in estimation, even if we share the same model and dataset.
The FMSC explicitly takes this difference of research goals into account.

Like Akaike's Information Criterion (AIC), the FMSC is a \emph{conservative} rather than consistent selection procedure, as it remains random even in the limit.	
While consistency is a desirable property in many settings, the situation is more complex for model and moment selection: consistent and conservative selection procedures have different strengths, but these strengths cannot be combined \citep{Yang2005}.
The goal of this paper is estimators with low risk.
Viewed from this perspective consistent selection criteria suffer from a serious defect: they exhibit unbounded minimax risk \citep{LeebPoetscher2008}.  
Conservative criteria such as the FMSC do not suffer from this shortcoming.
Moreover, as discussed in more detail below, the asymptotics of consistent selection paint a misleading picture of the effects of moment selection on inference.
For these reasons, the fact that the FMSC is conservative rather than consistent is an asset in the present context.

Because it studies inference post-moment selection, this paper relates to a vast literature on ``pre-test'' estimators.
For an overview, see \citet{LeebPoetscher2005, LeebPoetscher2009}.
There are several proposals to construct valid confidence intervals post-model selection, including \cite{Kabaila1998}, \cite{HjortClaeskens} and \cite{KabailaLeeb2006}. 
To my knowledge, however, this is the first paper to treat the problem in general for post-moment selection and moment average estimators in the presence of mis-specification.
Some related results appear in \cite{Berkowitz2008}, \cite{Berkowitz2012}, \cite{Guggenberger2010}, \cite{Guggenberger2012}, \cite{GuggenbergerKumar}, and \cite{Caner2014}.
While I developed the simulation-based, two-stage confidence interval procedure described below by analogy to a suggestion in \cite{ClaeskensHjortbook}, \cite{Leeb} kindly pointed out that similar constructions have appeared in \cite{Loh1985}, \cite{Berger1994}, and \cite{Silvapulle1996}. More recently, \cite{McCloskey} takes a similar approach to study a class of non-standard testing problems.

The framework within which I study moment averaging is related to the frequentist model average estimators of \cite{HjortClaeskens}.
Two other papers that consider weighting estimators based on different moment conditions are \cite{Xiao} and \cite{ChenChavezLinton}.
Whereas these papers combine estimators computed using valid moment conditions to achieve a minimum variance estimator, I combine estimators computed using potentially invalid conditions with the aim of reducing estimator AMSE.
A related idea underlies the combined moments (CM) estimator of \cite{Judge2007}.
For a different approach to combining OLS and TSLS estimators, similar in spirit to the Stein-estimator and developed independently of the work presented here, see \cite{HansenStein}. 
\cite{ChengLiaoShi} provide related results for Stein-type moment averaging in a GMM context with potentially mis-specified moment conditions.

The results presented here are derived under strong identification and abstract from the many instruments problem. 
Supplementary simulation results presented in Online Appendix \ref{sec:appendWeak}, however, suggest that the FMSC can nevertheless perform well when the ``baseline'' assumptions only weakly identify the target parameter. 
Extending the idea behind the FMSC to allow for weak identification and possibly a large number of moment conditions is a challenging topic that I leave for future research.

The remainder of the paper is organized as follows.
Section \ref{sec:asymp} describes the asymptotic framework and Section \ref{sec:FMSC} derives the FMSC, both in general and for two specific examples: OLS versus TSLS and choosing instrumental variables.
Section \ref{sec:avg} studies moment average estimators and shows how they can be used to construct valid confidence intervals post-moment selection.
Section \ref{sec:simulations} presents simulation results and Section \ref{sec:application} considers an empirical example from development economics.
Proofs appear at the end of the document; computational details and additional material appear in an Online Appendix. 

\section{Assumptions and Asymptotic Framework}
\label{sec:asymp}

\subsection{Local Mis-Specification}
Let $f(\cdot,\cdot)$ be a $(p+q)$-vector of moment functions of a random vector $Z$ and an $r$-dimensional parameter vector $\theta$, partitioned according to $f(\cdot,\cdot) = \left(g(\cdot,\cdot)', h(\cdot,\cdot)'  \right)'$ where $g(\cdot,\cdot)$ and $h(\cdot,\cdot)$ are $p$- and $q$-vectors of moment functions. 
The moment condition associated with $g$ is assumed to be correct whereas that associated with $h$ is locally mis-specified.
More precisely, 
\begin{assump}[Local Mis-Specification]
\label{assump:drift}
Let $\{Z_{ni}\colon 1\leq i \leq n, n =1, 2, \hdots\}$ be an iid triangular array of random vectors defined on a probability space $(\Upsilon, \mathcal{F}, \mathbb{P})$ satisfying
	\begin{enumerate}[(a)]
		\item $E[g(Z_{ni},\theta_0)] = 0$,
		\item $E[h(Z_{ni},\theta_0)] = n^{-1/2}\tau$, where $\tau$ is an unknown constant vector, 
		\item $\{f(Z_{ni},\theta_0)\colon 1\leq i \leq n, n = 1, 2, \hdots\}$ is uniformly integrable, and
		\item $Z_{ni} \rightarrow_d Z_i$.
	\end{enumerate}
\end{assump}
For any fixed sample size $n$, the expectation of $h$ evaluated at the true parameter value $\theta_0$ depends on the unknown constant vector $\tau$. 
Unless all components of $\tau$ are zero, some of the moment conditions contained in $h$ are mis-specified. 
In the limit however, this mis-specification vanishes, as $\tau/\sqrt{n}$ converges to zero. 
Uniform integrability combined with weak convergence implies convergence of expectations, so that $E[g(Z_i, \theta_0)]=0$ and $E[h(Z_i, \theta_0)]=0$. Because the limiting random vectors $Z_i$ are identically distributed, I suppress the $i$ subscript and simply write $Z$ to denote their common marginal law, e.g.\ $E[h(Z,\theta_0)]=0$. 
Local mis-specification is \emph{not} intended as a literal description of real-world datasets: it is merely a device that gives an asymptotic bias-variance trade-off that mimics the finite-sample intuition.
Moreover, while I work with an iid triangular array for simplicity, the results presented here can be adapted to handle dependent random variables.

\subsection{Candidate GMM Estimators}
Define the sample analogue of the expectations in Assumption \ref{assump:drift} as follows:
$$f_n(\theta) = \frac{1}{n}\sum_{i=1}^n f(Z_{ni},\theta) = \left[\begin{array}{c} g_n(\theta)\\ h_n(\theta) \end{array} \right]=\left[\begin{array}{c}n^{-1}\sum_{i=1}^n g(Z_{ni},\theta) \\ n^{-1}\sum_{i=1}^n h(Z_{ni},\theta) \end{array}\right]$$
where $g_n$ is the sample analogue of the correctly specified moment conditions and $h_n$ is that of the (potentially) mis-specified moment conditions.
A candidate GMM estimator $\widehat{\theta}_S$ uses some subset $S$ of the moment conditions contained in $f$ in estimation. 
Let $|S|$ denote the number of moment conditions used and suppose that $|S|>r$ so the GMM estimator is unique.\footnote{Identifying $\tau$ requires futher assumptions, as discussed in Section \ref{sec:ident}.} 
Let $\Xi_S$ be the $|S| \times(p +q)$ \emph{moment selection matrix} corresponding to $S$. That is, $\Xi_S$ is a matrix of ones and zeros arranged such that $\Xi_S f_n(\theta)$ contains only the sample moment conditions used to estimate $\widehat{\theta}_S$. 
Thus, the GMM estimator of $\theta$ based on moment set $S$ is given by 
$$\widehat{\theta}_S = \underset{\theta \in \Theta}{\mbox{arg min}}\; \left[\Xi_S f_n(\theta)\right]' \widetilde{W}_S \; \left[ \Xi_S f_n(\theta)\right].$$
where $\widetilde{W}_S$ is an $|S|\times |S|$, positive definite weight matrix.
There are no restrictions placed on $S$ other than the requirement that $|S| >r$ so the GMM estimate is well-defined. 
In particular, $S$ may \emph{exclude} some or all of the valid moment conditions contained in $g$.
This notation accommodates a wider range of examples, including choosing between OLS and TSLS estimators.

To consider the limit distribution of $\widehat{\theta}_S$, we require some further notation. 
First define the derivative matrices
	$$G = E\left[\nabla_{\theta} \; g(Z,\theta_0)\right], \quad H = E\left[\nabla_{\theta} \; h(Z,\theta_0)\right], \quad F = (G', H')'$$
and let $\Omega = Var\left[ f(Z,\theta_0) \right]$ where $\Omega$ is partitioned into blocks $\Omega_{gg}$, $\Omega_{gh}$, $\Omega_{hg}$, and $\Omega_{hh}$ conformably with the partition of $f$ by $g$ and $h$. 
Notice that each of these expressions involves the \emph{limiting random variable} $Z$ rather than $Z_{ni}$, so that the corresponding expectations are taken with respect to a distribution for which all moment conditions are correctly specified. 
Finally, to avoid repeatedly writing out pre- and post-multiplication by $\Xi_S$, define $F_S = \Xi_S F$ and $\Omega_S = \Xi_S \Omega\Xi_S'$. 
The following high level assumptions are sufficient for the consistency and asymptotic normality of the candidate GMM estimator $\widehat{\theta}_S$. 
\begin{assump}[High Level Sufficient Conditions]
\label{assump:highlevel} 
\mbox{}
	\begin{enumerate}[(a)]
		\item $\theta_0$ lies in the interior of $\Theta$, a compact set
		\item $\widetilde{W}_S \rightarrow_{p} W_S$, a positive definite matrix
		\item $W_S \Xi_S E[f(Z,\theta)]=0$ if and only if $\theta = \theta_0$
		\item $E[f(Z,\theta)]$ is continuous on $\Theta$
		\item $\sup_{\theta \in \Theta}\| f_n(\theta)- E[f(Z,\theta)]\|\rightarrow_{p} 0$
		\item $f$ is Z-almost surely differentiable in an open neighborhood $\mathcal{B}$ of $\theta_0$
		\item $\sup_{\theta \in \Theta} \|\nabla_{\theta}f_n(\theta) - F(\theta)\|\rightarrow_{p} 0$
		\item $\sqrt{n}f_n(\theta_0) \rightarrow_d  M + \left[\begin{array}{c}0\\ \tau \end{array} \right]$ where $M \sim N_{p+q}\left(0, \Omega \right)$
		\item $F_S'W_SF_S$ is invertible
	\end{enumerate}
\end{assump}

Although Assumption \ref{assump:highlevel} closely approximates the standard regularity conditions for GMM estimation, establishing primitive conditions for Assumptions \ref{assump:highlevel} (d), (e), (g) and (h) is slightly more involved under local mis-specification. 
Low-level sufficient conditions for the two running examples considered in this paper appear in Online Appendix \ref{sec:sufficient_conditions}.
For more general results, see \cite{Andrews1988} Theorem 2 and \cite{Andrews1992} Theorem 4.  
Notice that identification, (c), and continuity, (d), are conditions on the distribution of $Z$, the marginal law to which each $Z_{ni}$ converges. 

\begin{thm}[Consistency]
\label{thm:consist}
Under Assumptions \ref{assump:drift} and \ref{assump:highlevel} (a)--(e), $\widehat{\theta}_S \rightarrow_{p} \theta_0$.
\end{thm}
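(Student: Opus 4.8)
The plan is to verify the hypotheses of a standard consistency theorem for extremum estimators---for instance \cite{Andrews1988} Theorem 2---which states that if the sample objective function converges uniformly in probability over a compact parameter space to a non-random limit that is continuous and uniquely minimized at $\theta_0$, then the minimizer converges in probability to $\theta_0$. Write $Q_n(\theta) = \left[\Xi_S f_n(\theta)\right]'\widetilde{W}_S\left[\Xi_S f_n(\theta)\right]$ for the sample criterion defining $\widehat{\theta}_S$, and let $\bar{f}(\theta) = E[f(Z,\theta)]$ denote the moment functions evaluated under the limiting law $Z$, so that the candidate limit criterion is $Q_\infty(\theta) = \left[\Xi_S \bar{f}(\theta)\right]'W_S\left[\Xi_S \bar{f}(\theta)\right]$. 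Continuity of $Q_\infty$ on $\Theta$ is immediate from Assumption \ref{assump:highlevel}(d) together with the fact that $\Xi_S$ and $W_S$ are fixed.

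For the uniform convergence step I would combine the uniform law of large numbers in Assumption \ref{assump:highlevel}(e), $\sup_{\theta\in\Theta}\|f_n(\theta) - \bar{f}(\theta)\| \rightarrow_{p} 0$, with the convergence of the weight matrix in (b), $\widetilde{W}_S \rightarrow_{p} W_S$. Since $\Xi_S$ is a fixed selection matrix, $\sup_{\theta\in\Theta}\|\Xi_S f_n(\theta) - \Xi_S \bar{f}(\theta)\| \rightarrow_{p} 0$, and because $\bar{f}$ is continuous on the compact set $\Theta$ it is bounded there, so $\sup_{\theta\in\Theta}\|\Xi_S f_n(\theta)\| = O_p(1)$. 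A routine argument---adding and subtracting, then using that the product of a bounded-in-probability sequence and a term converging uniformly to zero converges uniformly to zero---then yields $\sup_{\theta\in\Theta}|Q_n(\theta) - Q_\infty(\theta)| \rightarrow_{p} 0$. For identification, note first that at $\theta_0$ the local mis-specification in Assumption \ref{assump:drift}, together with uniform integrability and weak convergence, gives $\bar{f}(\theta_0) = 0$ in \emph{both} blocks---the drift $n^{-1/2}\tau$ in the $h$-block vanishes in the limit---so that $Q_\infty(\theta_0) = 0$. For $\theta \neq \theta_0$, Assumption \ref{assump:highlevel}(c) gives $W_S \Xi_S \bar{f}(\theta) \neq 0$, and since $W_S$ is positive definite this forces $Q_\infty(\theta) = \bar{f}(\theta)'\Xi_S' W_S \Xi_S \bar{f}(\theta) > 0$. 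Hence $\theta_0$ is the unique minimizer of $Q_\infty$ over $\Theta$, and combined with compactness (Assumption \ref{assump:highlevel}(a)) the cited theorem delivers $\widehat{\theta}_S \rightarrow_{p} \theta_0$.

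The main thing to be careful about---rather than a genuine obstacle---is that, unlike the textbook i.i.d.\ case, the sample moments $f_n$ are averages over a triangular array whose mean is not identically zero at $\theta_0$; establishing $\bar{f}(\theta_0) = 0$ therefore relies on the uniform-integrability-plus-weak-convergence argument recorded immediately after Assumption \ref{assump:drift}, and one should keep in mind that the identification and continuity conditions (c) and (d) are imposed on the limiting law $Z$, as emphasized in the remark following Assumption \ref{assump:highlevel}. Given these observations, the remainder is just the usual extremum-estimator argument, and nothing beyond parts (a)--(e) of Assumption \ref{assump:highlevel} is needed.
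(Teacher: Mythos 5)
Your proposal is correct and follows essentially the same route as the paper, which simply defers to the standard GMM consistency argument of \cite{NeweyMcFadden1994} (Theorem 2.6): uniform convergence of the quadratic criterion, continuity, unique minimization at $\theta_0$ via Assumption 2(c) and positive definiteness of $W_S$, and compactness. Your added care about the triangular array and the vanishing drift $n^{-1/2}\tau$ at $\theta_0$ is exactly the point the paper handles through the remark following Assumption 1, so there is no substantive difference.
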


\begin{thm}[Asymptotic Normality]
\label{thm:normality}
Under Assumptions \ref{assump:drift} and \ref{assump:highlevel}
$$\sqrt{n}(\widehat{\theta}_S - \theta_0 ) \rightarrow_d -K_S \Xi_S  \left(\left[\begin{array}
	{c} M_g \\ M_h
\end{array} \right]  + \left[\begin{array}
	{c} 0 \\ \tau
\end{array} \right]\right)$$
where $K_S  = [F_S'W_SF_S]^{-1} F_S'W_S$, $M = (M_g', M_h')'$, and $M \sim N(0,\Omega)$.
\end{thm}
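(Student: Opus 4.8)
The plan is to run the standard GMM asymptotic-normality argument, taking care to carry the drift $\tau$ through the linearization and to handle the triangular-array structure via the uniform-convergence hypotheses rather than pointwise laws of large numbers. By Theorem \ref{thm:consist}, $\widehat{\theta}_S \rightarrow_p \theta_0$, so with probability approaching one $\widehat{\theta}_S$ lies in the open neighborhood $\mathcal{B}$ of the interior point $\theta_0$ on which $f$ is differentiable (Assumption \ref{assump:highlevel}(a),(f)); hence, eventually, it satisfies the first-order condition
$$\left[\nabla_{\theta}\Xi_S f_n(\widehat{\theta}_S)\right]'\widetilde{W}_S\,\Xi_S f_n(\widehat{\theta}_S) = 0.$$
Applying the mean-value theorem to each coordinate of $\Xi_S f_n(\cdot)$ about $\theta_0$ gives $\Xi_S f_n(\widehat{\theta}_S) = \Xi_S f_n(\theta_0) + \Xi_S \bar{F}_n(\widehat{\theta}_S - \theta_0)$, where the rows of $\bar{F}_n$ are the gradients $\nabla_\theta f_j$ evaluated at (row-specific) intermediate points on the segment joining $\widehat{\theta}_S$ and $\theta_0$. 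Substituting and rearranging,
$$\left[\nabla_{\theta}\Xi_S f_n(\widehat{\theta}_S)\right]'\widetilde{W}_S\,\Xi_S\bar{F}_n\cdot\sqrt{n}(\widehat{\theta}_S - \theta_0) = -\left[\nabla_{\theta}\Xi_S f_n(\widehat{\theta}_S)\right]'\widetilde{W}_S\cdot\sqrt{n}\,\Xi_S f_n(\theta_0).$$

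Next I would identify the limit of each factor. Every intermediate point lies between $\widehat{\theta}_S$ and $\theta_0$ and hence is consistent for $\theta_0$; combining this with the uniform convergence $\sup_{\theta\in\Theta}\|\nabla_\theta f_n(\theta) - F(\theta)\| \rightarrow_p 0$ (Assumption \ref{assump:highlevel}(g)) and continuity of $F(\cdot)$ at $\theta_0$, a standard sandwich argument gives $\bar{F}_n \rightarrow_p F = F(\theta_0)$ and, the same way, $\nabla_\theta f_n(\widehat{\theta}_S) \rightarrow_p F$, so that $\Xi_S\bar{F}_n \rightarrow_p F_S$ and $\nabla_\theta\Xi_S f_n(\widehat{\theta}_S) \rightarrow_p F_S$. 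With $\widetilde{W}_S \rightarrow_p W_S$ by (b), the matrix multiplying $\sqrt{n}(\widehat{\theta}_S - \theta_0)$ converges in probability to $F_S'W_S F_S$, which is invertible by (i). Assumption \ref{assump:highlevel}(h) supplies $\sqrt{n}f_n(\theta_0) \rightarrow_d M + (0',\tau')'$ with $M\sim N(0,\Omega)$, hence $\sqrt{n}\,\Xi_S f_n(\theta_0) \rightarrow_d \Xi_S\big(M + (0',\tau')'\big)$. Slutsky's lemma and the continuous mapping theorem then yield
$$\sqrt{n}(\widehat{\theta}_S - \theta_0) \rightarrow_d -[F_S'W_S F_S]^{-1}F_S'W_S\,\Xi_S\left(M + \left[\begin{array}{c}0\\\tau\end{array}\right]\right) = -K_S\,\Xi_S\left(\left[\begin{array}{c}M_g\\M_h\end{array}\right] + \left[\begin{array}{c}0\\\tau\end{array}\right]\right),$$
writing $M = (M_g',M_h')'$ conformably with $f = (g',h')'$, which is the stated conclusion.

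I expect the only real obstacle to be the two ``bridging'' steps under local mis-specification: justifying the mean-value expansion when $f$ is merely $Z$-almost-surely differentiable and the data form a triangular array whose law varies with $n$, and obtaining $\bar{F}_n \rightarrow_p F$. Both are handled by leaning on Assumption \ref{assump:highlevel}(g) --- uniform convergence of $\nabla_\theta f_n$ to the fixed, continuous limit $F(\cdot)$ --- rather than on any array-level differentiation or pointwise LLN; continuity of $F(\cdot)$ on a neighborhood of $\theta_0$ is the remaining ingredient, which I would state explicitly and which holds for both running examples via the low-level conditions in Appendix \ref{sec:sufficient_conditions} (see also \cite{Andrews1988}, Theorem 2, and \cite{Andrews1992}, Theorem 4, for more general primitive conditions). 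Everything else is routine: the drift $\tau$ enters the limit only through Assumption \ref{assump:highlevel}(h), which already packages the non-standard part of the limit theory, so no separate triangular-array central limit argument is needed here.
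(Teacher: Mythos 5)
Your proposal is correct and follows essentially the same route as the paper, which simply cites the standard GMM argument (\cite{NeweyMcFadden1994}, Theorems 2.6 and 3.1): first-order condition, mean-value expansion, uniform convergence of $\nabla_\theta f_n$ plus $\widetilde{W}_S \rightarrow_p W_S$, and then Slutsky, with the drift $\tau$ entering only through Assumption \ref{assump:highlevel}(h). Your explicit handling of the triangular-array and local mis-specification issues is exactly the adaptation the paper leaves implicit via its high-level conditions, so there is nothing to add.
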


As we see from Theorems \ref{thm:consist} and \ref{thm:normality}, \emph{any} candidate GMM estimator $\widehat{\theta}_S$ is consistent for $\theta_0$ under local mis-specification. 
Unless $S$ excludes \emph{all} of the moment conditions contained in $h$, however, $\widehat{\theta}_S$ inherits an asymptotic bias from the mis-specification parameter $\tau$. 
The local mis-specification framework is useful precisely because it results in a limit distribution for $\widehat{\theta}_S$ with both a bias \emph{and} a variance. 
This captures in asymptotic form the bias-variance tradeoff that we see in finite sample simulations.
In constrast, fixed mis-specification results in a degenerate bias-variance tradeoff in the limit: scaling up by $\sqrt{n}$ to yield an asymptotic variance causes the bias component to diverge.

\subsection{Identification}
\label{sec:ident}
Any form of moment selection requires an identifying assumption: we need to make clear which parameter value $\theta_0$ counts as the ``truth.''
One approach, following \cite{Andrews1999}, is to assume that there exists a unique, maximal set of correctly specified moment conditions that identifies $\theta_0$. 
In the notation of the present paper\footnote{Although \cite{Andrews1999}, \cite{AndrewsLu}, and \cite{HongPrestonShum} consider \emph{fixed} mis-specification, we can view this as a version of local mis-specification in which $\tau \rightarrow \infty$ sufficiently fast.} this is equivalent to the following:

\begin{assump}[\cite{Andrews1999} Identification Condition]
	\label{assump:Andrews}
There exists a subset $S_{max}$ of at least $r$ moment conditions satisfying:
	\begin{enumerate}[(a)]
	 	\item $\Xi_{S_{max}} E[f(Z_{ni},\theta_0)]= 0$
	 	\item For any $S' \neq S_{max}$ such that $\Xi_{S'} E[f(Z_{ni},\theta')]= 0$ for some $\theta' \neq \theta_0$, $|S_{max}| > |S'|$.
	 \end{enumerate}
\end{assump}

\cite{AndrewsLu} and \cite{HongPrestonShum} take the same basic approach to identification, with appropriate modifications to allow for simultaneous model and moment selection. 
An advantage of Assumption \ref{assump:Andrews} is that, under fixed mis-specification, it allows consistent selection of $S_{max}$ without any prior knowledge of \emph{which} moment conditions are correct. 
In the notation of the present paper this corresponds to having no moment conditions in the $g$ block. As \citet[p.\ 254]{Hallbook} points out, however, the second part of Assumption \ref{assump:Andrews} can fail even in very simple settings.
When it does fail, the selected GMM estimator may no longer be consistent for $\theta_0$. 
A different approach to identification is to assume that there is a minimal set of at least $r$ moment conditions \emph{known} to be correctly specified.
This is the approach I follow here, as do \cite{Liao} and \cite{ChengLiao}.\footnote{For a dicussion of why Assumption \ref{assump:Identification} is necessary and how to proceed when it fails, see Online Appendix \ref{sec:digress}.}

\begin{assump}[FMSC Identification Condition] 
\label{assump:Identification}
Let $\widehat{\theta}_v$ denote the GMM estimator based solely on the moment conditions contained in the $g$--block
$$\widehat{\theta}_v = \underset{\theta \in \Theta}{\mbox{arg min}}\; g_n(\theta)' \widetilde{W}_{v} \; g_n(\theta)$$
We call this the ``valid estimator'' and assume that it satisfies all the conditions of Assumption \ref{assump:highlevel}. Note that this implies $p\geq r$.
\end{assump}

Assumption \ref{assump:Identification} and Theorem \ref{thm:normality} immediately imply that the valid estimator shows no asymptotic bias. 
\begin{cor}[Limit Distribution of Valid Estimator]
	\label{cor:valid}
	Let $S_{v}$ include only the moment conditions contained in $g$. 
	Then, under Assumption \ref{assump:Identification} we have
		$$\sqrt{n}\left(\widehat{\theta}_v - \theta_0\right) \rightarrow_d -K_v M_g$$
	by applying Theorem \ref{thm:normality} to $S_{v}$, where $K_v = [G'W_vG]^{-1}G'W_v$ and $M_g \sim N(0,\Omega_{gg})$. 
\end{cor}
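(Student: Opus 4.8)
The plan is to derive this as a direct specialization of Theorem \ref{thm:normality} to the moment set $S_v$. The first point to establish is that Theorem \ref{thm:normality} is in fact applicable with $S = S_v$: Assumption \ref{assump:Identification} explicitly requires $\widehat{\theta}_v$ to satisfy all the conditions of Assumption \ref{assump:highlevel}, so no further verification of regularity conditions is needed and the conclusion of Theorem \ref{thm:normality} holds verbatim for this choice of $S$.

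The remaining work is bookkeeping: I would identify each object in the general formula with its counterpart for $S_v$. The selection matrix is $\Xi_{S_v} = [\,I_p \;\; 0\,]$, which extracts the $g$-block, so $F_{S_v} = \Xi_{S_v} F = G$ and $W_{S_v} = W_v$. Substituting into the definition $K_S = [F_S'W_S F_S]^{-1} F_S' W_S$ from Theorem \ref{thm:normality} gives $K_{S_v} = [G'W_v G]^{-1} G'W_v = K_v$; here invertibility of $G'W_v G$ is precisely Assumption \ref{assump:highlevel}(i) specialized to $S_v$, which is guaranteed by Assumption \ref{assump:Identification}. For the shift term, because $\Xi_{S_v}$ annihilates the $h$-coordinates, $\Xi_{S_v}\big([M_g',M_h']' + [0',\tau']'\big) = M_g$, so the limiting distribution in Theorem \ref{thm:normality} collapses to $-K_v M_g$. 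Finally, since $M \sim N_{p+q}(0,\Omega)$ and $M_g$ is its leading $p$-subvector, the partition of $\Omega$ into the blocks $\Omega_{gg},\Omega_{gh},\Omega_{hg},\Omega_{hh}$ yields $M_g \sim N_p(0,\Omega_{gg})$.

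Every step here is a substitution, so there is essentially no obstacle; the only thing that warrants careful statement is that the hypotheses of Theorem \ref{thm:normality} are genuinely in force for $S_v$, which is exactly what Assumption \ref{assump:Identification} buys us. It is also worth adding the one-line remark motivating the surrounding discussion: the valid estimator exhibits no asymptotic bias precisely because dropping the $h$-block removes the $\tau$ term from the limit, leaving a mean-zero Gaussian limiting distribution.
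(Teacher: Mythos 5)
Your proposal is correct and follows exactly the route the paper intends: the corollary is stated in the text as an immediate application of Theorem \ref{thm:normality} to $S_v$ (justified by Assumption \ref{assump:Identification}), with the selection matrix $\Xi_v = [\,\mathbf{I}_p \;\; \mathbf{0}_{p\times q}\,]$ reducing $K_S$ to $K_v$ and annihilating the $\tau$-shift so that only $M_g \sim N(0,\Omega_{gg})$ remains. There is no gap; your bookkeeping of $F_{S_v}=G$, $W_{S_v}=W_v$, and the invertibility condition matches the paper's implicit argument.
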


Both Assumptions \ref{assump:Andrews} and \ref{assump:Identification} are strong, and neither fully nests the other. 
In the context of the present paper, Assumption \ref{assump:Identification} is meant to represent a situation in which an applied researcher chooses between two groups of assumptions.
The $g$--block contains the ``baseline'' assumptions while the $h$--block contains a set of stronger, more controversial ``suspect'' assumptions.
The FMSC is designed for settings in which the $h$--block is expected to contain a substantial amount of information beyond that already contained in the $g$--block. 
The idea is that, if we knew the $h$--block was correctly specified, we would expect a large gain in efficiency by including it in estimation. 
This motivates the idea of trading off the variance reduction from including $h$ against the potential increase in bias.

\section{The Focused Moment Selection Criterion}
\label{sec:FMSC}

\subsection{The General Case}
The FMSC chooses among the potentially invalid moment conditions contained in $h$ based on the estimator AMSE of a user-specified scalar target parameter.\footnote{Although I focus on the case of a scalar target parameter in the body of the paper, the same idea can be applied to a vector of target parameters. For details see Online Appendix \ref{append:mult}.}
Denote this target parameter by $\mu$, a real-valued, $Z$-almost continuous function of the parameter vector $\theta$ that is differentiable in a neighborhood of $\theta_0$. 
Further, define the GMM estimator of $\mu$ based on $\widehat{\theta}_S$ by $\widehat{\mu}_S = \mu(\widehat{\theta}_S)$ and the true value of $\mu$ by $\mu_0 = \mu(\theta_0)$. 
Applying the Delta Method to Theorem \ref{thm:normality} gives the AMSE of $\widehat{\mu}_S$.

\begin{cor}[AMSE of Target Parameter]
\label{cor:target}
Under the hypotheses of Theorem \ref{thm:normality}, 
$$\sqrt{n}\left(\widehat{\mu}_S - \mu_0\right)\rightarrow_d-\nabla_\theta\mu(\theta_0)'K_S \Xi_S \left(M +  \left[\begin{array}
	{c} 0 \\ \tau
\end{array} \right]\right)$$ 
where $M$ is defined in Theorem \ref{thm:normality}.
Hence,
	$$\mbox{AMSE}\left(\widehat{\mu}_S\right) = \nabla_\theta\mu(\theta_0)'K_S \Xi_S \left\{\left[\begin{array}{cc}0&0\\0&\tau\tau'\end{array}\right] + \Omega\right\}\Xi_S'K_S'\nabla_\theta\mu(\theta_0).$$
\end{cor}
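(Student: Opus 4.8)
The plan is to establish Corollary \ref{cor:target} by a Delta-Method argument applied to Theorem \ref{thm:normality}, and then to read off the AMSE as the second moment of the resulting Gaussian limit. Because the finite-sample MSE of $\widehat{\mu}_S$ need not exist, I take AMSE to be \emph{defined} as this second moment, so the entire content of the corollary is the identification of the mean and variance of the limiting random variable.

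For the first display, I would argue as follows. By Theorem \ref{thm:consist}, $\widehat{\theta}_S \rightarrow_p \theta_0$, and by Theorem \ref{thm:normality}, $\sqrt{n}(\widehat{\theta}_S - \theta_0) = O_p(1)$. Since $\mu$ is differentiable at $\theta_0$ with gradient $\nabla_\theta\mu(\theta_0)$, a first-order expansion gives $\sqrt{n}(\widehat{\mu}_S - \mu_0) = \nabla_\theta\mu(\theta_0)'\sqrt{n}(\widehat{\theta}_S - \theta_0) + o_p(1)$. Combining this with the limit of $\sqrt{n}(\widehat{\theta}_S - \theta_0)$ from Theorem \ref{thm:normality} and applying Slutsky's theorem yields
$$\sqrt{n}\left(\widehat{\mu}_S - \mu_0\right) \rightarrow_d -\nabla_\theta\mu(\theta_0)'K_S \Xi_S \left(M + \left[\begin{array}{c}0\\ \tau\end{array}\right]\right),$$
which is exactly the first display of the statement.

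For the AMSE formula, I would note that the right-hand side above is an affine function of $M \sim N_{p+q}(0,\Omega)$ and hence a scalar Gaussian random variable, with mean $-\nabla_\theta\mu(\theta_0)'K_S\Xi_S \left[\begin{array}{c}0\\ \tau\end{array}\right]$ and variance $\nabla_\theta\mu(\theta_0)'K_S\Xi_S\,\Omega\,\Xi_S'K_S'\nabla_\theta\mu(\theta_0)$. Its second moment is the sum of the squared mean and the variance. The squared mean equals $\nabla_\theta\mu(\theta_0)'K_S\Xi_S \left[\begin{array}{cc}0&0\\0&\tau\tau'\end{array}\right]\Xi_S'K_S'\nabla_\theta\mu(\theta_0)$, since the outer product of the stacked $(p+q)$-vector $(0',\tau')'$ with itself is the block matrix having $\tau\tau'$ in the lower-right block and zeros elsewhere. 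Adding the variance term and factoring out $\nabla_\theta\mu(\theta_0)'K_S\Xi_S$ on the left and $\Xi_S'K_S'\nabla_\theta\mu(\theta_0)$ on the right delivers precisely the stated expression for $\mbox{AMSE}(\widehat{\mu}_S)$.

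I do not foresee a genuine obstacle here: conditional on Theorems \ref{thm:consist} and \ref{thm:normality}, the argument is routine. The only point worth care is that local mis-specification enters the limit solely through the deterministic shift $\left[\begin{array}{c}0\\ \tau\end{array}\right]$; the first-order expansion of $\mu$ around $\theta_0$ and the underlying triangular-array weak convergence are entirely standard and have already been absorbed into the preceding results.
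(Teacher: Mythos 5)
Your proposal is correct and follows the same route the paper indicates: the corollary is obtained by applying the Delta Method to Theorem \ref{thm:normality} and then reading off the AMSE as the second moment (squared asymptotic bias plus asymptotic variance) of the resulting Gaussian limit, with the outer product of $(0',\tau')'$ producing the block matrix containing $\tau\tau'$. Nothing is missing.
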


For the valid estimator $\widehat{\theta}_v$ we have $K_v = \left[G'W_{v}G\right]^{-1}G' W_{v}$ and $\Xi_v =\left[\begin{array}{cc} \mathbf{I}_p& \mathbf{0}_{p\times q} \end{array} \right]$. 
Thus, the valid estimator $\widehat{\mu}_v$ of $\mu$ has zero asymptotic bias. 
In contrast, any candidate estimator $\widehat{\mu}_S$ that includes moment conditions from $h$ inherits an asymptotic bias from the corresponding elements of $\tau$, the extent and direction of which depends both on $K_S$ and $\nabla_\theta\mu(\theta_0)$. 
The setting considered here, however, is one in which using moment conditions from $h$ in estimation will reduce the asymptotic variance.
In the nested case, where moment conditions from $h$ are \emph{added} to those of $g$, this follows automatically.
The usual proof that adding moment conditions cannot increase asymptotic variance under efficient GMM \citep[see for example][ch.\ 6]{Hallbook} continues to hold under local mis-specification, because all moment conditions are correctly specified in the limit.
In non-nested examples, for example when $h$ contains OLS moment conditions and $g$ contains IV moment conditions, however, this result does not apply because one would use $h$ \emph{instead of} $g$.
In such examples, one must establish an analogous ordering of asymptotic variances by direct calculation, as I do below for the OLS versus IV example.

Using this framework for moment selection requires estimators of the unknown quantities: $\theta_0$, $K_S$, $\Omega$, and $\tau$. 
Under local mis-specification, the estimator of $\theta$ under \emph{any} moment set is consistent. 
A natural estimator is $\widehat{\theta}_v$, although there are other possibilities. 
Recall that $K_S = [F_S'W_SF_S]^{-1} F_S'W_S \Xi_S$.
Because it is simply the selection matrix defining moment set $S$, $\Xi_S$ is known.  
The remaining quantities $F_S$ and $W_S$ that make up $K_S$ are consistently estimated by their sample analogues under Assumption \ref{assump:highlevel}.
Similarly, consistent estimators of $\Omega$ are readily available under local mis-specification, although the precise form depends on the situation.\footnote{See Sections \ref{sec:OLSvsIVExample} and \ref{sec:chooseIVexample} for discussion of this point for the two running examples.}
The only remaining unknown is $\tau$. Local mis-specification is essential for making meaningful comparisons of AMSE because it prevents the bias term from dominating the comparison. 
Unfortunately, it also prevents consistent estimation of the asymptotic bias parameter.
Under Assumption \ref{assump:Identification}, however, it remains possible to construct an \emph{asymptotically unbiased} estimator $\widehat{\tau}$ of $\tau$ by substituting $\widehat{\theta}_v$, the estimator of $\theta_0$ that uses only correctly specified moment conditions, into $h_n$, the sample analogue of the potentially mis-specified moment conditions. 
In other words,  $\widehat{\tau} = \sqrt{n} h_n(\widehat{\theta}_v)$. 

\begin{thm}[Asymptotic Distribution of $\widehat{\tau}$] 
\label{thm:tau}
Let $\widehat{\tau} = \sqrt{n} h_n(\widehat{\theta}_v)$ where $\widehat{\theta}_v$ is the valid estimator, based only on the moment conditions contained in $g$. 
Then under Assumptions \ref{assump:drift}, \ref{assump:highlevel} and \ref{assump:Identification}
$$\widehat{\tau} \rightarrow_d \Psi\left( M + \left[\begin{array}
	{c} 0 \\ \tau
\end{array} \right]\right), \quad \Psi = \left[\begin{array}{cc} -HK_v & \mathbf{I}_q \end{array}\right]$$ 
where $K_v$ is defined in Corollary \ref{cor:valid} and $\mathbf{I}_q$ denotes the $(q\times q)$ identity matrix.
Thus, $\widehat{\tau}\rightarrow_d (\Psi M + \tau) \sim N_q(\tau, \Psi \Omega \Psi')$.
\end{thm}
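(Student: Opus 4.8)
The idea is a standard mean-value expansion of the sample moment function, combined with the asymptotically linear representation of the valid estimator that is already available from the proof of Theorem \ref{thm:normality}. First I would note that, by Assumption \ref{assump:Identification} and Theorem \ref{thm:consist} applied to the moment set $S_v$, we have $\widehat{\theta}_v \rightarrow_p \theta_0$, so with probability approaching one $\widehat{\theta}_v$ lies in the open neighborhood $\mathcal{B}$ of $\theta_0$ on which $f$ (hence $h$) is $Z$-almost surely differentiable, by Assumption \ref{assump:highlevel}(f). On that event I would apply a component-wise mean value expansion of $h_n$ about $\theta_0$, writing
$$\sqrt{n}\, h_n(\widehat{\theta}_v) = \sqrt{n}\, h_n(\theta_0) + \widehat{H}\,\sqrt{n}\left(\widehat{\theta}_v - \theta_0\right),$$
where the $j$th row of $\widehat{H}$ equals $\nabla_\theta h_{n,j}(\bar{\theta}^{(j)})$ for some $\bar{\theta}^{(j)}$ on the segment joining $\widehat{\theta}_v$ and $\theta_0$.

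Next I would argue $\widehat{H} \rightarrow_p H$. Since each $\bar{\theta}^{(j)}$ is sandwiched between $\widehat{\theta}_v$ and $\theta_0$, consistency of $\widehat{\theta}_v$ gives $\bar{\theta}^{(j)} \rightarrow_p \theta_0$; combining this with the uniform convergence $\sup_{\theta}\|\nabla_\theta f_n(\theta) - F(\theta)\| \rightarrow_p 0$ of Assumption \ref{assump:highlevel}(g) and the continuity of $F(\theta)$ at $\theta_0$ yields $\nabla_\theta h_{n,j}(\bar{\theta}^{(j)}) \rightarrow_p H_j$ for each $j$, i.e.\ $\widehat{H}\rightarrow_p H$. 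This is the step where one must be a little careful, but it is routine given the high-level assumptions.

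For the remaining two pieces I would invoke what is already established. Assumption \ref{assump:highlevel}(h) gives $\sqrt{n} f_n(\theta_0) \rightarrow_d M + (0',\tau')'$, so in particular $\sqrt{n}\, h_n(\theta_0) \rightarrow_d M_h + \tau$ and $\sqrt{n}\, g_n(\theta_0)\rightarrow_d M_g$. From the proof of Theorem \ref{thm:normality} specialized to $S_v$ (equivalently, Corollary \ref{cor:valid}), the valid estimator admits the asymptotically linear representation $\sqrt{n}(\widehat{\theta}_v - \theta_0) = -K_v \sqrt{n}\, g_n(\theta_0) + o_p(1)$ with $K_v = [G'W_vG]^{-1}G'W_v$. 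Hence the pair $\bigl(\sqrt{n}(\widehat{\theta}_v - \theta_0),\ \sqrt{n}\, h_n(\theta_0)\bigr)$ is, up to $o_p(1)$, a fixed linear transformation of $\sqrt{n} f_n(\theta_0)$, so it converges jointly to $(-K_v M_g,\ M_h + \tau)$ by the continuous mapping theorem.

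Finally I would assemble the pieces: substituting into the expansion and applying Slutsky's theorem with $\widehat{H}\rightarrow_p H$ gives
$$\widehat{\tau} = \sqrt{n}\, h_n(\widehat{\theta}_v) \rightarrow_d (M_h + \tau) - H K_v M_g = \begin{bmatrix} -HK_v & \mathbf{I}_q\end{bmatrix}\left( M + \begin{bmatrix} 0 \\ \tau \end{bmatrix}\right) = \Psi\left(M + \begin{bmatrix} 0 \\ \tau\end{bmatrix}\right),$$
which is the claimed representation. Since $M \sim N_{p+q}(0,\Omega)$, linearity of the normal distribution gives $\Psi M \sim N_q(0, \Psi\Omega\Psi')$ and therefore $\widehat{\tau} \rightarrow_d N_q(\tau, \Psi\Omega\Psi')$. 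The only genuine obstacle is bookkeeping: ensuring the mean-value expansion is legitimate (handled by consistency placing $\widehat{\theta}_v$ in $\mathcal{B}$), that $\widehat{H}\rightarrow_p H$ despite the row-dependent intermediate points, and that the convergence of $\sqrt{n}(\widehat{\theta}_v - \theta_0)$ and $\sqrt{n} h_n(\theta_0)$ is \emph{joint} — all of which follow from reducing everything to a continuous transformation of the single vector $\sqrt{n} f_n(\theta_0)$.
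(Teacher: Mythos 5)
Your proposal is correct and follows essentially the same route as the paper's own proof: a mean-value expansion of $h_n$ about $\theta_0$, replacement of the Jacobian by $H$, substitution of the asymptotically linear form $\sqrt{n}(\widehat{\theta}_v - \theta_0) = -K_v\sqrt{n}\,g_n(\theta_0) + o_p(1)$, and then Assumption \ref{assump:highlevel}(h) applied to $\Psi\sqrt{n}f_n(\theta_0)$. You simply spell out the bookkeeping (row-wise intermediate points, $\widehat{H}\rightarrow_p H$, joint convergence) that the paper leaves implicit.
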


Returning to Corollary $\ref{cor:target}$, however, we see that it is $\tau \tau'$ rather than $\tau$ that enters the expression for AMSE. 
Although $\widehat{\tau}$ is an asymptotically unbiased estimator of $\tau$, the limiting expectation of $\widehat{\tau} \widehat{\tau}'$ is not $\tau\tau'$ because $\widehat{\tau}$ has an asymptotic variance.  
Subtracting a consistent estimate of the asymptotic variance removes this asymptotic bias.

\begin{cor}[Asymptotically Unbiased Estimator of $\tau \tau'$]
\label{cor:tautau}
If $\widehat{\Omega}$ and $\widehat{\Psi}$ are consistent for $\Omega$ and $\Psi$, then $ \widehat{\tau}\widehat{\tau}' - \widehat{\Psi}\widehat{\Omega}\widehat{\Psi}$ is an asymptotically unbiased estimator of $\tau\tau'$.
\end{cor}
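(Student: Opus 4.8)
The plan is to combine Theorem~\ref{thm:tau} with the continuous mapping theorem and a one-line second-moment calculation. By Theorem~\ref{thm:tau}, $\widehat{\tau} \rightarrow_d \tau + \Psi M$ with $M \sim N_{p+q}(0,\Omega)$. Since matrix multiplication and transposition are continuous, the continuous mapping theorem gives
$$\widehat{\tau}\,\widehat{\tau}' \rightarrow_d (\tau + \Psi M)(\tau + \Psi M)'.$$
I would then expand the outer product as $\tau\tau' + \tau M'\Psi' + \Psi M \tau' + \Psi M M' \Psi'$ and take expectations: the two cross terms vanish because $E[M] = 0$, while $E[\Psi M M' \Psi'] = \Psi \Omega \Psi'$ because $E[MM'] = \Omega$. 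Hence the limiting random matrix has mean $\tau\tau' + \Psi\Omega\Psi'$; that is, $\widehat{\tau}\widehat{\tau}'$ overstates $\tau\tau'$ in the limit by exactly the asymptotic variance of $\widehat{\tau}$ derived in Theorem~\ref{thm:tau}.

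Next I would handle the correction term. Because $\widehat{\Omega} \rightarrow_p \Omega$ and $\widehat{\Psi} \rightarrow_p \Psi$ by hypothesis, the continuous mapping theorem gives $\widehat{\Psi}\widehat{\Omega}\widehat{\Psi}' \rightarrow_p \Psi\Omega\Psi'$, a constant matrix. Combining this with the previous display via Slutsky's theorem,
$$\widehat{\tau}\widehat{\tau}' - \widehat{\Psi}\widehat{\Omega}\widehat{\Psi}' \rightarrow_d (\tau + \Psi M)(\tau + \Psi M)' - \Psi\Omega\Psi',$$
and the expectation of the right-hand side is $\tau\tau' + \Psi\Omega\Psi' - \Psi\Omega\Psi' = \tau\tau'$, which is precisely what it means for $\widehat{\tau}\widehat{\tau}' - \widehat{\Psi}\widehat{\Omega}\widehat{\Psi}'$ to be an asymptotically unbiased estimator of $\tau\tau'$.

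There is no substantive obstacle here — the argument is a routine computation — but two small points deserve a remark. First, I would state explicitly the convention that ``asymptotically unbiased'' refers to the mean of the limiting distribution, as in \citet{ClaeskensHjort2003}; if one instead insists on convergence of the actual expectations $E[\widehat{\tau}\widehat{\tau}']$, one additionally needs uniform integrability of the entries of $\widehat{\tau}\widehat{\tau}'$, which holds because $\widehat{\tau}$ is asymptotically a shifted Gaussian vector and is bounded in $L^2$ (indeed in every $L^k$) under Assumptions~\ref{assump:drift}, \ref{assump:highlevel} and \ref{assump:Identification}. Second, since only the product $\Psi\Omega\Psi'$ enters, consistency of $\widehat{\Psi}$ and $\widehat{\Omega}$ separately is more than enough: any consistent estimator of $\Psi\Omega\Psi'$ suffices, and in the two running examples this matrix admits an especially simple closed form.
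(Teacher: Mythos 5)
Your argument is correct and is essentially the paper's own proof: both apply Theorem \ref{thm:tau} together with the continuous mapping theorem to obtain $\widehat{\tau}\widehat{\tau}' \rightarrow_d (\Psi M + \tau)(\Psi M + \tau)'$, note that $E[M]=0$ makes the limiting mean $\tau\tau' + \Psi\Omega\Psi'$, and subtract the consistent estimate $\widehat{\Psi}\widehat{\Omega}\widehat{\Psi}'$ via Slutsky. Your added remarks on the ``mean of the limit distribution'' convention and on only needing a consistent estimate of the product $\Psi\Omega\Psi'$ are sensible clarifications but do not change the argument.
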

It follows that
\begin{equation}
\label{eq:fmsc}
	\mbox{FMSC}_n(S) = \nabla_\theta\mu(\widehat{\theta})'\widehat{K}_S\Xi_S \left\{\left[\begin{array}{cc}0&0\\0&\widehat{\tau}\widehat{\tau}' - \widehat{\Psi}\widehat{\Omega}\widehat{\Psi}'\end{array}\right] + \widehat{\Omega}\right\}\Xi_S'\widehat{K}_S' \nabla_\theta\mu(\widehat{\theta})
\end{equation}
provides an asymptotically unbiased estimator of AMSE.
Given a set $\mathscr{S}$ of candidate specifications, the FMSC selects the candidate $S^*$ that \emph{minimizes} the expression given in Equation \ref{eq:fmsc}, that is $S^*_{FMSC} =  \arg \min_{S\in \mathscr{S}} \;\mbox{FMSC}_n(S)$.

In summary, the FMSC aims to choose the moment conditions that provide the lowest risk estimator of a target parameter $\mu$ where risk is defined as MSE.\footnote{One could choose a different risk function and proceed similarly, although I do not consider this idea further below. See, e.g., \cite{Claeskens2006} and \cite{ClaeskensHjort2008}.}
Because finite-sample MSE is unavailable, AMSE in a local-to-zero asymptotic framework serves in its stead.
Since no consistent estimator of AMSE exists in this setting, FMSC uses an asymptotically unbiased estimator.
This is the same idea that underlies the classical AIC and TIC model selection criteria as well as more recent procedures such as those described in \cite{ClaeskensHjort2003} and \cite{Schorfheide2005}.

\subsection{OLS versus TSLS Example}
\label{sec:OLSvsIVExample}
The simplest interesting application of the FMSC is choosing between ordinary least squares (OLS) and two-stage least squares (TSLS) estimators of the effect $\beta$ of a single endogenous regressor $x$ on an outcome of interest $y$.
The intuition is straightforward: because TSLS is a high-variance estimator, OLS will have a lower mean-squared error provided that $x$ isn't \emph{too} endogenous.\footnote{Because the moments of the TSLS estimator only exist up to the order of overidentificiation \citep{Phillips1980, Kinal} mean-squared error should be understood to refer to ``trimmed'' mean-squared error when the number of instruments is two or fewer. For details, see Online Appendix \ref{append:trim}.}
To keep the presentation transparent, I work within an iid, homoskedastic setting for this example and assume, without loss of generality, that there are no exogenous regressors.\footnote{The homoskedasticity assumption concerns the \emph{limit} random variables: under local mis-specification there will be heteroskedasticity for fixed $n$. See Assumption \ref{assump:OLSvsIV} in Online Appendix \ref{sec:sufficient_conditions} for details.}
Equivalently we may suppose that any exogenous regressors, including a constant, have been ``projected out.''
Low-level sufficient conditions for all of the results in this section appear in Assumption \ref{assump:OLSvsIV} of Online Appendix \ref{sec:sufficient_conditions}.
The data generating process is
    \begin{eqnarray}
			y_{ni} &=& \beta x_{ni}  + \epsilon_{ni}\\
	x_{ni} &=& \mathbf{z}_{ni}' \boldsymbol{\pi} + v_{ni}
	\end{eqnarray}
where $\beta$ and $\boldsymbol{\pi}$ are unknown constants, $\mathbf{z}_{ni}$ is a vector of exogenous and relevant instruments, $x_{ni}$ is the endogenous regressor, $y_{ni}$ is the outcome of interest, and $\epsilon_{ni}, v_{ni}$ are unobservable error terms.
All random variables in this system are mean zero, or equivalently all constant terms have been projected out. 
Stacking observations in the usual way, the estimators under consideration are $\widehat{\beta}_{OLS} = \left(\mathbf{x}'\mathbf{x}\right)^{-1}\mathbf{x}'\mathbf{y}$ and
$\widetilde{\beta}_{TSLS} = \left(\mathbf{x}'P_Z\mathbf{x}\right)^{-1}\mathbf{x}'P_Z\mathbf{y}$ where we define $P_Z = Z(Z'Z)^{-1}Z'$. 

\begin{thm}[OLS and TSLS Limit Distributions]
	\label{thm:OLSvsIV} 
  Let $(\mathbf{z}_{ni}, v_{ni}, \epsilon_{ni})$ be a triangular array of random variables such that $E[\mathbf{z}_{ni} \epsilon_{ni}]=\mathbf{0}$, $E[\mathbf{z}_{ni} v_{ni}]=\mathbf{0}$, and $E[\epsilon_{ni}v_{ni}] = \tau/\sqrt{n}$ for all $n$. Then, under standard regularity conditions, e.g. Assumption \ref{assump:OLSvsIV} in Online Appendix \ref{sec:sufficient_conditions}, 
	$$
\left[
\begin{array}{c}
  \sqrt{n}(\widehat{\beta}_{OLS} - \beta) \\
  \sqrt{n}(\widetilde{\beta}_{TSLS} - \beta)
\end{array}
\right] \overset{d}{\rightarrow}
N\left(
\left[
\begin{array}{c}
\tau/\sigma_x^2 \\ 
0
\end{array}
\right],\;
\sigma_\epsilon^2 \left[ \begin{array}{cc}
  1/\sigma_x^2 & 1/\sigma_x^2\\
  1/\sigma_x^2 & 1/\gamma^2 
  \end{array}\right]
  \right)
$$
where $\sigma_x^2 = \gamma^2 + \sigma_v^2$, $\gamma^2 = \boldsymbol{\pi}'Q \boldsymbol{\pi}$, $E[\mathbf{z}_{ni} \mathbf{z}_{ni}'] \rightarrow Q$, $E[v_{ni}^2]\rightarrow \sigma_v^2$, and $E[\epsilon_{ni}^2] \rightarrow \sigma_\epsilon^2$ as $n\rightarrow \infty$.
\end{thm}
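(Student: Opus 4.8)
The plan is to reduce each estimator to a ratio of sample averages, apply a single multivariate central limit theorem to a stacked ``score'' vector, and reassemble the joint limit via Slutsky's theorem; one could alternatively specialize Theorem~\ref{thm:normality}, but the direct route is cleaner here because it delivers the \emph{joint} distribution of the two estimators in one pass. Let $k = \dim(\mathbf{z}_{ni})$, stack observations in the usual way, and write $\boldsymbol{\epsilon}$ for the vector of the $\epsilon_{ni}$. Substituting the two structural equations gives the exact identities $\sqrt{n}(\widehat{\beta}_{OLS}-\beta) = (n^{-1}\mathbf{x}'\mathbf{x})^{-1}(n^{-1/2}\mathbf{x}'\boldsymbol{\epsilon})$ and $\sqrt{n}(\widetilde{\beta}_{TSLS}-\beta) = (n^{-1}\mathbf{x}'P_Z\mathbf{x})^{-1}(n^{-1/2}\mathbf{x}'P_Z\boldsymbol{\epsilon})$. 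Using $x_{ni} = \mathbf{z}_{ni}'\boldsymbol{\pi} + v_{ni}$, both numerators are linear functions of the single random vector
\[
S_n = \frac{1}{\sqrt n}\sum_{i=1}^n \begin{bmatrix}\mathbf{z}_{ni}\,\epsilon_{ni}\\ v_{ni}\,\epsilon_{ni}\end{bmatrix},
\]
namely $n^{-1/2}\mathbf{x}'\boldsymbol{\epsilon} = [\boldsymbol{\pi}',\,1]\,S_n$ and $n^{-1/2}\mathbf{x}'P_Z\boldsymbol{\epsilon} = (n^{-1}\mathbf{x}'Z)(n^{-1}Z'Z)^{-1}[\mathbf{I}_k,\,\mathbf{0}]\,S_n$.

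Next I would dispose of the denominators and the coefficient blocks with a triangular-array weak law of large numbers, valid under the moment and uniform-integrability conditions collected in Assumption~\ref{assump:OLSvsIV}. Because $E[\mathbf{z}_{ni}v_{ni}] = \mathbf{0}$, the cross terms drop out and one obtains $n^{-1}\mathbf{x}'\mathbf{x}\rightarrow_p \boldsymbol{\pi}'Q\boldsymbol{\pi} + \sigma_v^2 = \sigma_x^2$, $n^{-1}Z'Z\rightarrow_p Q$, and $n^{-1}Z'\mathbf{x}\rightarrow_p Q\boldsymbol{\pi}$; hence $n^{-1}\mathbf{x}'P_Z\mathbf{x}\rightarrow_p\boldsymbol{\pi}'Q\boldsymbol{\pi} = \gamma^2 > 0$ (positive by the relevance assumption, so its inverse is well defined in the limit) and $(n^{-1}\mathbf{x}'Z)(n^{-1}Z'Z)^{-1}\rightarrow_p\boldsymbol{\pi}'$.

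The crux is the limit of $S_n$. Its summands form a triangular array whose row means are zero except in the last block: $E[\mathbf{z}_{ni}\epsilon_{ni}] = \mathbf{0}$ while $E[v_{ni}\epsilon_{ni}] = \tau/\sqrt n$, so $E[S_n] = (\mathbf{0}',\tau)' + o(1)$ --- this is precisely where local mis-specification enters, injecting the bias $\tau$ into the OLS numerator $[\boldsymbol{\pi}',1]S_n$ while leaving the TSLS numerator, which loads only on the $\mathbf{z}\epsilon$ block, asymptotically centered. Applying the Lindeberg--Feller CLT to $S_n - E[S_n]$ --- the Lindeberg condition following from uniform integrability of the squared summands in Assumption~\ref{assump:OLSvsIV} --- gives $S_n\rightarrow_d \mathcal{S}\sim N\big((\mathbf{0}',\tau)',\,V\big)$, and here I would invoke homoskedasticity of the \emph{limit} random variables ($E[\epsilon v] = 0$ in the limit and $E[\epsilon^2\mid\mathbf{z},v] = \sigma_\epsilon^2$) to identify $V = \sigma_\epsilon^2\,\mathrm{diag}(Q,\sigma_v^2)$, the off-diagonal block vanishing because $E[\epsilon^2\mathbf{z}v]\to\sigma_\epsilon^2 E[\mathbf{z}v] = \mathbf{0}$.

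Finally, stacking the two numerator representations and combining them with the probability limits of the denominators via Slutsky's theorem,
\[
\begin{bmatrix}\sqrt n(\widehat{\beta}_{OLS}-\beta)\\ \sqrt n(\widetilde{\beta}_{TSLS}-\beta)\end{bmatrix}\;\rightarrow_d\; A\,\mathcal{S},\qquad A = \begin{bmatrix}\sigma_x^{-2}\,[\boldsymbol{\pi}',\,1]\\ \gamma^{-2}\,[\boldsymbol{\pi}',\,0]\end{bmatrix},
\]
which is Gaussian with mean $A(\mathbf{0}',\tau)' = (\tau/\sigma_x^2,\,0)'$ and variance $AVA'$. A one-line computation with $V = \sigma_\epsilon^2\,\mathrm{diag}(Q,\sigma_v^2)$, $\gamma^2 = \boldsymbol{\pi}'Q\boldsymbol{\pi}$, and $\sigma_x^2 = \gamma^2 + \sigma_v^2$ gives the $(1,1)$ entry $\sigma_\epsilon^2/\sigma_x^2$, the $(2,2)$ entry $\sigma_\epsilon^2/\gamma^2$, and the off-diagonal $\sigma_\epsilon^2\gamma^2/(\sigma_x^2\gamma^2) = \sigma_\epsilon^2/\sigma_x^2$, which is the stated covariance matrix. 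The one genuinely delicate step is the CLT: one must handle the drifting mean of the triangular array and, more importantly, confirm that the finite-$n$ heteroskedasticity created by local mis-specification does not persist in $V$ --- i.e.\ that the limiting variance is the ``clean'' homoskedastic one --- which is exactly what the limit-variable homoskedasticity embedded in Assumption~\ref{assump:OLSvsIV} guarantees.
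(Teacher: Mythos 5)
Your proof is correct and follows essentially the same route as the paper's own (sketched) argument: substitute the DGP into each estimator, send the denominators to their probability limits by a triangular-array LLN, apply the Lindeberg--Feller CLT to the drifting-mean score, and assemble via Slutsky. The only detail you add beyond the paper's sketch is the explicit stacking of both numerators as linear functions of the common score vector $S_n$ with $E[S_n]=(\mathbf{0}',\tau)'$ and $V=\sigma_\epsilon^2\,\mathrm{diag}(Q,\sigma_v^2)$, which is precisely what is needed to obtain the \emph{joint} limit and is left implicit in the paper.
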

We see immediately that, as expected, the variance of the OLS estimator is always strictly lower than that of the TSLS estimator since $\sigma^2_\epsilon/\sigma_x^2 = \sigma^2_\epsilon/(\gamma^2 + \sigma_v^2)$. 
Unless $\tau = 0$, however, OLS shows an asymptotic bias. 
In contrast, the TSLS estimator is asymptotically unbiased regardless of the value of $\tau$.  
Thus,
$$\mbox{AMSE(OLS)} = \frac{\tau^2}{\sigma_x^4} + \frac{\sigma_\epsilon^2}{\sigma_x^2},\quad \quad
  \mbox{AMSE(TSLS)} = \frac{\sigma_\epsilon^2}{\gamma^2}.$$
 and rerranging, we see that the AMSE of the OLS estimator is strictly less than that of the TSLS estimator whenever $\tau^2  < \sigma_x^2 \sigma_\epsilon^2\sigma_v^2/\gamma^2$. 
To estimate the unknowns required to turn this inequality into a moment selection procedure, I set 
  $$\widehat{\sigma}_x^2 = n^{-1}\mathbf{x}'\mathbf{x}, \quad \widehat{\gamma}^2 = n^{-1}\mathbf{x}'Z(Z'Z)^{-1}Z'\mathbf{x}, \quad \widehat{\sigma}_v^2 =  \widehat{\sigma}_x^2 - \widehat{\gamma}^2$$
and define
$$\widehat{\sigma}_\epsilon^2 = n^{-1}\left(\textbf{y} - \textbf{x}\widetilde{\beta}_{TSLS} \right)'\left(\textbf{y} - \textbf{x}\widetilde{\beta}_{TSLS} \right)$$
Under local mis-specification each of these estimators is consistent for its population counterpart.\footnote{While using the OLS residuals to estimate $\sigma_\epsilon^2$ \emph{also} provides a consistent estimate under local mis-specification, the estimator based on the TSLS residuals should be more robust.} 
All that remains is to estimate $\tau^2$. Specializing Theorem \ref{thm:tau} and Corollary \ref{cor:tautau} to the present example gives the following result.
\begin{thm}
	\label{thm:tauOLSvsIV}
	Let $\widehat{\tau} =  n^{-1/2} \mathbf{x}'(\mathbf{y} - \mathbf{x}\widetilde{\beta}_{TSLS})$. Then, under the conditions of Theorem \ref{thm:OLSvsIV},
	$$\widehat{\tau}\rightarrow_d N(\tau,V), \quad V = \sigma_\epsilon^2 \sigma_x^2(\sigma_v^2/\gamma^2).$$ 
\end{thm}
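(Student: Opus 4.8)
The statement is a specialization of Theorem~\ref{thm:tau}, so one route is simply to identify the ingredients of $\Psi$ and $\Omega$ in the OLS-versus-TSLS notation --- here $g(Z,\theta) = \mathbf{z}(y - x\beta)$ and $h(Z,\theta) = x(y - x\beta)$, the valid estimator is $\widetilde{\beta}_{TSLS}$, $q=1$, $H = -\sigma_x^2$, $G = -Q\boldsymbol{\pi}$, and, for efficient GMM, $W_v = (\sigma_\epsilon^2 Q)^{-1}$ --- and then evaluate $\Psi\Omega\Psi'$. I would instead give a short self-contained argument, since the algebra is more transparent. The starting point is the exact identity
\begin{equation*}
\widehat{\tau} = n^{-1/2}\mathbf{x}'\boldsymbol{\epsilon} - \big(n^{-1}\mathbf{x}'\mathbf{x}\big)\,\sqrt{n}\big(\widetilde{\beta}_{TSLS} - \beta\big),
\end{equation*}
obtained by substituting $\mathbf{y} = \mathbf{x}\beta + \boldsymbol{\epsilon}$ into the definition of $\widehat{\tau}$.

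Next I would establish the joint limiting behaviour of the two terms on the right. Writing $n^{-1/2}\mathbf{x}'\boldsymbol{\epsilon} = n^{-1/2}\sum_{i=1}^n(x_{ni}\epsilon_{ni} - \tau/\sqrt{n}) + \tau$ and using the analogous decomposition of $n^{-1/2}Z'\boldsymbol{\epsilon}$ (whose summands are already mean zero), a Lindeberg--Feller CLT for triangular arrays --- with the Lindeberg condition supplied by the moment and uniform-integrability conditions of Assumption~\ref{assump:OLSvsIV} --- gives joint asymptotic normality of $\big(n^{-1/2}\sum_i(x_{ni}\epsilon_{ni}-\tau/\sqrt n),\; n^{-1/2}Z'\boldsymbol{\epsilon}\big)$ with covariance matrix pinned down by the limiting second moments. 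Combining this with the laws of large numbers $n^{-1}\mathbf{x}'\mathbf{x}\rightarrow_p\sigma_x^2$, $n^{-1}\mathbf{x}'Z\rightarrow_p\boldsymbol{\pi}'Q$, and $n^{-1}Z'Z\rightarrow_p Q$, the continuous mapping theorem and Slutsky's lemma deliver joint convergence of $\big(n^{-1/2}\mathbf{x}'\boldsymbol{\epsilon},\,\sqrt{n}(\widetilde{\beta}_{TSLS}-\beta)\big)$ to a bivariate normal vector; the marginals are $N(\tau,\sigma_x^2\sigma_\epsilon^2)$ and $N(0,\sigma_\epsilon^2/\gamma^2)$, the latter already recorded in Theorem~\ref{thm:OLSvsIV} (and the centering of the TSLS term also follows from Corollary~\ref{cor:valid}).

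Finally I would read off the conclusion. Because $\widehat{\tau}$ converges to a fixed linear combination of a jointly normal pair, its limit is normal, with mean $\tau - \sigma_x^2\cdot 0 = \tau$. For the variance, note that $\sqrt{n}(\widetilde{\beta}_{TSLS}-\beta)$ is asymptotically $\gamma^{-2}$ times $n^{-1/2}\sum_i(\boldsymbol{\pi}'\mathbf{z}_{ni})\epsilon_{ni}$, so the asymptotic covariance between the two right-hand-side terms is $\gamma^{-2}\sigma_\epsilon^2\, E[x(\boldsymbol{\pi}'\mathbf{z})] = \gamma^{-2}\sigma_\epsilon^2\,\boldsymbol{\pi}'Q\boldsymbol{\pi} = \sigma_\epsilon^2$. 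Hence
\begin{equation*}
V = \sigma_x^2\sigma_\epsilon^2 + \sigma_x^4\cdot\frac{\sigma_\epsilon^2}{\gamma^2} - 2\sigma_x^2\sigma_\epsilon^2 = \sigma_x^2\sigma_\epsilon^2\left(\frac{\sigma_x^2 - \gamma^2}{\gamma^2}\right) = \sigma_\epsilon^2\sigma_x^2\,\frac{\sigma_v^2}{\gamma^2},
\end{equation*}
using $\sigma_x^2 = \gamma^2 + \sigma_v^2$. The main thing to get right is this cross-covariance term (equivalently, the off-diagonal contribution in $\Psi\Omega\Psi'$ under the first route): the diagonal pieces are immediate from Theorem~\ref{thm:OLSvsIV}, but the cancellation that turns $\sigma_x^2 - \gamma^2$ into $\sigma_v^2$ relies on correctly tracking how the TSLS sampling error covaries with $n^{-1/2}\mathbf{x}'\boldsymbol{\epsilon}$, including the bookkeeping for the $O(n^{-1/2})$ drift in $E[x_{ni}\epsilon_{ni}]$, which affects the mean but not the limiting variance.
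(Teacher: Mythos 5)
Your proof is correct and is essentially the paper's argument: the paper obtains this result by specializing Theorem \ref{thm:tau} (whose proof is a mean-value expansion giving exactly your identity, since here $H = -\sigma_x^2$) and then evaluating $\Psi\Omega\Psi'$, which is precisely your joint-normality and cross-covariance calculation, including the cancellation $\sigma_x^2-\gamma^2=\sigma_v^2$ delivered by the cross term $\sigma_\epsilon^2$.
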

It follows that $\widehat{\tau}^2 -  \widehat{\sigma}_\epsilon^2\widehat{\sigma}_x^2 \left(\widehat{\sigma}_v^2/\widehat{\gamma}^2\right)$ is an asymptotically unbiased estimator of $\tau^2$ and hence, substituting into the AMSE inequality from above and rearranging, the FMSC instructs us to choose OLS whenever $\widehat{T}_{FMSC} = \widehat{\tau}^2/\widehat{V} < 2$
where $\widehat{V} = \widehat{\sigma}_v^2 \widehat{\sigma}_\epsilon^2 \widehat{\sigma}_x^2/\widehat{\gamma}^2$. 
The quantity $\widehat{T}_{FMSC}$ looks very much like a test statistic and indeed it can be viewed as such. 
By Theorem \ref{thm:tauOLSvsIV} and the continuous mapping theorem, $\widehat{T}_{FMSC} \rightarrow_d \chi^2(1)$. 
Thus, the FMSC can be viewed as a test of the null hypothesis $H_0\colon \tau = 0$ against the two-sided alternative with a critical value of $2$. 
This corresponds to a significance level of $\alpha \approx 0.16$. 
But how does this novel ``test'' compare to something more familiar, say the Durbin-Hausman-Wu (DHW) test? 
It turns out that in this particular example, although not in general, the FMSC is \emph{numerically equivalent} to using OLS unless the DHW test rejects at the 16\% level. 
\begin{thm}
    \label{thm:DHW} Under the conditions of Theorem \ref{thm:OLSvsIV}, FMSC selection between the OLS and TSLS estimators is equivalent to a Durbin-Hausman-Wu pre-test with a critical value of $2$.
\end{thm}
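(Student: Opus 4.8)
The plan is to show that $\widehat{T}_{FMSC}$ coincides \emph{exactly}, for every sample size and every realization of the data, with a standard Durbin--Hausman--Wu statistic built from the contrast $\widehat{\beta}_{OLS} - \widetilde{\beta}_{TSLS}$. Once this algebraic identity is in hand, the event $\{\widehat{T}_{FMSC} < 2\}$ on which the FMSC selects OLS is literally the same event as ``the DHW pre-test with critical value $2$ fails to reject,'' which is the claim. No new asymptotics are needed: Theorems \ref{thm:OLSvsIV} and \ref{thm:tauOLSvsIV} already supply the limiting $\chi^2(1)$ calibration that makes the value $2$ a well-defined (approximately $16\%$) significance level.

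First I would rewrite $\widehat{\tau}$ in terms of the estimator contrast. Since $\widehat{\beta}_{OLS} = (\mathbf{x}'\mathbf{x})^{-1}\mathbf{x}'\mathbf{y}$, we have $\mathbf{x}'(\mathbf{y} - \mathbf{x}\widetilde{\beta}_{TSLS}) = \mathbf{x}'\mathbf{x}\,(\widehat{\beta}_{OLS} - \widetilde{\beta}_{TSLS})$, and because $\widehat{\sigma}_x^2 = n^{-1}\mathbf{x}'\mathbf{x}$ this gives $\widehat{\tau} = n^{-1/2}\mathbf{x}'(\mathbf{y} - \mathbf{x}\widetilde{\beta}_{TSLS}) = n^{1/2}\widehat{\sigma}_x^2(\widehat{\beta}_{OLS} - \widetilde{\beta}_{TSLS})$, hence $\widehat{\tau}^2 = n\,\widehat{\sigma}_x^4(\widehat{\beta}_{OLS} - \widetilde{\beta}_{TSLS})^2$. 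Substituting into $\widehat{T}_{FMSC} = \widehat{\tau}^2/\widehat{V}$ with $\widehat{V} = \widehat{\sigma}_v^2\widehat{\sigma}_\epsilon^2\widehat{\sigma}_x^2/\widehat{\gamma}^2$ yields, after cancelling one factor of $\widehat{\sigma}_x^2$,
$$\widehat{T}_{FMSC} = \frac{n\,\widehat{\gamma}^2\widehat{\sigma}_x^2\,(\widehat{\beta}_{OLS} - \widetilde{\beta}_{TSLS})^2}{\widehat{\sigma}_v^2\,\widehat{\sigma}_\epsilon^2}.$$

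Next I would write down the Hausman contrast statistic $\widehat{T}_{DHW} = (\widehat{\beta}_{OLS} - \widetilde{\beta}_{TSLS})^2 \big/ \big[\widehat{\mathrm{Var}}(\widetilde{\beta}_{TSLS}) - \widehat{\mathrm{Var}}(\widehat{\beta}_{OLS})\big]$ using the natural plug-in variances read off from Theorem \ref{thm:OLSvsIV}, namely $\widehat{\mathrm{Var}}(\widetilde{\beta}_{TSLS}) = \widehat{\sigma}_\epsilon^2/(n\widehat{\gamma}^2)$ and $\widehat{\mathrm{Var}}(\widehat{\beta}_{OLS}) = \widehat{\sigma}_\epsilon^2/(n\widehat{\sigma}_x^2)$. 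Using $\widehat{\sigma}_v^2 = \widehat{\sigma}_x^2 - \widehat{\gamma}^2$ to simplify, the denominator becomes $\widehat{\sigma}_\epsilon^2\widehat{\sigma}_v^2/(n\widehat{\gamma}^2\widehat{\sigma}_x^2)$, so $\widehat{T}_{DHW}$ equals the displayed expression for $\widehat{T}_{FMSC}$ term for term. Therefore $\widehat{T}_{FMSC} \equiv \widehat{T}_{DHW}$ as functions of the data, and ``$\widehat{T}_{FMSC} < 2$'' is exactly ``$\widehat{T}_{DHW} < 2$.''

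The one point requiring care — and the only place a reader might object — is bookkeeping about \emph{which} variant of the DHW test is meant: the regression-augmentation form, the Hausman contrast form, and forms using the OLS- versus TSLS-based estimator of $\sigma_\epsilon^2$ all agree to $o_p(1)$ but not numerically. I would therefore state explicitly that the exact equivalence holds for the contrast form with the TSLS-residual variance estimator $\widehat{\sigma}_\epsilon^2$ — precisely the estimator already used to construct the FMSC in this example — and remark that this is the natural ``textbook'' version in the homoskedastic setting, so the equivalence is genuine rather than an artifact of a particular normalization. Everything else is the two-line algebra above.
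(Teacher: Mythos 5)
Your proposal is correct and follows essentially the same route as the paper: both arguments establish the exact algebraic identity $\widehat{\tau}^2 = n\,\widehat{\sigma}_x^4(\widehat{\beta}_{OLS}-\widetilde{\beta}_{TSLS})^2$ and then show that $\widehat{V}$ equals $\widehat{\sigma}_x^4$ times the plug-in variance of the contrast (via $\widehat{\sigma}_v^2 = \widehat{\sigma}_x^2 - \widehat{\gamma}^2$), so that $\widehat{T}_{FMSC}$ and the Hausman-contrast statistic coincide numerically, with the $\chi^2(1)$ calibration of the critical value $2$ already supplied by Theorem \ref{thm:tauOLSvsIV}. Your closing remark pinning down the specific DHW variant (contrast form with the TSLS-residual $\widehat{\sigma}_\epsilon^2$) is a useful clarification of something the paper leaves implicit, but it is not a substantive departure.
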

The equivalence between FMSC selection and a DHW test in this example is helpful for two reasons. 
First, it provides a novel justification for the use of the DHW test to select between OLS and TSLS. So long as it is carried out with $\alpha \approx 16\%$, the DHW test is equivalent to selecting the estimator that minimizes an asymptotically unbiased estimator of AMSE. 
Note that this significance level differs from the more usual values of 5\% or 10\% in that it leads us to select TSLS \emph{more often}: OLS should indeed be given the benefit of the doubt, but not by so wide a margin as traditional practice suggests. 
Second, this equivalence shows that the FMSC can be viewed as an \emph{extension} of the idea behind the familiar DHW test to more general GMM environments.\footnote{Note that the FMSC in this example, characterized in Theorem \ref{thm:DHW}, chooses between OLS and IV to minimize estimator AMSE. If one wishes to carry out inference post-selection one must contend with the size distortions of the familiar ``textbook'' confidence interval procedure, as pointed out by \cite{Guggenberger2010}. I discuss this point extensively below in Section \ref{sec:avg} ans propose possible remedies.}

\subsection{Choosing Instrumental Variables Example}
\label{sec:chooseIVexample}
The OLS versus TSLS example is really a special case of instrument selection: if $x$ is exogenous, it is clearly ``its own best instrument.'' 
Viewed from this perspective, the FMSC amounts to trading off endogeneity against instrument strength. I now consider instrument selection in general for linear GMM estimators in an iid setting. 
Consider the  model:
\begin{eqnarray}
  y_{ni} &=& \mathbf{x}_{ni}' \beta +  \epsilon_{ni}\\
    \mathbf{x}_{ni} &=&  \Pi_1' \mathbf{z}_{ni}^{(1)} + \Pi_2'\mathbf{z}_{ni}^{(2)} + \mathbf{v}_{ni}
\end{eqnarray}
where $y$ is an outcome of interest, $\mathbf{x}$ is an $r$-vector of regressors, some of which are endogenous, $\mathbf{z}^{(1)}$ is a $p$-vector of instruments known to be exogenous, and $\mathbf{z}^{(2)}$ is a $q$-vector  of \emph{potentially endogenous} instruments. 
The $r$-vector $\beta$, $p\times r$ matrix $\Pi_1$, and $q\times r$ matrix $\Pi_2$ contain unknown constants. Stacking observations in the usual way, we can write the system in matrix form as $\mathbf{y} = X\beta +\boldsymbol{\epsilon}$ and $X =  Z \Pi + V$, where $Z = (Z_1, Z_2)$ and $\Pi = (\Pi_1', \Pi_2')'$. 

In this example, the idea is that the instruments contained in $Z_2$ are expected to be strong.
If we were confident that they were exogenous, we would certainly use them in estimation. 
Yet the very fact that we expect them to be strongly correlated with $\mathbf{x}$ gives us reason to fear that they may be endogenous. 
The exact opposite is true of $Z_1$: these are the instruments that we are prepared to assume are exogenous. 
But when is such an assumption plausible? Precisely when the instruments contained in $Z_1$ are \emph{not especially strong}. 
Accordingly, the FMSC attempts to trade off a small increase in bias from using a \emph{slightly} endogenous instrument against a larger decrease in variance from increased instrument strength.
To this end, consider a general linear GMM estimator of the form
$$\widehat{\beta}_S = (X'Z_S \widetilde{W}_S Z_S' X)^{-1}X'Z_S \widetilde{W}_S  Z_S' \mathbf{y}$$
where $S$ indexes the instruments used in estimation, $Z_S'  = \Xi_S Z'$ is the matrix containing only those instruments included in $S$, $|S|$ is the number of instruments used in estimation and $\widetilde{W}_S$ is an $|S|\times|S|$ positive definite weighting matrix.

\begin{thm}[Choosing IVs Limit Distribution]
\label{thm:chooseIV} 
Let $(\mathbf{z}_{ni}, v_{ni}, \epsilon_{ni})$ be a triangular array of random variables such that $E[\mathbf{z}_{ni} \epsilon_{ni}]=\mathbf{0}$, $E[\mathbf{z}_{ni} v_{ni}]=\mathbf{0}$, and $E[\epsilon_{ni}v_{ni}] = \tau/\sqrt{n}$ for all $n$. Suppose further that $\widetilde{W}_S \rightarrow_p W_S >0$. 
Then, under standard regularity conditions, e.g.\ Assumption \ref{assump:chooseIV} in Online Appendix \ref{sec:sufficient_conditions}, 
$$\sqrt{n}\left(\widehat{\beta}_S - \beta \right) \overset{d}{\rightarrow} -K_S \Xi_S \left(\left[\begin{array}
           {c} \mathbf{0} \\ \boldsymbol{\tau}
         \end{array}\right] + M \right)$$
where
         $$-K_S = \left(\Pi' Q_S W_S Q_S'\Pi\right)^{-1} \Pi'Q_SW_S$$
$M \sim N(\mathbf{0}, \Omega)$, $Q_S = Q \Xi_S'$, $E[\mathbf{z}_{ni} \mathbf{z}_{ni}'] \rightarrow Q$ and $E[\epsilon_{ni}^2 \mathbf{z}_{ni} \mathbf{z}_{ni}'] \rightarrow \Omega$ as $n\rightarrow \infty$
\end{thm}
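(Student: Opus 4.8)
This result is the specialization of Theorem \ref{thm:normality} to the linear moment functions $f(Z_{ni},\beta) = \mathbf{z}_{ni}(y_{ni} - \mathbf{x}_{ni}'\beta)$, whose exogenous and suspect blocks are $\mathbf{z}^{(1)}_{ni}\epsilon_{ni}$ and $\mathbf{z}^{(2)}_{ni}\epsilon_{ni}$, so one route would be to verify Assumption \ref{assump:highlevel} and invoke that theorem. Because $\widehat{\beta}_S$ is available in closed form, however, the plan is to argue directly. First I would substitute $\mathbf{y} = X\beta + \boldsymbol{\epsilon}$ into the definition of $\widehat{\beta}_S$, rearrange, and use $Z_S' = \Xi_S Z'$ to obtain
\[
\sqrt{n}\left(\widehat{\beta}_S - \beta\right) = \left(\frac{X'Z_S}{n}\widetilde{W}_S\frac{Z_S'X}{n}\right)^{-1}\frac{X'Z_S}{n}\widetilde{W}_S\cdot\Xi_S\frac{Z'\boldsymbol{\epsilon}}{\sqrt{n}}.
\]
It then suffices to show that the leading matrix product converges in probability to $-K_S$ and that $n^{-1/2}Z'\boldsymbol{\epsilon}\rightarrow_d M + (\mathbf{0}',\boldsymbol{\tau}')'$, and to combine the two by Slutsky's theorem and the continuous mapping theorem.

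For the matrix product I would use $X = Z\Pi + V$ together with $E[\mathbf{z}_{ni}\mathbf{z}_{ni}']\rightarrow Q$, $E[\mathbf{z}_{ni}v_{ni}] = \mathbf{0}$, and a triangular-array weak law to get $n^{-1}Z'X = (n^{-1}Z'Z)\Pi + n^{-1}Z'V \rightarrow_p Q\Pi$, hence $n^{-1}Z_S'X = \Xi_S(n^{-1}Z'X)\rightarrow_p \Xi_S Q\Pi = Q_S'\Pi$ and, transposing, $n^{-1}X'Z_S\rightarrow_p \Pi'Q_S$. With $\widetilde{W}_S\rightarrow_p W_S$ this gives $n^{-1}X'Z_S\widetilde{W}_S n^{-1}Z_S'X\rightarrow_p \Pi'Q_SW_SQ_S'\Pi$, invertible under the maintained relevance condition that $Q_S'\Pi$ has full column rank (part of Assumption \ref{assump:chooseIV}); so the leading product converges to $(\Pi'Q_SW_SQ_S'\Pi)^{-1}\Pi'Q_SW_S$, which is exactly $-K_S$. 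For the vector I would apply a Lindeberg--Feller central limit theorem to the triangular array $\{\mathbf{z}_{ni}\epsilon_{ni}\}$: the $n^{-1/2}$-scaled row sums, centered at $n^{-1/2}\sum_{i} E[\mathbf{z}_{ni}\epsilon_{ni}]$, converge to $M\sim N(\mathbf{0},\Omega)$ with $\Omega = \lim_n E[\epsilon_{ni}^2\mathbf{z}_{ni}\mathbf{z}_{ni}']$, while the drifting mean $n^{-1/2}\sum_{i} E[\mathbf{z}_{ni}\epsilon_{ni}]\rightarrow (\mathbf{0}',\boldsymbol{\tau}')'$ (zero on the $Z_1$ coordinates and $\boldsymbol{\tau}$ on the $Z_2$ coordinates, by the local mis-specification of the suspect instruments); adding the two pieces yields $n^{-1/2}Z'\boldsymbol{\epsilon}\rightarrow_d M + (\mathbf{0}',\boldsymbol{\tau}')'$.

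The step that requires genuine care is this last one, and it is where the local-mis-specification framework bites: I would need to verify that the drifting sequence of data-generating processes still satisfies a Lindeberg-type condition and that the moments computed under the $n$-th law---in particular $E[\epsilon_{ni}^2\mathbf{z}_{ni}\mathbf{z}_{ni}']$ and $E[\mathbf{z}_{ni}v_{ni}]$---converge to their stated limits, with the $O(n^{-1/2})$ drift in the mean of $\mathbf{z}^{(2)}_{ni}\epsilon_{ni}$ contributing to the asymptotic bias but not to $\Omega$. These are exactly the low-level conditions collected in Assumption \ref{assump:chooseIV}, the analogue for this example of Assumption \ref{assump:drift} and of the part of Assumption \ref{assump:highlevel} governing $\sqrt{n}f_n(\theta_0)$. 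Everything else---the weak laws for $n^{-1}Z'Z$ and $n^{-1}Z'V$, invertibility of $\Pi'Q_SW_SQ_S'\Pi$, and the Slutsky and continuous-mapping bookkeeping---is routine.
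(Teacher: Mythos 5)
Your proposal is correct and follows essentially the same route as the paper's own (sketched) proof: substitute the DGP into the closed form of $\widehat{\beta}_S$, show the leading matrix factor converges in probability to $-K_S$ (the paper does this via an $L_2$/weak-law argument, as you do), and apply the Lindeberg--Feller CLT to the triangular array $\{\mathbf{z}_{ni}\epsilon_{ni}\}$ so that $n^{-1/2}Z'\boldsymbol{\epsilon}\rightarrow_d M + (\mathbf{0}',\boldsymbol{\tau}')'$, combining the pieces by Slutsky. Your handling of the drifting mean under Assumption \ref{assump:chooseIV} (zero on the $Z_1$ block, $\boldsymbol{\tau}$ on the $Z_2$ block, with no effect on $\Omega$) is exactly the intended reading of the local mis-specification condition.
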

To implement the FMSC for this example, we simply need to specialize Equation \ref{eq:fmsc}.
To simplify the notation, let
\begin{equation}
\label{eq:xi12}
\Xi_1 = \left[\begin{array}{cc} \mathbf{I}_{p} & 0_{p \times q}  \end{array}\right], \quad
    \Xi_2 = \left[ \begin{array}{cc}
        0_{q \times p}& \mathbf{I}_{q}
            \end{array}\right]	
\end{equation}
where $0_{p\times q}$ denotes a $p\times q$ matrix of zeros and $\mathbf{I}_q$ denotes the $q\times q$ identity matrix.
Using this convention, $Z_1 = Z \Xi_1'$ and $Z_2 = Z \Xi_2'$.
In this example the valid estimator, defined in Assumption \ref{assump:Identification}, is given by
\begin{equation}
\label{eq:betav}
\widehat{\beta}_v = \left(X'Z_1 \widetilde{W}_v Z_1' X\right)^{-1}X'Z_1 \widetilde{W}_v Z_1' \mathbf{y}	
\end{equation}
and we estimate $\nabla_\beta \mu(\beta)$ with $\nabla_\beta \mu(\widehat{\beta}_v)$.  
Similarly, 
$$-\widehat{K}_S = n\left(X'Z \Xi_S' \widetilde{W}_S \Xi_S Z' X\right)^{-1}X' Z \Xi_S' \widetilde{W}_S$$
is the natural consistent estimator of $-K_S$ in this setting.\footnote{The negative sign is squared in the FMSC expression and hence disappears. I write it here only to be consistent with the notation of Theorem \ref{thm:normality}.}
Since $\Xi_S$ is known, the only remaining quantities from Equation \ref{eq:fmsc} are $\widehat{\boldsymbol{\tau}}$, $\widehat{\Psi}$ and $\widehat{\Omega}$. 
The following result specializes Theorem \ref{thm:tau} to the present example.
\begin{thm}
Let $\widehat{\boldsymbol{\tau}} = n^{-1/2} Z_2' ( \mathbf{y} - X\widehat{\beta}_v)$ where $\widehat{\beta}_v$ is as defined in Equation \ref{eq:betav}. Under the conditions of Theorem \ref{thm:chooseIV} we have
$\widehat{\boldsymbol{\tau}} \rightarrow_d \boldsymbol{\tau} + \Psi M$
where $M$ is defined in Theorem \ref{thm:chooseIV},
\begin{eqnarray*}
	\Psi &=&\left[ \begin{array}{cc}-\Xi_2Q \Pi K_v  & I_{q} \end{array}\right] \\
	-K_v &=& \left(\Pi' Q \Xi'_1 W_v \Xi_1 Q'\Pi\right)^{-1} \Pi'Q \Xi_1' W_v
\end{eqnarray*}
$W_v$ is the probability limit of the weighting matrix from Equation \ref{eq:betav}, $I_q$ is the $q\times q$ identity matrix, $\Xi_1$ is defined in Equation \ref{eq:xi12}, and $E[\mathbf{z}_{ni} \mathbf{z}_{ni}'] \rightarrow Q$. 
\end{thm}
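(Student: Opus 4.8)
The plan is to read this result as exactly what the surrounding text says it is: a specialization of Theorem \ref{thm:tau} to the linear IV moment functions of this subsection, so the proof reduces to (i) checking that the example sits inside the general framework of Section \ref{sec:asymp}, and (ii) evaluating the abstract objects $G$, $H$, $K_v$, $\Omega$, $M$ at these particular moments. For (i), take $f(Z_i,\beta) = \mathbf{z}_i(y_i - \mathbf{x}_i'\beta)$ with the partition $g(Z_i,\beta) = \mathbf{z}_i^{(1)}(y_i-\mathbf{x}_i'\beta)$, $h(Z_i,\beta) = \mathbf{z}_i^{(2)}(y_i-\mathbf{x}_i'\beta)$, so that $\widehat{\beta}_v$ of Equation \ref{eq:betav} is the GMM estimator built from $g$ alone. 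Assumption \ref{assump:drift}(b) here reads $E[\mathbf{z}_{ni}^{(2)}\epsilon_{ni}] = \boldsymbol{\tau}/\sqrt{n}$, and the low-level conditions of Assumption \ref{assump:chooseIV} in Appendix \ref{sec:sufficient_conditions} deliver Assumptions \ref{assump:highlevel} and \ref{assump:Identification}; in particular they give $\sqrt{n}f_n(\beta_0) = n^{-1/2}Z'(\mathbf{y} - X\beta_0) \rightarrow_d M + (\mathbf{0}',\boldsymbol{\tau}')'$ with $M\sim N(\mathbf{0},\Omega)$, $\Omega = \lim E[\epsilon_{ni}^2\mathbf{z}_{ni}\mathbf{z}_{ni}']$. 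This is the only place where the heteroskedasticity that the drift induces at each fixed $n$ matters, and it is absorbed into the triangular-array CLT already invoked for Theorem \ref{thm:chooseIV}; everything downstream is Slutsky and the continuous mapping theorem.

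The engine of the argument is the one-line expansion $\widehat{\boldsymbol{\tau}} = n^{-1/2}Z_2'(\mathbf{y} - X\widehat{\beta}_v) = n^{-1/2}Z_2'\boldsymbol{\epsilon} - (n^{-1}Z_2'X)\,\sqrt{n}(\widehat{\beta}_v - \beta_0)$, obtained by substituting $\mathbf{y} = X\beta_0 + \boldsymbol{\epsilon}$. For the three pieces: $n^{-1/2}Z_2'\boldsymbol{\epsilon} = \Xi_2\,n^{-1/2}Z'\boldsymbol{\epsilon} \rightarrow_d \Xi_2 M + \boldsymbol{\tau}$; $n^{-1}Z_2'X \rightarrow_p \Xi_2 Q\Pi$ by the law of large numbers together with $X = Z\Pi + V$ and $E[\mathbf{z}_{ni}\mathbf{v}_{ni}'] = 0$; and, applying Theorem \ref{thm:chooseIV} to the valid moment set, $\sqrt{n}(\widehat{\beta}_v - \beta_0) \rightarrow_d -K_v\Xi_1 M$, where the $\boldsymbol{\tau}$-drift drops out because $\Xi_1$ selects only the $Z_1$-coordinates, in which the drift is zero. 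Since $n^{-1/2}Z_2'\boldsymbol{\epsilon}$ and $\sqrt{n}(\widehat{\beta}_v-\beta_0)$ are both continuous transformations, up to $o_p(1)$ terms, of the single weakly convergent vector $n^{-1/2}Z'\boldsymbol{\epsilon}$, they converge jointly, so Slutsky gives $\widehat{\boldsymbol{\tau}} \rightarrow_d (\Xi_2 M + \boldsymbol{\tau}) - \Xi_2 Q\Pi\,(-K_v\Xi_1 M)$; collecting the $\Xi_1 M$ and $\Xi_2 M$ blocks writes this as $\boldsymbol{\tau} + \Psi M$.

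It remains to check that the $\Psi$ so obtained agrees with the claimed formula, which is step (ii). The moment residual is linear in $\beta$ with gradient $-\mathbf{x}_{ni}$, and $E[\mathbf{z}_{ni}\mathbf{x}_{ni}'] = E[\mathbf{z}_{ni}\mathbf{z}_{ni}']\Pi \rightarrow Q\Pi$, so $G = E[\nabla_\beta g(Z,\beta_0)] = -\Xi_1 Q\Pi$ and $H = E[\nabla_\beta h(Z,\beta_0)] = -\Xi_2 Q\Pi$. Substituting $G$ into $K_v = [G'W_vG]^{-1}G'W_v$ from Corollary \ref{cor:valid} reproduces $-K_v = (\Pi'Q\Xi_1'W_v\Xi_1 Q'\Pi)^{-1}\Pi'Q\Xi_1'W_v$ (using $Q' = Q$), and substituting $H$ and $K_v$ into $\Psi = [-HK_v \;\; I_q]$ from Theorem \ref{thm:tau} gives the stated $\Psi$, with the coefficient on the $Z_2$-block equal to $I_q$ precisely because that block carries the drift, matching $\widehat{\boldsymbol{\tau}}\rightarrow_d \Psi(M + (\mathbf{0}',\boldsymbol{\tau}')') = \boldsymbol{\tau} + \Psi M$.

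There is no deep obstacle: the statement is a corollary of machinery already in place. The only real care needed is (a) the selection-matrix bookkeeping --- in particular recognizing that $\Xi_1$ annihilates the $\boldsymbol{\tau}$-drift, which is exactly why $\widehat{\beta}_v$ is asymptotically unbiased and hence $\widehat{\boldsymbol{\tau}}$ is asymptotically unbiased for $\boldsymbol{\tau}$; (b) tracking the paper's sign convention, under which ``$-K_v$'' denotes the positive matrix appearing in the influence function of $\widehat{\beta}_v$, so that $-HK_v$ and $-\Xi_2 Q\Pi K_v$ refer to the same object; and (c) making sure the CLT and LLN invoked are the triangular-array versions, since local mis-specification makes the array heteroskedastic at every finite $n$. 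The single substantive input is Assumption \ref{assump:Identification} --- that $Z_1$ is strong enough for $\widehat{\beta}_v$ to satisfy Assumption \ref{assump:highlevel} --- which is maintained throughout and whose low-level sufficient conditions are relegated to the Appendix.
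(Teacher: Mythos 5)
Your proposal is correct and takes essentially the same route as the paper, which offers no separate argument for this result but presents it as an immediate specialization of Theorem \ref{thm:tau}: one checks via Assumption \ref{assump:chooseIV} that the linear IV moments satisfy the high-level conditions, computes $G=-\Xi_1 Q\Pi$, $H=-\Xi_2 Q\Pi$, $K_v=[G'W_vG]^{-1}G'W_v$, and reads off $\Psi$, exactly as you do (your direct expansion of $\widehat{\boldsymbol{\tau}} = n^{-1/2}Z_2'\boldsymbol{\epsilon} - (n^{-1}Z_2'X)\sqrt{n}(\widehat{\beta}_v-\beta)$ simply replays the proof of Theorem \ref{thm:tau} in this setting). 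Your remark (b) on signs is the right reading: with $H=-\Xi_2 Q\Pi$ the general formula gives first block $\Xi_2 Q\Pi K_v = -\Xi_2 Q\Pi(-K_v)$, which is precisely the object the paper's estimator $\widehat{\Psi}$ targets, so the displayed $-\Xi_2 Q\Pi K_v$ must be read against the quantity the paper labels $-K_v$.
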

Using this result, I construct the asymptotically unbiased estimator $\widehat{\tau}\widehat{\tau}' - \widehat{\Psi}\widehat{\Omega} \widehat{\Psi}'$ of $\tau\tau'$ from
	$$\widehat{\Psi} = \left[ \begin{array}
		{cc}
		-n^{-1}Z_2'X \left(-\widehat{K}_v\right) & I_q
	\end{array}\right], \quad -\widehat{K}_v = n\left(X'Z_1 \widetilde{W}_v Z_1' X\right)^{-1}X'Z_1 \widetilde{W}_v$$

All that remains before substituting values into Equation \ref{eq:fmsc} is to estimate $\Omega$. 
In the simulation and empirical examples discussed below I examine the TSLS estimator, that is $\widetilde{W}_S = (\Xi_S Z'Z\Xi_S)^{-1}$, and estimate $\Omega$ as follows. 
For all specifications \emph{except} the valid estimator $\widehat{\beta}_v$, I employ the centered, heteroskedasticity-consistent estimator
\begin{equation}
	\widehat{\Omega}_S = \frac{1}{n}\sum_{i=1}^n u_i(\widehat{\beta}_S)^2\mathbf{z}_{iS} \mathbf{z}_{iS}'  - \left(\frac{1}{n}\sum_{i=1}^n u_i(\widehat{\beta}_S)\mathbf{z}_{iS}   \right)\left(\frac{1}{n}\sum_{i=1}^n  u_i(\widehat{\beta}_S)\mathbf{z}_{iS}'  \right)
\end{equation}
where $u_i(\beta) = y_i - \mathbf{x}_i'\beta$, $\widehat{\beta}_S = (X'Z_S(Z_S'Z_S)^{-1}Z_S'X)^{-1}X'Z_S(Z_S'Z_S)^{-1}Z_S'\mathbf{y}$, $\mathbf{z}_{iS} = \Xi_S \mathbf{z}_i$ and $Z_S' = \Xi_S Z'$.
Centering allows moment functions to have non-zero means. 
While the local mis-specification framework implies that these means tend to zero in the limit, they are non-zero for any fixed sample size. 
Centering accounts for this fact, and thus provides added robustness. 
Since the valid estimator $\widehat{\beta}_v$ has no asymptotic bias, the AMSE of any target parameter based on this estimator equals its asymptotic variance. 
Accordingly, I use 
\begin{equation}
	\widetilde{\Omega}_{11}= n^{-1}\sum_{i=1}^n u_i(\widehat{\beta}_v)^2\mathbf{z}_{1i}\mathbf{z}_{1i}'
\end{equation}
rather than the $(p\times p)$ upper left sub-matrix of $\widehat{\Omega}$ to estimate this quantity. 
This imposes the assumption that all instruments in $Z_1$ are valid so that no centering is needed, providing greater precision.

\section{Moment Averaging and Post-Selection Estimators}
\label{sec:avg}
Because it is constructed from $\widehat{\tau}$, the FMSC is a random variable, even in the limit.
Combining Corollary \ref{cor:tautau} with Equation \ref{eq:fmsc} gives the following.
\begin{cor}[Limit Distribution of FMSC]
\label{cor:FMSClimit}
	Under Assumptions \ref{assump:drift}, \ref{assump:highlevel} and \ref{assump:Identification}, we have $FMSC_n(S) \rightarrow_d FMSC_S(\tau, M)$, where
		$B(\tau,M) = (\Psi M + \tau)(\Psi M + \tau)' - \Psi \Omega \Psi'$ and 
	\begin{equation*}
		\mbox{FMSC}_S(\tau,M) = \nabla_\theta\mu(\theta_0)'K_S\Xi_S \left\{\left[\begin{array}{cc}0&0\\0& B(\tau,M) \end{array}\right] + \Omega\right\}\Xi_S'K_S'\nabla_\theta\mu(\theta_0).
	\end{equation*}
\end{cor}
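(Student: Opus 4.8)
The plan is to show that $\mbox{FMSC}_n(S)$, as written in Equation \ref{eq:fmsc}, is a fixed continuous function of a finite collection of estimators, all but one of which converge in probability to constants, while the remaining one --- $\widehat{\tau}$ --- converges in distribution to $\Psi M + \tau$. The conclusion then follows by combining Slutsky's lemma with the continuous mapping theorem. There is no delicate calculation involved; the entire content is bookkeeping about modes of convergence.

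First I would catalogue the limiting behavior of each ingredient of $\mbox{FMSC}_n(S)$. By Theorem \ref{thm:consist}, the preliminary estimator $\widehat{\theta}$ (e.g.\ the valid estimator $\widehat{\theta}_v$) is consistent for $\theta_0$; since $\mu$ is continuously differentiable in a neighborhood of $\theta_0$, the continuous mapping theorem gives $\nabla_\theta\mu(\widehat{\theta}) \rightarrow_{p} \nabla_\theta\mu(\theta_0)$. Next, $\widehat{K}_S$ is assembled from sample analogues of $F_S$ and $W_S$, which converge in probability to their population counterparts under Assumption \ref{assump:highlevel}(b) and (g), while $\Xi_S$ is known; hence $\widehat{K}_S \rightarrow_{p} K_S$. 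The estimators $\widehat{\Omega}$ and $\widehat{\Psi}$ entering the FMSC are, by construction (cf.\ Corollary \ref{cor:tautau}), consistent for $\Omega$ and $\Psi$. Finally, Theorem \ref{thm:tau} delivers the one genuinely stochastic limit: $\widehat{\tau} \rightarrow_{d} \Psi M + \tau$, where $M \sim N(0,\Omega)$.

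Then I would assemble the pieces. Because $\widehat{\tau}$ converges in distribution and every other ingredient converges in probability to a constant, the full vector $\big(\nabla_\theta\mu(\widehat{\theta}), \widehat{K}_S, \widehat{\Omega}, \widehat{\Psi}, \widehat{\tau}\big)$ converges jointly in distribution to $\big(\nabla_\theta\mu(\theta_0), K_S, \Omega, \Psi, \Psi M + \tau\big)$ --- a sequence converging in distribution is automatically jointly convergent with any sequence converging to a constant. Applying the continuous mapping theorem to the map sending this vector to the right-hand side of Equation \ref{eq:fmsc} --- a polynomial, hence continuous, in its arguments --- yields $\mbox{FMSC}_n(S) \rightarrow_{d} \mbox{FMSC}_S(\tau,M)$. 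In particular $\widehat{\tau}\widehat{\tau}' \rightarrow_{d} (\Psi M + \tau)(\Psi M + \tau)'$ while $\widehat{\Psi}\widehat{\Omega}\widehat{\Psi}' \rightarrow_{p} \Psi\Omega\Psi'$, so their difference converges in distribution to $B(\tau,M) = (\Psi M + \tau)(\Psi M + \tau)' - \Psi\Omega\Psi'$, exactly the claimed expression, and the surrounding quadratic form converges accordingly.

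The only point deserving explicit mention --- rather than a genuine obstacle --- is making sure the convergence used is joint rather than merely marginal for each factor, since the continuous mapping theorem requires joint weak convergence of the argument vector. This is immediate here precisely because there is a single non-degenerate piece ($\widehat{\tau}$) and the rest are asymptotically constant, but it is worth stating because the resulting randomness of $\mbox{FMSC}_S(\tau,M)$ in the limit is exactly the feature that motivates the moment-averaging and post-selection analysis in the remainder of Section \ref{sec:avg}.
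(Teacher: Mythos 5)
Your proposal is correct and follows essentially the same route as the paper, which treats the corollary as an immediate consequence of Theorem \ref{thm:tau} together with consistency of the plug-in estimators in Equation \ref{eq:fmsc}, combined via Slutsky and the continuous mapping theorem. Your explicit remark on joint convergence (one stochastic term, the rest asymptotically constant) is exactly the bookkeeping the paper leaves implicit.
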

This corollary implies that the FMSC is a ``conservative'' rather than ``consistent'' selection procedure.
This lack of consistency is a desirable feature of the FMSC for two reasons.
First, as discussed above, the goal of the FMSC is not to select only correctly specified moment conditions: it is to choose an estimator with a low finite-sample MSE as approximated by AMSE.
The goal of consistent selection is very much at odds with that of controlling estimator risk.
As explained by \cite{Yang2005} and \cite{LeebPoetscher2008}, the worst-case risk of a consistent selection procedure \emph{diverges} with sample size.\footnote{This fact is readily apparent from the results of the simulation study from Section \ref{sec:chooseIVsim}: the consistent criteria, GMM-BIC and HQ, have the highest worst-case RMSE, while the conservative criteria, FMSC and GMM-AIC, have the lowest.}
Second, while we know from both simulation studies \citep{Demetrescu} and analytical examples \citep{LeebPoetscher2005} that selection can dramatically change the sampling distribution of our estimators, invalidating traditional confidence intervals, the asymptotics of consistent selection give the misleading impression that this problem can be ignored.

There are two main problems with applying ``textbook'' confidence intervals post-moment selection.
First is model selection uncertainty: if the data had been slightly different, we would have chosen a different set of moment conditions.
Accordingly, any confidence interval that \emph{conditions} on the selected model must be too short.
Second, textbook confidence intervals ignore the fact that selection is carried out over potentially invalid moment conditions.
Even if our goal were to consistently eliminate such moment conditions, for example by using a consistent criterion such as the GMM-BIC of \cite{Andrews1999}, in finite-samples we would not always be successful.
Because of this, our intervals will be incorrectly centered.
Accounting for these two effects requires a limit theory that accommodates \emph{mixture distributions}: post-selection estimators are randomly-weighted averages of the individual candidate estimators.
Because they choose a single candidate with probability approaching one in the limit, consistent selection procedures make it impossible to represent this phenomenon.
In contrast, conservative selection procedures remain random even as the sample size goes to infinity, allowing us to derive a mixture-of-normals limit distribution and, ultimately, to carry out valid inference post-moment selection.
In the remainder of this section, I derive the asymptotic distribution of generic ``moment average'' estimators and use them to propose simulation-based procedures for post-moment selection inference. 
For certain examples it is possible to analytically characterize the limit distribution of a post-FMSC estimator without resorting to simulation-based methods.
I explore this possibility in detail for my two running examples: OLS versus TSLS and choosing instrumental variables.
I also briefly consider a minimum-AMSE averaging estimator that combines OLS and TSLS. 

\subsection{Moment Average Estimators}
A generic moment average estimator takes the form
\begin{equation}
	\label{eq:avg}
	\widehat{\mu}=\sum_{S \in \mathscr{S}} \widehat{\omega}_S\widehat{\mu}_S
\end{equation}
where $\widehat{\mu}_S = \mu(\widehat{\theta}_S)$ is the estimator of the target parameter $\mu$ under moment set $S$, $\mathscr{S}$ is the collection of all moment sets under consideration, and $\widehat{\omega}_S$ is shorthand for the value of a data-dependent weight function $\widehat{\omega}_S=\omega(\cdot, \cdot)$ evaluated at moment set $S$ and the sample observations $Z_{n1}, \hdots, Z_{nn}$.  
As above $\mu(\cdot)$ is a $\mathbb{R}$-valued, $Z$-almost surely continuous function of $\theta$ that is differentiable in an open neighborhood of $\theta_0$. 
When $\widehat{\omega}_S$ is an indicator, taking on the value one at the moment set moment set that minimizes some moment selection criterion, $\widehat{\mu}$ is a post-moment selection estimator. 
To characterize the limit distribution of $\widehat{\mu}$, I impose the following mild conditions on $\widehat{\omega}_S$, requiring that they sum to one and are ``well-behaved'' in the limit so that I may apply the continuous mapping theorem.
\begin{assump}[Conditions on the Weights]\mbox{}
\label{assump:weights}
\begin{enumerate}[(a)]
	\item $\sum_{S \in \mathscr{S}} \widehat{\omega}_S = 1$, almost surely 
	\item For each $S\in \mathscr{S}$, $\widehat{\omega}_S \rightarrow_d\varphi_S(\tau, M)$, a function of $\tau$, $M$ and consistently estimable constants with at most countably many discontinuities.
\end{enumerate}
\end{assump}

\begin{cor}[Asymptotic Distribution of Moment-Average Estimators]
\label{cor:momentavg}
Under Assumption \ref{assump:weights} and the conditions of Theorem \ref{thm:normality},
	$$\sqrt{n}\left(\widehat{\mu} -  \mu_0\right) \rightarrow_{d}\Lambda(\tau) =  -\nabla_\theta\mu(\theta_0)'\left[\sum_{S \in \mathscr{S}} \varphi_S(\tau,M) K_S\Xi_S\right] \left(M + \left[\begin{array}
	{c} 0 \\ \tau
\end{array} \right]\right).$$
\end{cor}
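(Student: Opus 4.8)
The plan is to write $\sqrt{n}(\widehat{\mu} - \mu_0)$ as a weighted sum of the recentered candidate estimators and then apply the continuous mapping theorem (CMT) to a single underlying weakly convergent sequence. By Assumption \ref{assump:weights}(a), $\sum_{S\in\mathscr{S}}\widehat{\omega}_S = 1$ almost surely, so
$$\sqrt{n}\left(\widehat{\mu} - \mu_0\right) = \sum_{S\in\mathscr{S}}\widehat{\omega}_S\,\sqrt{n}\left(\widehat{\mu}_S - \mu_0\right).$$
Every term on the right is ultimately a function of the single random vector $\sqrt{n}f_n(\theta_0)$, which by Assumption \ref{assump:highlevel}(h) converges in distribution to $M + (0',\tau')'$, together with quantities that converge in probability to constants. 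The proof amounts to making this common dependence explicit and invoking the CMT once.

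First I would record the joint limit of the candidate estimators. Each $\widehat\theta_S$ admits an asymptotically linear representation in $\sqrt{n}f_n(\theta_0)$ — this is what underlies Theorem \ref{thm:normality} — so the stacked vector $\big(\sqrt{n}(\widehat\theta_S - \theta_0)\big)_{S\in\mathscr{S}}$ converges jointly to $\big(-K_S\Xi_S(M + (0',\tau')')\big)_{S\in\mathscr{S}}$, all coordinates being driven by the \emph{same} $M$. Applying the Delta method to $\mu(\cdot)$ coordinate by coordinate, exactly as in Corollary \ref{cor:target}, upgrades this to joint convergence of $\big(\sqrt{n}(\widehat\mu_S - \mu_0)\big)_{S\in\mathscr{S}}$ to $\big(-\nabla_\theta\mu(\theta_0)'K_S\Xi_S(M + (0',\tau')')\big)_{S\in\mathscr{S}}$.

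Next I would bring in the weights. Assumption \ref{assump:weights}(b) is formulated precisely so that the limit $\varphi_S(\tau, M)$ depends on the data only through $M$, the remaining arguments being the fixed vector $\tau$ and other consistently estimable constants. Hence the whole collection $\big(\widehat{\omega}_S,\,\sqrt{n}(\widehat\mu_S - \mu_0)\big)_{S\in\mathscr{S}}$ converges jointly: one writes each $\widehat{\omega}_S$ as $\varphi_S$ evaluated at $\widehat\tau$, whose limit is given by Theorem \ref{thm:tau}, and at consistent estimators of the relevant constants, and then applies Slutsky's lemma and the CMT to the common underlying sequence $\sqrt{n}f_n(\theta_0)$. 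The joint limit is $\big(\varphi_S(\tau, M),\,-\nabla_\theta\mu(\theta_0)'K_S\Xi_S(M + (0',\tau')')\big)_{S\in\mathscr{S}}$. Finally, the map $(a_S, b_S)_{S\in\mathscr{S}} \mapsto \sum_{S}a_S b_S$ is continuous, and it is continuous almost surely at this limiting random vector because each $\varphi_S$ is almost-surely continuous in $(\tau, M)$ and $\mathscr{S}$ is finite; one last application of the CMT gives
$$\sqrt{n}\left(\widehat{\mu} - \mu_0\right) \rightarrow_d \sum_{S\in\mathscr{S}}\varphi_S(\tau, M)\left(-\nabla_\theta\mu(\theta_0)'K_S\Xi_S\left(M + \left[\begin{array}{c}0\\\tau\end{array}\right]\right)\right) = \Lambda(\tau).$$

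The main obstacle is the joint convergence of the weights and the candidate estimators, not any individual marginal convergence: Assumption \ref{assump:weights}(b) as stated only asserts convergence of each $\widehat{\omega}_S$ in isolation. The resolution — and the reason the assumption is phrased in terms of ``consistently estimable constants only'' — is that all the relevant randomness, for both the weights and the estimators, flows through $\sqrt{n}f_n(\theta_0)$, so a single convergent sequence plus the CMT handles everything simultaneously. A secondary point requiring mild care is the almost-sure continuity hypothesis on $\varphi_S$, which is what licenses the final CMT step even though the weight functions may themselves be discontinuous, as with the indicator weights of a post-selection estimator.
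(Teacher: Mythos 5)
Your proof is correct and takes essentially the same route as the paper's: decompose $\sqrt{n}(\widehat{\mu}-\mu_0)=\sum_{S\in\mathscr{S}}\widehat{\omega}_S\sqrt{n}(\widehat{\mu}_S-\mu_0)$ using the fact that the weights sum to one, establish that the weights and recentered candidate estimators converge \emph{jointly} because both are functions of the same limiting vector $M$ (plus consistently estimable constants), and conclude with the continuous mapping theorem. You are more explicit about the joint-convergence step, which the paper asserts in a single sentence, but the argument is the same.
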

Notice that the limit random variable from Corollary \ref{cor:momentavg}, denoted $\Lambda(\tau)$, is a \emph{randomly weighted average} of the multivariate normal vector $M$. 
Hence, $\Lambda(\tau)$ is non-normal. 
This is precisely the behavior for which we set out to construct an asymptotic representation.
The conditions of Assumption \ref{assump:weights} are fairly mild. 
Requiring that the weights sum to one ensures that $\widehat{\mu}$ is a consistent estimator of $\mu_0$ and leads to a simpler expression for the limit distribution. 
While somewhat less transparent, the second condition is satisfied by weighting schemes based on a number of familiar moment selection criteria.
It follows immediately from Corollary \ref{cor:FMSClimit}, for example, that the FMSC converges in distribution to a function of $\tau$, $M$ and consistently estimable constants only. 
The same is true for weights based on the $J$-test statistic, as seen from the following result.
\begin{thm}[Distribution of $J$-Statistic under Local Mis-Specification] 
\label{pro:jstat}
	Define the J-test statistic as per usual by $J_n(S)  = n \left[\Xi_S f_n(\widehat{\theta}_S)\right]' \widehat{\Omega}^{-1}\left[\Xi_S f_n(\widehat{\theta}_S)\right]$ where $\widehat{\Omega}^{-1}_S$ is a consistent estimator of $\Omega_S^{-1}$. Then, under the conditions of Theorem \ref{thm:normality}, we have $J_n(S) \rightarrow_dJ_S(\tau, M)$ where
		$$J_S(\tau, M)=[\Omega_S^{-1/2}(M_S + \tau_S)]' (I - P_S)[\Omega_S^{-1/2}\Xi_S(M_S + \tau_S)],$$
$M_S = \Xi_S M$, $\tau_S' = (0', \tau')\Xi_S'$, and $P_S$ is the projection matrix formed from the GMM identifying restrictions $\Omega^{-1/2}_S F_S$.
\end{thm}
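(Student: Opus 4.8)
The plan is to follow the classical argument for the limiting distribution of the GMM overidentification statistic, modified to carry along the drifting mean $n^{-1/2}\tau$ in the $h$--block. Three ingredients do the work: a mean-value expansion of $\sqrt{n}\,\Xi_S f_n(\widehat{\theta}_S)$ about $\theta_0$; the linear representation of $\sqrt{n}(\widehat{\theta}_S - \theta_0)$ that underlies Theorem \ref{thm:normality}; and the algebraic observation that, under efficient weighting, pre-multiplying the ``residual'' sample moment vector by $\Omega_S^{-1/2}$ converts the contribution of the estimated parameter into an orthogonal projection.

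First I would expand. By Assumption \ref{assump:highlevel}(f) the map $\theta \mapsto f(Z,\theta)$ is differentiable on a neighborhood $\mathcal{B}$ of $\theta_0$; since $\widehat{\theta}_S \rightarrow_p \theta_0$ by Theorem \ref{thm:consist}, the event $\{\widehat{\theta}_S \in \mathcal{B}\}$ has probability approaching one, and on this event a coordinate-wise mean-value expansion gives
$$\sqrt{n}\,\Xi_S f_n(\widehat{\theta}_S) = \sqrt{n}\,\Xi_S f_n(\theta_0) + \Xi_S \nabla_\theta f_n(\bar{\theta}_S)\,\sqrt{n}(\widehat{\theta}_S - \theta_0),$$
where $\bar{\theta}_S$ lies on the segment joining $\widehat{\theta}_S$ and $\theta_0$ (a possibly different intermediate point for each row, which is harmless here). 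Consistency forces $\bar{\theta}_S \rightarrow_p \theta_0$, so the uniform convergence in Assumption \ref{assump:highlevel}(g) yields $\Xi_S \nabla_\theta f_n(\bar{\theta}_S) \rightarrow_p F_S$. Both stochastic terms on the right are, up to $o_p(1)$, linear images of the single random vector $\sqrt{n} f_n(\theta_0)$: the first is exactly $\Xi_S \sqrt{n} f_n(\theta_0)$, and the second equals $-K_S \Xi_S \sqrt{n} f_n(\theta_0) + o_p(1)$, this being the linearization established en route to Theorem \ref{thm:normality}. Hence Assumption \ref{assump:highlevel}(h) and the continuous mapping theorem give joint convergence and
$$\sqrt{n}\,\Xi_S f_n(\widehat{\theta}_S) \rightarrow_d \left(I_{|S|} - F_S K_S\right)\left(M_S + \tau_S\right),$$
with $M_S + \tau_S = \Xi_S\left(M + (0', \tau')'\right)$, $M_S = \Xi_S M$, and $\tau_S' = (0', \tau')\Xi_S'$ as in the statement.

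Next I would insert the efficient-GMM structure. Taking $W_S = \Omega_S^{-1}$ (consistent with the statement's use of $\widehat{\Omega}_S^{-1}$ in the quadratic form), $K_S = (F_S'\Omega_S^{-1}F_S)^{-1}F_S'\Omega_S^{-1}$, so with $A_S = \Omega_S^{-1/2}F_S$,
$$\Omega_S^{-1/2}\left(I_{|S|} - F_S K_S\right)\Omega_S^{1/2} = I_{|S|} - A_S(A_S'A_S)^{-1}A_S' = I_{|S|} - P_S,$$
the orthogonal projection off the column space of $\Omega_S^{-1/2}F_S$ --- exactly the $P_S$ in the statement. Therefore $\Omega_S^{-1/2}\sqrt{n}\,\Xi_S f_n(\widehat{\theta}_S) \rightarrow_d (I_{|S|} - P_S)\,\Omega_S^{-1/2}(M_S + \tau_S)$. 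Since $\widehat{\Omega}_S \rightarrow_p \Omega_S$ and the inverse square root is continuous at $\Omega_S$, Slutsky's theorem gives
$$J_n(S) = \left\| \widehat{\Omega}_S^{-1/2}\sqrt{n}\,\Xi_S f_n(\widehat{\theta}_S)\right\|^2 \rightarrow_d \left\|(I_{|S|} - P_S)\,\Omega_S^{-1/2}(M_S + \tau_S)\right\|^2.$$
Using that $I_{|S|} - P_S$ is symmetric and idempotent, the right-hand side equals $[\Omega_S^{-1/2}(M_S + \tau_S)]'(I_{|S|} - P_S)[\Omega_S^{-1/2}(M_S + \tau_S)] = J_S(\tau, M)$. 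Continuity of $J_S(\cdot,\cdot)$ in $(\tau,M)$ is then immediate, which is what makes the $J$-based weights admissible in Assumption \ref{assump:weights}(b).

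I expect the mean-value expansion to be the only genuinely delicate step: differentiability is assumed only locally, so the expansion lives on the event $\{\widehat{\theta}_S\in\mathcal{B}\}$ and one must pass to its vanishing-probability complement; and it is the uniform convergence of Assumption \ref{assump:highlevel}(g), not mere pointwise convergence, that justifies replacing $\nabla_\theta f_n(\bar{\theta}_S)$ by $F_S$ when $\bar{\theta}_S$ is random. A second point worth stating plainly in the write-up is that the projection form of $J_S$ presumes efficient weighting $\widetilde{W}_S \rightarrow_p \Omega_S^{-1}$; with a general weight matrix the same expansion holds but with $F_S K_S$ in place of the object that, conjugated by $\Omega_S^{1/2}$, collapses to the orthogonal projection $P_S$. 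Everything else is continuous-mapping and Slutsky bookkeeping.
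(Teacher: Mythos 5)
Your argument is correct and follows essentially the same route as the paper's own proof: a mean-value expansion of $\sqrt{n}\,\Xi_S f_n(\widehat{\theta}_S)$, substitution of the asymptotic linear form $\sqrt{n}(\widehat{\theta}_S-\theta_0) = -K_S\Xi_S\sqrt{n}f_n(\theta_0)+o_p(1)$, recognition that under efficient weighting the resulting matrix conjugates to the orthogonal projection $I-P_S$ off the span of $\Omega_S^{-1/2}F_S$, and Slutsky with $\widehat{\Omega}_S\rightarrow_p\Omega_S$. You simply spell out details the paper leaves implicit (the local-differentiability event, uniform convergence of the Jacobian, and the efficient-weighting requirement), so no changes are needed.
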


Post-selection estimators are merely a special case of moment average estimators.
To see why, consider the weight function
$$\widehat{\omega}_S^{MSC} = \mathbf{1}\left\{\mbox{MSC}_n(S) = \min_{S'\in \mathscr{S}} \mbox{MSC}_n(S')\right\}$$where $\mbox{MSC}_n(S)$ is the value of some moment selection criterion evaluated at the sample observations $Z_{n1}\hdots, Z_{nn}$. 
Now suppose $\mbox{MSC}_n(S) \rightarrow_d\mbox{MSC}_S(\tau,M)$, a function of $\tau$, $M$ and consistently estimable constants only. 
Then, so long as the probability of ties, $P\left\{\mbox{MSC}_S(\tau,M) = \mbox{MSC}_{S'}(\tau,M) \right\}$, is zero for all $S\neq S'$, we have 
	$$\widehat{\omega}_S^{MSC} \rightarrow_d \mathbf{1}\left\{\mbox{MSC}_S(\tau,M) = \min_{S'\in \mathscr{S}} \mbox{MSC}_{S'}(\tau,M)\right\}$$ 
satisfying Assumption \ref{assump:weights} (b). 
Thus, post-selection estimators based on the FMSC, a downward $J$-test procedure, or the GMM moment selection criteria of \cite{Andrews1999} all fall within the ambit of \ref{cor:momentavg}. 
The consistent criteria of \cite{Andrews1999}, however, are not particularly interesting under local mis-specification.\footnote{For more discussion of these criteria, see Section \ref{sec:chooseIVsim} below.}
Intuitively, because they aim to select all valid moment conditions w.p.a.1, we would expect that under Assumption \ref{assump:drift} they choose the full moment set in the limit. 
The following result shows that this intuition is correct.\footnote{This result is a special case of a more general phenomenon: consistent selection procedures cannot detect model violations of order $O(n^{-1/2})$.}
\begin{thm}[Consistent Criteria under Local Mis-Specification]
\label{pro:andrews}
Consider a moment selection criterion of the form $MSC(S) = J_n(S) - h(|S|)\kappa_n$, where $h$ is strictly increasing,  $\lim_{n\rightarrow \infty}\kappa_n = \infty$, and $\kappa_n = o(n)$. Under the conditions of Theorem \ref{thm:normality}, $MSC(S)$ selects the full moment set with probability approaching one.
\end{thm}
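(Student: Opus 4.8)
The plan is to show that for any fixed proper subset $S \subsetneq S_{full}$ (where $S_{full}$ denotes the full moment set), $MSC(S) - MSC(S_{full}) \to_p +\infty$, so that the full set minimizes the criterion with probability approaching one. Writing out the difference, $MSC(S) - MSC(S_{full}) = \left[J_n(S) - J_n(S_{full})\right] + \left[h(|S_{full}|) - h(|S|)\right]\kappa_n$. Since $h$ is strictly increasing and $|S_{full}| > |S|$, the bracketed coefficient $h(|S_{full}|) - h(|S|)$ is a strictly positive constant, call it $c_S > 0$, so the penalty term equals $c_S \kappa_n \to \infty$. The whole argument therefore reduces to controlling the $J$-statistic difference: I need $J_n(S) - J_n(S_{full}) = O_p(1)$, because then it is swept away by the diverging penalty.

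The key step is that $J_n(S) = O_p(1)$ for every candidate $S$. This follows directly from Theorem \ref{pro:jstat}: under Assumption \ref{assump:drift} and the conditions of Theorem \ref{thm:normality}, $J_n(S) \to_d J_S(\tau,M)$, a proper (tight) limiting random variable — a quadratic form in the Gaussian vector $M$ shifted by the constant $\tau$. Hence $J_n(S) = O_p(1)$ for each of the finitely many $S \in \mathscr{S}$, and in particular $\left|J_n(S) - J_n(S_{full})\right| \le J_n(S) + J_n(S_{full}) = O_p(1)$. Combining, $MSC(S) - MSC(S_{full}) = O_p(1) + c_S\kappa_n$, and since $c_S \kappa_n \to \infty$ while the first term is bounded in probability, $P\left(MSC(S) > MSC(S_{full})\right) \to 1$. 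Taking the union over the finitely many proper subsets $S$ (a finite intersection of probability-one-in-the-limit events) gives $P\left(S_{full} = \arg\min_{S \in \mathscr{S}} MSC(S)\right) \to 1$, which is the claim. The condition $\kappa_n = o(n)$ is not even needed for this direction; it matters only for the converse consistency direction (ruling out over-selection relative to the true model under fixed mis-specification), which is not what is being asserted here — under local mis-specification every moment condition is valid in the limit, so there is nothing to exclude.

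I do not expect a genuine obstacle here; the result is essentially a corollary of Theorem \ref{pro:jstat}. The only point requiring a small amount of care is making sure Theorem \ref{pro:jstat} genuinely applies to the full moment set $S_{full}$ as well as to $S$ — i.e. that $\widehat{\Omega}_{S_{full}}$ is consistent for a nonsingular $\Omega_{S_{full}}$ and that the identifying-restrictions projection $P_{S_{full}}$ is well-defined — but this is guaranteed by the maintained high-level conditions (Assumption \ref{assump:highlevel}), which are assumed to hold for each candidate moment set including the full one. So the write-up is short: invoke Theorem \ref{pro:jstat} to get $J_n(S) = O_p(1)$ for all $S$, note the penalty gap diverges, and conclude by the continuous mapping / union-bound argument above.
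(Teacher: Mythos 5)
Your proposal is correct and follows essentially the same route as the paper's own proof: both invoke Theorem \ref{pro:jstat} to conclude $J_n(S) = O_p(1)$ for every candidate $S \in \mathscr{S}$, then observe that the penalty gap $\left[h(|S_{full}|) - h(|S|)\right]\kappa_n$ diverges because $h$ is strictly increasing and the full set is strictly largest, so the full set minimizes the criterion with probability approaching one. Your remark that $\kappa_n = o(n)$ is not needed for this direction is also consistent with the paper, whose proof never uses it.
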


\subsection{Digression: Minimum-AMSE Averaging for OLS and TSLS}
\label{sec:momentavgexample}
When competing moment sets have similar criterion values in the population, sampling variation can be \emph{magnified} in the selected estimator.
This motivates the idea of averaging estimators based on different moment conditions rather than selecting them.
To illustrate this idea, I now briefly revisit the OLS versus TSLS example from Section \ref{sec:OLSvsIVExample} and derive an AMSE-optimal weighted average of the two estimators.
Let $\widetilde{\beta}(\omega)$ be a convex combination of the OLS and TSLS estimators, namely  
\begin{equation}
	\widetilde{\beta}(\omega) = \omega \widehat{\beta}_{OLS} + (1 - \omega) \widetilde{\beta}_{TSLS}
\end{equation}
where $\omega \in [0,1]$ is the weight given to the OLS estimator.
\begin{thm}
	\label{thm:OLSvsIVavg} 
	Under the conditions of Theorem \ref{thm:OLSvsIV}, the AMSE of the weighted-average estimator $\sqrt{n}\left[\widehat{\beta}(\omega) - \beta \right]$ is minimized over $\omega \in [0,1]$ by taking $\omega = \omega^*$ where
	$$ \omega^* = \left[1 + \frac{\tau^2/\sigma_x^4}{\sigma_\epsilon^2(1/\gamma^2 - 1/\sigma_x^2)}\right]^{-1} = \left[1 + \frac{\mbox{ABIAS(OLS)}^2}{\mbox{AVAR(TSLS)}-\mbox{AVAR(OLS)}} \right]^{-1}.$$
\end{thm}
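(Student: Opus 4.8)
The plan is to read off the limit distribution of $\sqrt{n}\left(\widetilde{\beta}(\omega) - \beta\right)$ directly from Theorem \ref{thm:OLSvsIV} and then minimize the resulting AMSE over $\omega$. For fixed $\omega$, the estimator $\widetilde{\beta}(\omega) = \omega\widehat{\beta}_{OLS} + (1-\omega)\widetilde{\beta}_{TSLS}$ is a linear combination of the jointly asymptotically normal pair $\left(\sqrt{n}(\widehat{\beta}_{OLS}-\beta),\, \sqrt{n}(\widetilde{\beta}_{TSLS}-\beta)\right)$, so the continuous mapping theorem gives $\sqrt{n}\left(\widetilde{\beta}(\omega)-\beta\right) \rightarrow_d N\!\left(\omega\tau/\sigma_x^2,\, V(\omega)\right)$, where the asymptotic bias is the $\omega$-weighted combination of $\tau/\sigma_x^2$ and $0$, and $V(\omega)$ is obtained from the $2\times 2$ covariance matrix in Theorem \ref{thm:OLSvsIV}. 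Collecting squared bias and variance yields $\mbox{AMSE}(\omega) = \omega^2 A + (1-\omega)^2 B + 2\omega(1-\omega)C$, where $A = \tau^2/\sigma_x^4 + \sigma_\epsilon^2/\sigma_x^2 = \mbox{AMSE(OLS)}$, $B = \sigma_\epsilon^2/\gamma^2 = \mbox{AMSE(TSLS)}$, and $C = \sigma_\epsilon^2/\sigma_x^2 = \mbox{AVAR(OLS)}$ is the limiting covariance between the two estimators.

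The second step is elementary calculus on this quadratic: setting the derivative $2\omega A - 2(1-\omega)B + 2(1-2\omega)C$ to zero gives $\omega^* = (B-C)/(A+B-2C)$. Substituting the expressions above, $B - C = \sigma_\epsilon^2(1/\gamma^2 - 1/\sigma_x^2) = \mbox{AVAR(TSLS)} - \mbox{AVAR(OLS)}$ and $A + B - 2C = \tau^2/\sigma_x^4 + \sigma_\epsilon^2(1/\gamma^2 - 1/\sigma_x^2)$, so dividing numerator and denominator by $B-C$ produces the stated formula $\omega^* = \left[1 + (\tau^2/\sigma_x^4)/\!\left(\sigma_\epsilon^2(1/\gamma^2 - 1/\sigma_x^2)\right)\right]^{-1}$, which is exactly $\left[1 + \mbox{ABIAS(OLS)}^2/(\mbox{AVAR(TSLS)} - \mbox{AVAR(OLS)})\right]^{-1}$.

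It then remains to confirm that this interior critical point is the global minimizer over the constraint set $[0,1]$, and this is where one must be slightly careful. First, $\mbox{AMSE}(\omega)$ is strictly convex iff $A + B - 2C = \tau^2/\sigma_x^4 + \sigma_\epsilon^2(1/\gamma^2 - 1/\sigma_x^2) > 0$; this holds because $\sigma_x^2 = \gamma^2 + \sigma_v^2 > \gamma^2$ whenever the instruments do not perfectly explain $x$ (ruled out by the maintained regularity conditions, along with $\gamma^2 > 0$), so $1/\gamma^2 > 1/\sigma_x^2$. Second, writing $\omega^* = (B-C)/\big((B-C) + \tau^2/\sigma_x^4\big)$ makes it immediate that $\omega^* \in (0,1]$: the numerator is positive and the denominator exceeds it by $\tau^2/\sigma_x^4 \geq 0$, with $\omega^* = 1$ precisely when $\tau = 0$. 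Since the unconstrained minimizer of a strictly convex objective already lies in $[0,1]$, it is also the constrained minimizer. The computation itself is routine; the only real obstacle is bookkeeping — retaining the cross-covariance term $C$ (OLS and TSLS are correlated in the limit) and ensuring the degenerate cases $\sigma_v^2 = 0$ ($x$ exogenous) and $\gamma^2 = 0$ (irrelevant instruments) are excluded so the denominator is strictly positive and $\omega^*$ well-defined.
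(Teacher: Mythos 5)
Your proposal is correct and follows essentially the same route as the paper's proof: read the joint limit distribution off Theorem \ref{thm:OLSvsIV}, write the AMSE of $\widetilde{\beta}(\omega)$ as a quadratic in $\omega$ (your $A$, $B$, $C$ bookkeeping reproduces the paper's expansion exactly, cross-covariance included), and minimize via convexity and the first-order condition. Your explicit checks that $A+B-2C>0$ and that $\omega^*\in(0,1]$ simply spell out what the paper asserts when it calls the objective globally convex and the critical point the unique global minimizer, so there is nothing substantively different.
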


The preceding result has several important consequences. 
First, since the variance of the TSLS estimator is always strictly greater than that of the OLS estimator, the optimal value of $\omega$ \emph{cannot} be zero. 
No matter how strong the endogeneity of $x$,
as measured by $\tau$, we should always give some weight to the OLS estimator. 
Second, when $\tau = 0$ the optimal value of $\omega$ is one. If $x$ is exogenous, OLS is strictly preferable to TSLS. 
Third, the optimal weights depend on the strength of the instruments $\mathbf{z}$ as measured by $\gamma$.
All else equal, the stronger the instruments, the less weight we should give to OLS.
To operationalize the AMSE-optimal averaging estimator suggested from Theorem \ref{thm:OLSvsIVavg}, I propose the plug-in estimator 
\begin{equation}
	\widehat{\beta}^*_{AVG} = \widehat{\omega}^* \widehat{\beta}_{OLS} + (1 - \widehat{\omega}^*)\widetilde{\beta}_{TSLS}
	\label{eq:OLSvsIV_AVG1}
\end{equation}
where
\begin{equation}
\widehat{\omega }^* = \left[1 + \frac{\max \left\{0, \; \left(\widehat{\tau}^2 - \widehat{\sigma}_\epsilon^2\widehat{\sigma}_x^2  \left(\widehat{\sigma}_x^2/\widehat{\gamma}^2 - 1 \right) \right)/\;\widehat{\sigma}_x^4 \right\}}{\widehat{\sigma}_\epsilon^2 (1/\widehat{\gamma}^2 - 1/\widehat{\sigma}_x^2)}\right]^{-1}
	\label{eq:OLSvsIV_AVG2}
\end{equation}
This expression employs the same consistent estimators of $\sigma_x^2, \gamma$ and $\sigma_{\epsilon}$ as the FMSC expressions from Section \ref{sec:OLSvsIVExample}.
To ensure that $\widehat{\omega}^*$ lies in the interval $[0,1]$, however, I use a \emph{positive part} estimator for $\tau^2$, namely $\max\{0, \; \widehat{\tau}^2 - \widehat{V}\}$ rather than $\widehat{\tau}^2 - \widehat{V}$.\footnote{While $\widehat{\tau}^2 - \widehat{V}$ is an asymptotically unbiased estimator of $\tau^2$ it \emph{can} be negative.}
In the following section I show how one can construct confidence intervals for $\widehat{\beta}^*$ and related estimators.

\subsection{Inference for Moment-Average Estimators}
Suppose that $K_S$, $\varphi_S$, $\theta_0$, $\Omega$ and $\tau$ were all known. 
Then, by simulating from $M$, as defined in Theorem \ref{thm:normality}, the distribution of $\Lambda(\tau)$, defined in Corollary \ref{cor:momentavg}, could be approximated to arbitrary precision. 
This is the basic intuition that I use to devise inference procedures for moment-average and post-selection estimators.

To operationalize this idea, first consider how we would proceed if we knew \emph{only} the value of $\tau$.  
While $K_S$, $\theta_0$, and $\Omega$ are unknown this presents only a minor difficulty: in their place we can simply substitute the consistent estimators that appeared in the expression for the FMSC above.
To estimate $\varphi_S$, we first need to derive the limit distribution of $\widehat{\omega}_S$, the data-based weights specified by the user. 
As an example, consider the case of moment selection based on the FMSC. Here $\widehat{\omega}_S$ is simply the indicator function
\begin{equation}
	\label{eq:FMSCindicate}
	\widehat{\omega}_S = \mathbf{1}\left\{\mbox{FMSC}_n(S) = \min_{S'\in \mathscr{S}} \mbox{FMSC}_n(S')\right\}
\end{equation}
Substituting estimators of $\Omega$, $K_S$ and $\theta_0$ into $\mbox{FMSC}_S(\tau,M)$, defined in Corollary \ref{cor:FMSClimit}, gives
\begin{equation*}
	\widehat{\mbox{FMSC}}_S(\tau,M) = \nabla_\theta\mu(\widehat{\theta})'\widehat{K}_S\Xi_S \left\{\left[\begin{array}{cc}0&0\\0&\widehat{\mathcal{B}}(\tau,M) \end{array}\right] + \widehat{\Omega}\right\}\Xi_S'\widehat{K}_S'\nabla_\theta\mu(\widehat{\theta})
\end{equation*}
where $\widehat{\mathcal{B}}(\tau,M) = (\widehat{\Psi} M + \tau)(\widehat{\Psi} M + \tau)' - \widehat{\Psi} \widehat{\Omega} \widehat{\Psi}$.
Combining this with Equation \ref{eq:FMSCindicate},
\begin{equation*}
	\widehat{\varphi}_S(\tau,M) = \mathbf{1}\left\{\widehat{\mbox{FMSC}}_S(\tau,M) = \min_{S'\in \mathscr{S}} \widehat{\mbox{FMSC}}_{S'}(\tau,M)\right\}.
\end{equation*}
For GMM-AIC moment selection or selection based on a downward $J$-test, $\varphi_S(\cdot,\cdot)$ may be estimated analogously, following  Theorem \ref{pro:jstat}. 
Continuing to assume for the moment that $\tau$ is known, consider the following algorithm:
\begin{alg}[Simulation-based CI for $\widehat{\mu}$ given $\tau$]
\mbox{}
		\begin{enumerate}
\label{alg:conf_tau_known}
			\item Generate $J$ independent draws $M_j \sim N_{p+q}(0, \widehat{\Omega})$.
			\item Set $\Lambda_j(\tau) = -\nabla_\theta\mu(\widehat{\theta})'\left[\sum_{S \in \mathscr{S}} \widehat{\varphi}_S(\tau,M_j) \widehat{K}_S\Xi_S\right] (M_j + \tau)$.
			\item Using $\{\Lambda_j(\tau)\}_{j=1}^J$, calculate $\widehat{a}(\tau)$, $\widehat{b}(\tau)$ such that
		$P\left\{ \widehat{a}(\tau) \leq\Lambda(\tau)\leq \widehat{b}(\tau) \right\} = 1 - \alpha$.
  \item Define the interval 
    $ \mbox{CI}_{sim}=\left[ \widehat{\mu} - \widehat{b}(\tau)/\sqrt{n}, \quad \widehat{\mu} - \widehat{a}(\tau)/\sqrt{n} \right]$.
		\end{enumerate}
\end{alg}

Given knowledge of $\tau$, Algorithm \ref{alg:conf_tau_known} yields valid inference for $\mu$.
The problem, of course, is that $\tau$ is unknown and cannot even be consistently estimated.
One idea would be to substitute the asymptotically unbiased estimator $\widehat{\tau}$ from \ref{thm:tau} in place of the unknown $\tau$.
This gives rise to a procedure that I call the ``1-Step'' confidence interval:

\begin{alg}[1-Step CI] 
  \label{alg:1step}
Carry out of Algorithm \ref{alg:conf_tau_known} with $\tau$ set equal to the estimator $\widehat{\tau}$ from Theorem \ref{thm:tau}, yielding 
$ \widehat{\mbox{CI}}_{1}=\left[ \widehat{\mu} - \widehat{b}(\widehat{\tau})/\sqrt{n}, \quad \widehat{\mu} - \widehat{a}(\widehat{\tau})/\sqrt{n} \right]$.
\end{alg}

The 1-Step interval defined in Algorithm \ref{alg:1step} is conceptually simple, easy to compute, and can perform well in practice, as I explore below.
But as it fails to account for sampling uncertainty in $\widehat{\tau}$,  it does \emph{not} necessarily yield asymptotically valid inference for $\mu$.
Fully valid inference requires the addition of a second step to the algorithm and comes at a cost: conservative rather than exact inference.
In particular, the following procedure is guaranteed to yield an interval with asymptotic coverage probability of \emph{at least} $(1-\alpha-\delta)\times 100\%$.

\begin{alg}[2-Step CI]
\label{alg:conf}
\mbox{}
\begin{enumerate}
  \item Construct a $(1-\delta)\times 100\%$ confidence region $\mathscr{T}$ for $\tau$ using Theorem \ref{thm:tau}. 
  \item For each $\tau^* \in \mathscr{T}$ carry out Algorithm \ref{alg:conf_tau_known}, yielding a $(1 - \alpha)\times 100\%$ confidence interval $\left[\widehat{a}(\tau^*),\widehat{b}(\tau^*)\right]$ for $\Lambda(\tau^*)$.  
	\item Set $\displaystyle \widehat{a}_{min}=\min_{\tau^* \in \mathscr{T}} \widehat{a}(\tau^*)$ and $\displaystyle \widehat{b}_{max}= \max_{\tau^* \in \mathscr{T}} \widehat{b}(\tau^*)$. 
	\item Construct the interval 
    $ \widehat{\mbox{CI}}_{2}=\left[ \widehat{\mu} - \widehat{b}_{max}/\sqrt{n}, \quad \widehat{\mu} - \widehat{a}_{min}/\sqrt{n} \right]$
\end{enumerate}
\end{alg}

\begin{thm}[2-Step CI]
\label{thm:sim}
Let $\widehat{\Psi}$, $\widehat{\Omega}$, $\widehat{\theta}$, $\widehat{K}_S$, $\widehat{\varphi}_S$ be consistent estimators of $\Psi$, $\Omega$, $\theta_0$, $K_S$, $\varphi_S$ and define 
$\Delta_n(\widehat{\tau},\tau^*) = \left(\widehat{\tau} - \tau^*\right)' (\widehat{\Psi}\widehat{\Omega}\widehat{\Psi}')^{-1} \left(\widehat{\tau} - \tau^*\right)$ 
and 
$\mathscr{T}(\widehat{\tau},\delta) = \left\{\tau^* \colon  \Delta_n(\widehat{\tau},\tau^*) \leq \chi^2_q(\delta)\right\}$
where $\chi^2_q(\delta)$ denotes the $1-\delta$ quantile of a $\chi^2$ distribution with $q$ degrees of freedom.
Then, the interval $\mbox{CI}_{2}$ defined in Algorithm \ref{alg:conf} has asymptotic coverage probability no less than $1-(\alpha + \delta)$ as $J,n\rightarrow \infty$.
\end{thm}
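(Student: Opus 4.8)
The plan is to show that the simulation-based interval asymptotically contains the "oracle" interval one would construct if $\tau$ were known, and that the oracle interval has coverage exactly $1-\alpha$ in the limit. The argument has two layers corresponding to the two nested confidence statements: an inner statement about $\Lambda(\tau)$ given the true $\tau$, and an outer statement about $\tau$ lying in $\mathscr{T}(\widehat{\tau},\delta)$. First I would fix the true value $\tau_0$ and consider the infeasible interval built by Algorithm \ref{alg:conf} but with $\tau^*$ replaced by $\tau_0$ throughout and with the true $\Omega$, $K_S$, $\varphi_S$, $\theta_0$ in place of their estimators. By Corollary \ref{cor:momentavg}, $\sqrt{n}(\widehat{\mu}-\mu_0)\rightarrow_d \Lambda(\tau_0)$, so the quantiles $a(\tau_0)$, $b(\tau_0)$ of the simulated $\Lambda_j(\tau_0)$ (as $J\to\infty$, and using continuity of the distribution of $\Lambda(\tau_0)$ to get convergence of empirical quantiles) satisfy $P\{\widehat{\mu}-b(\tau_0)/\sqrt{n}\leq \mu_0\leq \widehat{\mu}-a(\tau_0)/\sqrt{n}\}\to 1-\alpha$.

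Second, I would handle the feasible procedure by a containment argument. Conditional on the event $E_n=\{\tau_0\in\mathscr{T}(\widehat{\tau},\delta)\}$, the minimization over $\tau^*\in\mathscr{T}(\widehat{\tau},\delta)$ in Step 2 of the algorithm ensures $\widehat{a}_{min}(\widehat{\tau})\leq \widehat{a}(\tau_0)$ and $\widehat{b}_{max}(\widehat{\tau})\geq \widehat{b}(\tau_0)$, where $\widehat{a}(\tau_0),\widehat{b}(\tau_0)$ are the feasible (estimated-nuisance) quantiles evaluated at $\tau^*=\tau_0$. Hence on $E_n$ the reported interval $\mbox{CI}_{sim}$ contains the feasible-at-$\tau_0$ interval $[\widehat{\mu}-\widehat{b}(\tau_0)/\sqrt{n},\,\widehat{\mu}-\widehat{a}(\tau_0)/\sqrt{n}]$. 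Consistency of $\widehat{\Psi},\widehat{\Omega},\widehat{\theta},\widehat{K}_S,\widehat{\varphi}_S$ (the last via the continuous-mapping argument already used around Equation \ref{eq:omegahat}, invoking the zero-probability-of-ties condition implicit in Assumption \ref{assump:weights}) together with a Slutsky/continuous-mapping step shows that the feasible simulated distribution of $\Lambda_j(\tau_0)$ converges to that of $\Lambda(\tau_0)$, so $\widehat{a}(\tau_0)\to_p a(\tau_0)$ and $\widehat{b}(\tau_0)\to_p b(\tau_0)$. Therefore
$$
\liminf_{n\to\infty} P\big\{\mu_0\in \mbox{CI}_{sim}\big\}\;\geq\; \liminf_{n\to\infty} P\big\{\mu_0\in \mbox{CI}_{sim},\,E_n\big\}\;\geq\; (1-\alpha) - \limsup_{n\to\infty}P(E_n^c).
$$
Finally, Theorem \ref{thm:tau} gives $\widehat{\tau}-\tau_0\rightarrow_d \Psi M \sim N_q(0,\Psi\Omega\Psi')$, and since $\widehat{\Psi}\widehat{\Omega}\widehat{\Psi}'\to_p \Psi\Omega\Psi'$ the quadratic form $\Delta_n(\widehat{\tau},\tau_0)\rightarrow_d \chi^2_q$, so $P(E_n^c)=P\{\Delta_n(\widehat{\tau},\tau_0)>\chi^2_q(\delta)\}\to\delta$. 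Combining, $\liminf_n P\{\mu_0\in\mbox{CI}_{sim}\}\geq 1-\alpha-\delta$.

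The main obstacle is the joint limit in $J$ and $n$ and, relatedly, making the convergence of the \emph{feasible} simulated quantiles uniform enough over $\tau^*\in\mathscr{T}(\widehat{\tau},\delta)$ that Steps 2--3 behave. Two technical points need care: (i) the empirical quantiles $\widehat{a}(\tau^*),\widehat{b}(\tau^*)$ must converge to population quantiles of $\Lambda(\tau^*)$ — this requires the CDF of $\Lambda(\tau^*)$ to be continuous and strictly increasing at the relevant quantiles, which follows from the normality of $M$ and non-degeneracy of $\nabla_\theta\mu(\theta_0)'K_S\Xi_S$, at least for $\tau^*$ in a neighborhood of $\tau_0$; and (ii) the discontinuity of the indicator weights $\widehat{\varphi}_S(\tau^*,M_j)$ in $(\tau^*,M_j)$ is harmless because the ties occur on a set of Lebesgue (hence Gaussian) measure zero, so $\Lambda_j(\tau^*)$ is almost-surely continuous and the continuous mapping theorem applies draw-by-draw. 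Once these are in place, the containment argument reduces the feasible procedure to the oracle one plus the $\chi^2_q$ coverage of $\mathscr{T}$, and the bound $1-(\alpha+\delta)$ follows by a union bound as displayed above. I would present the one-dimensional-$\mu$ case directly since $\Lambda(\tau)$ is scalar, which keeps the quantile arguments elementary.
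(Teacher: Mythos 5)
Your proposal is correct and follows essentially the same route as the paper's proof: a Bonferroni bound combining the $1-\alpha$ coverage of the interval for $\Lambda$ at the true $\tau$ with the asymptotic $1-\delta$ coverage of the confidence region for $\tau$, together with the containment $\widehat{a}_{min}\leq \widehat{a}(\tau_0)$ and $\widehat{b}_{max}\geq \widehat{b}(\tau_0)$ on the event that $\tau_0$ lies in the region; the only substantive difference is that the paper carries out this argument directly on the limiting random variables, while you work at finite $n$ and pass to the limit at the end. Note also that the uniformity over $\tau^*\in\mathscr{T}(\widehat{\tau},\delta)$ that you flag as the main obstacle is not actually needed, since the containment step only requires convergence of the feasible quantiles at the single point $\tau^*=\tau_0$ --- the same simplification that is implicit in the paper's treatment.
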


\subsection{A Special Case of Post-FMSC Inference}
\label{sec:limitexperiment}
The preceding section presented two confidence interval that account for the effects of moment selection on subsequent inference.
The 1-Step interval is intuitive and computationally straightforward but lacks theoretical guarantees, while the 2-Step interval guarantees asymptotically valid inference at the cost of greater computational complexity and conservatism.
To better understand these methods and the trade-offs involved in deciding between them, I now specialize them to the two examples of FMSC selection that appear in the simulation studies described below.
The structure of these examples allows us to bypass Algorithm \ref{alg:conf} and characterize the asymptotic properties of various proposals for post-FMSC without resorting to Monte Carlo simulations.
Because this section presents asymptotic results, I treat any consistently estimable quantity that appears in a limit distribution as known.

In both the OLS versus IV example from Section \ref{sec:OLSvsIVExample} and the slightly simplified version of the choosing instrument variables example implemented in Section \ref{sec:chooseIVsim}, the post-FMSC estimator $\widehat{\beta}_{FMSC}$ converges to a very convenient limit experiment.\footnote{The simplified version of the choosing instrumental variables example considers a single potentially endogenous instrument and imposes homoskedasticity.
  For more details see Section \ref{sec:chooseIVsim} and Online Appendix \ref{append:limitexperiment}.}
In particular, 
\begin{equation}
  \sqrt{n}(\widehat{\beta}_{FMSC} - \beta) \overset{d}{\rightarrow}  \mathbf{1}\left\{ |T|<\sigma \sqrt{2} \right\} U +  \mathbf{1}\left\{ |T|\geq\sigma \sqrt{2} \right\} V.
  \label{eq:FMSCLimitExperiment}
\end{equation}
with
\begin{equation}
  T = \sigma Z_1 + \tau, \quad
  U = \eta Z_2 + c\tau, \quad
  V = \eta Z_2 - c\sigma Z_1
\end{equation}
where $Z_1, Z_2$ are independent standard normal random variables, $\eta$, $\sigma$ and $c$ are consistently estimable constants, and $\tau$ is the local mis-specification parameter.
This representation allows us to tabulate the asymptotic distribution, $F_{FMSC}$ as follows: 
\begin{eqnarray}
  F_{FMSC}(x) &=& G(x) + H_1(x) + H_2(x) \\
  \label{eq:FFMSC}
  G(x) &=& \Phi\left( \frac{x - c\tau}{\eta} \right)\left[ \Phi( \sqrt{2} - \tau/\sigma) -  \Phi( -\sqrt{2} - \tau/\sigma )\right]\\
  \label{eq:GFMSC}
  H_1(x) &=& \frac{1}{\sigma}\int_{-\infty}^{-\sigma\sqrt{2} - \tau} \Phi\left( \frac{x + ct}{\eta}\right)\varphi(t/\sigma)\; dt\\
  \label{eq:H1FMSC}
  H_2(x) &=& \frac{1}{\sigma}\int^{+\infty}_{\sigma\sqrt{2} - \tau} \Phi\left( \frac{x + ct}{\eta}\right)\varphi(t/\sigma)\; dt
  \label{eq:H2FMSC}
\end{eqnarray}
where $\Phi$ is the CDF and $\varphi$ the pdf of a standard normal random variable.
Note that the limit distribution of the post-FMSC distribution depends on $\tau$ in addition to the consistently estimable quantities $\sigma, \eta, c$ although I suppress this dependence to simplify the notation.
While these expressions lack a closed form $G$, $H_1$ and $H_2$ are easy to compute, allowing us to calculate both $F_{FMSC}$ and the corresponding quantile function $Q_{FMSC}$\footnote{I provide code to evaluate both $F_{FMSC}$ and $Q_{FMSC}$ in my R package \texttt{fmscr}, available at \url{https://github.com/fditraglia/fmscr}.}

The ability to compute $F_{FMSC}$ and $Q_{FMSC}$ allows us to answer a number of important questions about post-FMSC inference.
First, suppose that we were to carry out FMSC selection and then construct a $(1 - \alpha) \times 100\%$ confidence interval \emph{conditional} in the selected estimator, completely ignoring the effects of the moment selection step.
What would be the resulting asymptotic coverage probability and width of such a ``na\"{i}ve'' confidence interval procedure?
Using calculations similar to those used above in the expression for $F_{FMSC}$, we find that the coverage probability of this na\"{i}ve interval is given by
\begin{eqnarray*}
  \mbox{CP}_{Naive}(\alpha) &=& G(u_\alpha) - G(-u_\alpha) +  H_1(\ell_\alpha) - H_2(-\ell_\alpha) +  H_2(\ell_\alpha) - H_2(-\ell_\alpha) \\
  u_\alpha &=& \Phi(1 - \alpha/2)\; \eta\\
  \ell_{\alpha} &=& \Phi(1 - \alpha/2) \sqrt{\eta^2 + c^2\sigma^2}
\end{eqnarray*}
where $G$, $H_1$, $H_2$ are as defined in Equations \ref{eq:GFMSC}--\ref{eq:H2FMSC}.
And since the width of this na\"{i}ve CI equals that of the textbook interval for $\widehat{\beta}$ when $|\widehat{\tau}|<\sigma\sqrt{2}$ and that of the textbook interval for $\widetilde{\beta}$ otherwise, we have
\begin{equation*}
  \frac{E\left[ \mbox{Width}_{Naive}(\alpha) \right]}{\mbox{Width}_{Valid}(\alpha)} = 1 + \left[ \Phi( \sqrt{2} - \tau/\sigma) -  \Phi( -\sqrt{2} - \tau/\sigma )\right]\left( \sqrt{\frac{\eta^2}{\eta^2 + c^2 \sigma^2}} - 1 \right)
\end{equation*}
where $\mbox{Width}_{Valid}(\alpha)$ is the width of a standard, textbook confidence interval for $\widetilde{\beta}$.

To evaluate these expressions we need values for $c, \eta^2, \sigma^2$ and $\tau$.
For the remainder of this section I will consider the parameter values that correspond to the simulation experiments presented below in Section \ref{sec:simulations}.
For the OLS versus TSLS example we have $c=1$, $\eta^2=1$ and $\sigma^2 = (1-\pi^2)/\pi^2$ where $\pi^2$ denotes the population first-stage R-squared for the TSLS estimator. 
For the choosing IVs example we have $c =\gamma/(\gamma^2 +1/9)$, $\eta^2 = 1/(\gamma^2 + 1/9)$ and $\sigma^2 = 1 + 9\gamma^2$ where $\gamma^2$ is the increase in the population first-stage R-squared of the TSLS estimator from \emph{adding} $w$ to the instrument set.\footnote{The population first-stage R-squared with only $\mathbf{z}$ in the instument set is $1/9$.}

Table \ref{tab:LimitNaiveCover} presents the asymptotic coverage probability and Table \ref{tab:LimitNaiveWidth} the expected relative width of the na\"{i}ve confidence interval procedure for a variety of values of $\tau$ and $\alpha$ for each of the two examples.
For the OLS versus TSLS example, I allow $\pi^2$ to vary while for the choosing IVs example I allow $\gamma^2$ to vary.
Note that the relative expected width does not depend on $\alpha$.
In terms of coverage probability, the na\"{i}ve interval performs very poorly: in some regions of the parameter space the actual coverage is very close to the nominal level, while in others it is far lower.
These striking size distortions, which echo the findings of \cite{Guggenberger2010} and \cite{Guggenberger2012}, provide a strong argument against the use of the na\"{i}ve interval.
Its attraction, of course, is width: the na\"{i}ve interval can be dramatically shorter than the corresponding ``textbook'' confidence interval for the valid estimator.

Is there any way to construct a post-FMSC confidence interval that does not suffer from the egregious size distortions of the na\"{i}ve interval but is still shorter than the textbook interval for the valid estimator?
As a first step towards answering this question, Table \ref{tab:WidthInfeasible} presents the relative width of the shortest possible \emph{infeasible} post-FMSC confidence interval constructed directly from $Q_{FMSC}$.
This interval has asymptotic coverage probability \emph{exactly} equal to its nominal level as it correctly accounts for the effect of moment selection on the asymptotic distribution of the estimators.
Unfortunately it cannot be used in practice because it requires knowledge of $\tau$, for which no consistent estimator exists.
As such, this interval serves as a benchmark against which to judge various feasible procedures that do not require knowledge of $\tau$.
For certain parameter values this interval is shorter than the valid interval but the improvement is not uniform and indeed cannot be.
Just as the FMSC itself cannot provide a uniform reduction in AMSE relative to the valid estimator, the infeasible post-FMSC cannot provide a corresponding reduction in width.
In both cases, however, improvements are possible when $\tau$ is expected to be small, the setting in which this paper assumes that an applied researcher finds herself. 
The potential reductions in width can be particularly dramatic for larger values of $\alpha$.
The question remains: is there any way to capture these gains using a \emph{feasible} procedure?

\begin{table}[h]
  \footnotesize
  \centering
  \begin{subtable}{0.48\textwidth}
    \caption{OLS versus TSLS}
    \begin{tabular}{r|rrrrrr}
\hline\hline
 &\multicolumn{6}{c}{$\tau$} \\ 
 $\alpha = 0.05$ & $0$ & $1$ & $2$ & $3$ & $4$ & $5$ \\ 
 \hline$0.1$ & $91$ & $81$ & $57$ & $41$ & $45$ & $58$\\ 
$\pi^2\;\;\;$ $0.2$ & $91$ & $83$ & $63$ & $58$ & $70$ & $84$\\ 
$0.3$ & $92$ & $84$ & $69$ & $73$ & $86$ & $93$\\ 
$0.4$ & $92$ & $85$ & $76$ & $84$ & $93$ & $95$\\ 
 \hline 
 \end{tabular}
 
 \vspace{2em} 
 
\begin{tabular}{r|rrrrrr}
\hline\hline
 &\multicolumn{6}{c}{$\tau$} \\ 
 $\alpha = 0.1$ & $0$ & $1$ & $2$ & $3$ & $4$ & $5$ \\ 
 \hline$0.1$ & $83$ & $70$ & $45$ & $35$ & $42$ & $55$\\ 
$\pi^2\;\;\;$ $0.2$ & $84$ & $72$ & $53$ & $52$ & $67$ & $81$\\ 
$0.3$ & $85$ & $74$ & $60$ & $68$ & $83$ & $89$\\ 
$0.4$ & $86$ & $76$ & $68$ & $80$ & $89$ & $90$\\ 
 \hline 
 \end{tabular}
 
 \vspace{2em} 
 
\begin{tabular}{r|rrrrrr}
\hline\hline
 &\multicolumn{6}{c}{$\tau$} \\ 
 $\alpha = 0.2$ & $0$ & $1$ & $2$ & $3$ & $4$ & $5$ \\ 
 \hline$0.1$ & $70$ & $54$ & $31$ & $27$ & $37$ & $50$\\ 
$\pi^2\;\;\;$ $0.2$ & $71$ & $57$ & $39$ & $45$ & $62$ & $74$\\ 
$0.3$ & $73$ & $59$ & $49$ & $61$ & $75$ & $79$\\ 
$0.4$ & $74$ & $62$ & $58$ & $72$ & $79$ & $80$\\ 
 \hline 
 \end{tabular}
  \end{subtable}
  ~
  \begin{subtable}{0.48\textwidth}
    \caption{Choosing IVs}
    \begin{tabular}{r|rrrrrr}
\hline\hline
 &\multicolumn{6}{c}{$\tau$} \\ 
 $\alpha = 0.05$ & $0$ & $1$ & $2$ & $3$ & $4$ & $5$ \\ 
 \hline$0.1$ & $93$ & $89$ & $84$ & $85$ & $91$ & $94$\\ 
$\gamma^2\;\;\;$ $0.2$ & $92$ & $87$ & $76$ & $74$ & $83$ & $91$\\ 
$0.3$ & $92$ & $85$ & $71$ & $65$ & $74$ & $86$\\ 
$0.4$ & $91$ & $85$ & $68$ & $59$ & $67$ & $80$\\ 
 \hline 
 \end{tabular}
 
 \vspace{2em} 
 
\begin{tabular}{r|rrrrrr}
\hline\hline
 &\multicolumn{6}{c}{$\tau$} \\ 
 $\alpha = 0.1$ & $0$ & $1$ & $2$ & $3$ & $4$ & $5$ \\ 
 \hline$0.1$ & $87$ & $82$ & $76$ & $79$ & $86$ & $89$\\ 
$\gamma^2\;\;\;$ $0.2$ & $85$ & $78$ & $66$ & $67$ & $79$ & $87$\\ 
$0.3$ & $84$ & $76$ & $61$ & $59$ & $71$ & $82$\\ 
$0.4$ & $84$ & $75$ & $57$ & $52$ & $63$ & $77$\\ 
 \hline 
 \end{tabular}
 
 \vspace{2em} 
 
\begin{tabular}{r|rrrrrr}
\hline\hline
 &\multicolumn{6}{c}{$\tau$} \\ 
 $\alpha = 0.2$ & $0$ & $1$ & $2$ & $3$ & $4$ & $5$ \\ 
 \hline$0.1$ & $75$ & $69$ & $64$ & $70$ & $77$ & $80$\\ 
$\gamma^2\;\;\;$ $0.2$ & $73$ & $64$ & $53$ & $59$ & $71$ & $78$\\ 
$0.3$ & $72$ & $62$ & $47$ & $50$ & $64$ & $75$\\ 
$0.4$ & $72$ & $60$ & $43$ & $44$ & $58$ & $71$\\ 
 \hline 
 \end{tabular}
  \end{subtable}
  \caption{Asymptotic coverage probability of Na\"{i}ve $(1-\alpha)\times 100\%$ confidence interval.}
  \label{tab:LimitNaiveCover}
\end{table}

\begin{table}[h]
  \footnotesize
  \centering
  \begin{subtable}{0.48\textwidth}
    \caption{OLS versus TSLS}
    \begin{tabular}{r|rrrrrr}
\hline\hline
 &\multicolumn{6}{c}{$\tau$} \\ 
  & $0$ & $1$ & $2$ & $3$ & $4$ & $5$ \\ 
 \hline$0.1$ & $ 42$ & $ 44$ & $ 48$ & $ 55$ & $ 64$ & $ 73$\\ 
$\pi^2\;\;\;$ $0.2$ & $ 53$ & $ 56$ & $ 64$ & $ 74$ & $ 85$ & $ 92$\\ 
$0.3$ & $ 62$ & $ 66$ & $ 76$ & $ 87$ & $ 95$ & $ 99$\\ 
$0.4$ & $ 69$ & $ 74$ & $ 85$ & $ 94$ & $ 99$ & $100$\\ 
 \hline 
 \end{tabular}
  \end{subtable}
  ~
  \begin{subtable}{0.48\textwidth}
    \caption{Choosing IVs}
    \begin{tabular}{r|rrrrrr}
\hline\hline
 &\multicolumn{6}{c}{$\tau$} \\ 
  & $0$ & $1$ & $2$ & $3$ & $4$ & $5$ \\ 
 \hline$0.1$ & $ 77$ & $ 80$ & $ 87$ & $ 94$ & $ 98$ & $100$\\ 
$\gamma^2\;\;\;$ $0.2$ & $ 66$ & $ 69$ & $ 77$ & $ 86$ & $ 93$ & $ 98$\\ 
$0.3$ & $ 60$ & $ 62$ & $ 69$ & $ 79$ & $ 88$ & $ 94$\\ 
$0.4$ & $ 55$ & $ 57$ & $ 64$ & $ 73$ & $ 83$ & $ 90$\\ 
 \hline 
 \end{tabular}
  \end{subtable}
  \caption{Asymptotic expected width of na\"{i}ve confidence interval relative to that of the valid estimator. Values are given in percentage points.}
  \label{tab:LimitNaiveWidth}
\end{table}

\begin{table}[h]
  \footnotesize
  \centering
  \begin{subtable}{0.48\textwidth}
    \caption{OLS versus TSLS}
    \begin{tabular}{r|rrrrrr}
\hline\hline
 &\multicolumn{6}{c}{$\tau$} \\ 
 $\alpha = 0.05$ & $0$ & $1$ & $2$ & $3$ & $4$ & $5$ \\ 
 \hline$0.1$ & $ 99$ & $ 92$ & $ 85$ & $ 89$ & $ 95$ & $101$\\ 
$\pi^2\;\;\;$ $0.2$ & $ 97$ & $ 91$ & $ 94$ & $102$ & $110$ & $117$\\ 
$0.3$ & $ 94$ & $ 94$ & $102$ & $111$ & $117$ & $109$\\ 
$0.4$ & $ 92$ & $ 97$ & $107$ & $114$ & $107$ & $100$\\ 
 \hline 
 \end{tabular}
 
 \vspace{2em} 
 
\begin{tabular}{r|rrrrrr}
\hline\hline
 &\multicolumn{6}{c}{$\tau$} \\ 
 $\alpha = 0.1$ & $0$ & $1$ & $2$ & $3$ & $4$ & $5$ \\ 
 \hline$0.1$ & $ 88$ & $ 81$ & $ 85$ & $ 91$ & $ 99$ & $107$\\ 
$\pi^2\;\;\;$ $0.2$ & $ 89$ & $ 88$ & $ 97$ & $107$ & $116$ & $123$\\ 
$0.3$ & $ 86$ & $ 93$ & $105$ & $115$ & $119$ & $103$\\ 
$0.4$ & $ 87$ & $ 98$ & $111$ & $116$ & $104$ & $100$\\ 
 \hline 
 \end{tabular}
 
 \vspace{2em} 
 
\begin{tabular}{r|rrrrrr}
\hline\hline
 &\multicolumn{6}{c}{$\tau$} \\ 
 $\alpha = 0.2$ & $0$ & $1$ & $2$ & $3$ & $4$ & $5$ \\ 
 \hline$0.1$ & $ 48$ & $ 55$ & $ 84$ & $ 96$ & $106$ & $116$\\ 
$\pi^2\;\;\;$ $0.2$ & $ 65$ & $ 80$ & $101$ & $114$ & $125$ & $117$\\ 
$0.3$ & $ 74$ & $ 90$ & $111$ & $123$ & $112$ & $101$\\ 
$0.4$ & $ 80$ & $ 97$ & $116$ & $115$ & $102$ & $100$\\ 
 \hline 
 \end{tabular}
  \end{subtable}
  ~
  \begin{subtable}{0.48\textwidth}
    \caption{Choosing IVs}
    \begin{tabular}{r|rrrrrr}
\hline\hline
 &\multicolumn{6}{c}{$\tau$} \\ 
 $\alpha = 0.05$ & $0$ & $1$ & $2$ & $3$ & $4$ & $5$ \\ 
 \hline$0.1$ & $ 92$ & $ 97$ & $106$ & $111$ & $109$ & $102$\\ 
$\gamma^2\;\;\;$ $0.2$ & $ 93$ & $ 94$ & $101$ & $109$ & $115$ & $114$\\ 
$0.3$ & $ 95$ & $ 93$ & $ 97$ & $105$ & $112$ & $117$\\ 
$0.4$ & $ 97$ & $ 92$ & $ 94$ & $101$ & $108$ & $115$\\ 
 \hline 
 \end{tabular}
 
 \vspace{2em} 
 
\begin{tabular}{r|rrrrrr}
\hline\hline
 &\multicolumn{6}{c}{$\tau$} \\ 
 $\alpha = 0.1$ & $0$ & $1$ & $2$ & $3$ & $4$ & $5$ \\ 
 \hline$0.1$ & $ 89$ & $ 97$ & $108$ & $113$ & $108$ & $101$\\ 
$\gamma^2\;\;\;$ $0.2$ & $ 86$ & $ 93$ & $104$ & $113$ & $118$ & $109$\\ 
$0.3$ & $ 86$ & $ 90$ & $100$ & $109$ & $117$ & $121$\\ 
$0.4$ & $ 88$ & $ 88$ & $ 96$ & $105$ & $114$ & $121$\\ 
 \hline 
 \end{tabular}
 
 \vspace{2em} 
 
\begin{tabular}{r|rrrrrr}
\hline\hline
 &\multicolumn{6}{c}{$\tau$} \\ 
 $\alpha = 0.2$ & $0$ & $1$ & $2$ & $3$ & $4$ & $5$ \\ 
 \hline$0.1$ & $ 86$ & $ 96$ & $111$ & $115$ & $105$ & $101$\\ 
$\gamma^2\;\;\;$ $0.2$ & $ 78$ & $ 89$ & $108$ & $119$ & $118$ & $104$\\ 
$0.3$ & $ 72$ & $ 84$ & $103$ & $116$ & $125$ & $112$\\ 
$0.4$ & $ 67$ & $ 79$ & $ 99$ & $112$ & $123$ & $128$\\ 
 \hline 
 \end{tabular}
  \end{subtable}
  \caption{Width of shortest possible $(1-\alpha)\times 100\%$ post-FMSC confidence interval constructed directly from $Q_{FMSC}$ using knowledge of $\tau$. Values are given in percentage points.}
  \label{tab:WidthInfeasible}
\end{table}

Now, consider the 2-Step confidence interval procedure from Algorithm \ref{alg:conf}.
We can implement an equivalent procedure without simulation as follows.
First we construct a $(1-\alpha_1)\times 100\%$ confidence interval for $\widehat{\tau}$ using $T = \sigma Z_1 + \tau$ where $Z_1$ is standard normal.
Next we construct a $(1-\alpha_2)\times 100\%$ based on $Q_{FMSC}$ for each $\tau^*$ in this interval.
Finally we take the upper and lower bounds over all of the resulting intervals.
This interval is guaranteed to have asymptotic coverage probability of at least $1 - (\alpha_1 + \alpha_2)$ by an argument essentially identical to the proof of Theorem \ref{thm:sim}.
Protection against under-coverage, however, comes at the expense of extreme conservatism, particularly for larger values of $\alpha$.
Numerical values for the coverage and median expected with of this interval appear in Online Appendix \ref{append:limitexperiment_2step}. 
From both the numerical calculations and the theoretical result given in Theorem \ref{thm:sim} we see that the 2-Step systematically \emph{over-covers} and hence \emph{cannot} produce an interval shorter than the textbook CI for the valid estimator.

Now consider the 1-Step confidence interval from Algorithm \ref{alg:1step}. 
Rather than first constructing a confidence region for $\tau$ and then taking upper and lower bounds, 1-Step interval simply takes $\widehat{\tau}$ in place of $\tau$ and then constructs a confidence interval from $Q_{FMSC}$ exactly as in the infeasible interval described above.\footnote{As in the construction of the na\"{i}ve interval, I take the shortest possible interval based on $Q_{FMSC}$ rather than an equal-tailed interval. Additional results for an equal-tailed version of this one-step procedure are available upon request. Their performance is similar.}
Unlike its 2-Step counterpart, this interval comes with no generic theoretical guarantees, so I use the characterization from above to directly calculate its asymptotic coverage and expected relative width.
The results appear in Tables \ref{tab:Limit1StepShortOLSvsIV} and \ref{tab:Limit1StepShortChooseIVs}.
The 1-Step interval effectively ``splits the difference'' between the two-step interval and the na\"{i}ve procedure. 
While it can under-cover, the size distortions are quite small, particularly for $\alpha=0.1$ and $0.05$.
At the same time, when $\tau$ is relatively small this procedure can yield shorter intervals.
While a full investigation of this phenomenon is beyond the scope of the present paper, these calculations suggest a plausible way forward for post-FMSC inference that is less conservative than the two-step procedure from Algorithm \ref{alg:conf} by directly calculating the relevant quantities from the limit distribution of interest.
This is possible because $\pi$ and $\gamma^2$ are both consistently estimable.
And for any particular value of these parameters, the worst-case value of $\tau$ is \emph{interior}.
Using this idea, one could imagine specifying a maximum allowable size distortion and then designing a confidence interval to minimize width, possibly incorporating some prior restriction on the likely magnitude of $\tau$.
Just as the FMSC aims to achieve a favorable trade-off between bias and variance, such a confidence interval procedure could aim to achieve a favorable trade-off between width and coverage.
It would also be interesting to pursue analogous calculations for the minimum AMSE averaging estimator from Section \ref{sec:momentavgexample}.

\begin{table}[h]
  \footnotesize
  \centering
  \begin{subtable}{0.48\textwidth}
    \caption{Coverage Probability}
    \begin{tabular}{r|rrrrrr}
\hline\hline
 &\multicolumn{6}{c}{$\tau$} \\ 
 $\alpha = 0.05$ & $0$ & $1$ & $2$ & $3$ & $4$ & $5$ \\ 
 \hline$0.1$ & $93$ & $94$ & $95$ & $94$ & $91$ & $90$\\ 
$\pi^2\;\;\;$ $0.2$ & $95$ & $95$ & $95$ & $93$ & $91$ & $91$\\ 
$0.3$ & $95$ & $96$ & $94$ & $92$ & $92$ & $94$\\ 
$0.4$ & $96$ & $95$ & $94$ & $93$ & $95$ & $95$\\ 
 \hline 
 \end{tabular}
 
 \vspace{2em} 
 
\begin{tabular}{r|rrrrrr}
\hline\hline
 &\multicolumn{6}{c}{$\tau$} \\ 
 $\alpha = 0.1$ & $0$ & $1$ & $2$ & $3$ & $4$ & $5$ \\ 
 \hline$0.1$ & $89$ & $89$ & $88$ & $86$ & $82$ & $80$\\ 
$\pi^2\;\;\;$ $0.2$ & $91$ & $91$ & $88$ & $85$ & $83$ & $85$\\ 
$0.3$ & $92$ & $91$ & $87$ & $85$ & $87$ & $90$\\ 
$0.4$ & $92$ & $90$ & $87$ & $87$ & $90$ & $91$\\ 
 \hline 
 \end{tabular}
 
 \vspace{2em} 
 
\begin{tabular}{r|rrrrrr}
\hline\hline
 &\multicolumn{6}{c}{$\tau$} \\ 
 $\alpha = 0.2$ & $0$ & $1$ & $2$ & $3$ & $4$ & $5$ \\ 
 \hline$0.1$ & $84$ & $80$ & $71$ & $67$ & $65$ & $64$\\ 
$\pi^2\;\;\;$ $0.2$ & $85$ & $80$ & $71$ & $70$ & $70$ & $76$\\ 
$0.3$ & $84$ & $79$ & $73$ & $72$ & $78$ & $81$\\ 
$0.4$ & $84$ & $79$ & $74$ & $77$ & $81$ & $81$\\ 
 \hline 
 \end{tabular}
  \end{subtable}
  ~
  \begin{subtable}{0.48\textwidth}
    \caption{Relative Width}
    \begin{tabular}{r|rrrrrr}
\hline\hline
 &\multicolumn{6}{c}{$\tau$} \\ 
 $\alpha = 0.05$ & $0$ & $1$ & $2$ & $3$ & $4$ & $5$ \\ 
 \hline$0.1$ & $ 93$ & $ 93$ & $ 95$ & $ 97$ & $ 99$ & $102$\\ 
$\pi^2\;\;\;$ $0.2$ & $ 96$ & $ 97$ & $ 99$ & $104$ & $106$ & $109$\\ 
$0.3$ & $ 97$ & $ 99$ & $102$ & $106$ & $108$ & $107$\\ 
$0.4$ & $ 98$ & $100$ & $105$ & $108$ & $106$ & $103$\\ 
 \hline 
 \end{tabular}
 
 \vspace{2em} 
 
\begin{tabular}{r|rrrrrr}
\hline\hline
 &\multicolumn{6}{c}{$\tau$} \\ 
 $\alpha = 0.1$ & $0$ & $1$ & $2$ & $3$ & $4$ & $5$ \\ 
 \hline$0.1$ & $ 90$ & $ 91$ & $ 92$ & $ 97$ & $ 99$ & $102$\\ 
$\pi^2\;\;\;$ $0.2$ & $ 94$ & $ 96$ & $100$ & $105$ & $108$ & $110$\\ 
$0.3$ & $ 96$ & $100$ & $104$ & $108$ & $109$ & $106$\\ 
$0.4$ & $ 97$ & $101$ & $106$ & $108$ & $106$ & $103$\\ 
 \hline 
 \end{tabular}
 
 \vspace{2em} 
 
\begin{tabular}{r|rrrrrr}
\hline\hline
 &\multicolumn{6}{c}{$\tau$} \\ 
 $\alpha = 0.2$ & $0$ & $1$ & $2$ & $3$ & $4$ & $5$ \\ 
 \hline$0.1$ & $ 83$ & $ 84$ & $ 87$ & $ 93$ & $ 99$ & $103$\\ 
$\pi^2\;\;\;$ $0.2$ & $ 91$ & $ 92$ & $ 96$ & $105$ & $109$ & $110$\\ 
$0.3$ & $ 93$ & $ 97$ & $104$ & $109$ & $108$ & $106$\\ 
$0.4$ & $ 95$ & $100$ & $107$ & $108$ & $105$ & $102$\\ 
 \hline 
 \end{tabular}
  \end{subtable}
  \caption{OLS vs TSLS Example: shortest 1-Step CI} 
  \label{tab:Limit1StepShortOLSvsIV}
\end{table}

\begin{table}[h]
  \footnotesize
  \centering
  \begin{subtable}{0.48\textwidth}
    \caption{Coverage Probability}
    \begin{tabular}{r|rrrrrr}
\hline\hline
 &\multicolumn{6}{c}{$\tau$} \\ 
 $\alpha = 0.05$ & $0$ & $1$ & $2$ & $3$ & $4$ & $5$ \\ 
 \hline$0.1$ & $96$ & $95$ & $94$ & $93$ & $94$ & $95$\\ 
$\gamma^2\;\;\;$ $0.2$ & $96$ & $96$ & $95$ & $93$ & $93$ & $94$\\ 
$0.3$ & $95$ & $95$ & $95$ & $93$ & $92$ & $92$\\ 
$0.4$ & $95$ & $95$ & $95$ & $94$ & $92$ & $91$\\ 
 \hline 
 \end{tabular}
 
 \vspace{2em} 
 
\begin{tabular}{r|rrrrrr}
\hline\hline
 &\multicolumn{6}{c}{$\tau$} \\ 
 $\alpha = 0.1$ & $0$ & $1$ & $2$ & $3$ & $4$ & $5$ \\ 
 \hline$0.1$ & $92$ & $90$ & $88$ & $88$ & $89$ & $91$\\ 
$\gamma^2\;\;\;$ $0.2$ & $92$ & $91$ & $88$ & $86$ & $86$ & $89$\\ 
$0.3$ & $92$ & $91$ & $89$ & $86$ & $85$ & $87$\\ 
$0.4$ & $91$ & $91$ & $89$ & $86$ & $84$ & $84$\\ 
 \hline 
 \end{tabular}
 
 \vspace{2em} 
 
\begin{tabular}{r|rrrrrr}
\hline\hline
 &\multicolumn{6}{c}{$\tau$} \\ 
 $\alpha = 0.2$ & $0$ & $1$ & $2$ & $3$ & $4$ & $5$ \\ 
 \hline$0.1$ & $83$ & $80$ & $76$ & $77$ & $80$ & $81$\\ 
$\gamma^2\;\;\;$ $0.2$ & $84$ & $80$ & $75$ & $73$ & $76$ & $80$\\ 
$0.3$ & $85$ & $81$ & $75$ & $71$ & $72$ & $77$\\ 
$0.4$ & $84$ & $80$ & $73$ & $72$ & $69$ & $74$\\ 
 \hline 
 \end{tabular}
  \end{subtable}
  ~
  \begin{subtable}{0.48\textwidth}
    \caption{Relative Width}
    \begin{tabular}{r|rrrrrr}
\hline\hline
 &\multicolumn{6}{c}{$\tau$} \\ 
 $\alpha = 0.05$ & $0$ & $1$ & $2$ & $3$ & $4$ & $5$ \\ 
 \hline$0.1$ & $ 98$ & $100$ & $104$ & $106$ & $106$ & $104$\\ 
$\gamma^2\;\;\;$ $0.2$ & $ 97$ & $ 99$ & $103$ & $106$ & $108$ & $108$\\ 
$0.3$ & $ 97$ & $ 98$ & $101$ & $104$ & $107$ & $109$\\ 
$0.4$ & $ 97$ & $ 97$ & $ 99$ & $103$ & $106$ & $108$\\ 
 \hline 
 \end{tabular}
 
 \vspace{2em} 
 
\begin{tabular}{r|rrrrrr}
\hline\hline
 &\multicolumn{6}{c}{$\tau$} \\ 
 $\alpha = 0.1$ & $0$ & $1$ & $2$ & $3$ & $4$ & $5$ \\ 
 \hline$0.1$ & $ 98$ & $100$ & $104$ & $107$ & $106$ & $104$\\ 
$\gamma^2\;\;\;$ $0.2$ & $ 97$ & $ 98$ & $103$ & $107$ & $109$ & $107$\\ 
$0.3$ & $ 96$ & $ 97$ & $101$ & $105$ & $108$ & $109$\\ 
$0.4$ & $ 95$ & $ 96$ & $100$ & $103$ & $107$ & $109$\\ 
 \hline 
 \end{tabular}
 
 \vspace{2em} 
 
\begin{tabular}{r|rrrrrr}
\hline\hline
 &\multicolumn{6}{c}{$\tau$} \\ 
 $\alpha = 0.2$ & $0$ & $1$ & $2$ & $3$ & $4$ & $5$ \\ 
 \hline$0.1$ & $ 98$ & $100$ & $105$ & $107$ & $106$ & $103$\\ 
$\gamma^2\;\;\;$ $0.2$ & $ 94$ & $ 97$ & $104$ & $108$ & $109$ & $107$\\ 
$0.3$ & $ 93$ & $ 96$ & $101$ & $106$ & $109$ & $109$\\ 
$0.4$ & $ 89$ & $ 93$ & $ 97$ & $105$ & $108$ & $110$\\ 
 \hline 
 \end{tabular}
  \end{subtable}
  \caption{Choosing IVs Example: shortest 1-Step CI.} 
  \label{tab:Limit1StepShortChooseIVs}
\end{table}

\section{Simulation Results}
\label{sec:simulations}
\subsection{OLS versus TSLS Example}
\label{sec:OLSvsIVsim}
I begin by examining the performance of the FMSC and averaging estimator in the OLS versus TSLS example.
All calculations in this section are based on the formulas from Sections \ref{sec:OLSvsIVExample} and \ref{sec:momentavgexample} with 10,000 simulation replications. 
The data generating process is given by 
\begin{eqnarray}
	y_i &=& 0.5 x_i + \epsilon_i\\
	\label{eq:OLSvsIVDGP1}
	x_i &=& \pi(z_{1i} + z_{2i} + z_{3i}) + v_i
	\label{eq:OLSvsIVDGP2}
\end{eqnarray}
with $(\epsilon_i, v_i, z_{1i}, z_{2i}, z_{3i}) \sim \mbox{ iid } N(0, \mathcal{S})$
\begin{equation}
	\mathcal{S} = \left[ \begin{array}
		{cc} \mathcal{S}_1 & 0 \\ 0 & \mathcal{S}_2
	\end{array} \right], \quad 
	\mathcal{S}_1 = \left[ \begin{array}
		{cc} 1 & \rho \\ \rho & 1 - \pi^2 
	\end{array} \right], \quad \mathcal{S}_2 = I_3 / 3
	\label{eq:OLSvsIVDGP3}
\end{equation}
for $i= 1, \hdots, N$ where $N$, $\rho$ and $\pi$ vary over a grid.
The goal is to estimate the effect of $x$ on $y$, in this case 0.5, with minimum MSE either by choosing between OLS and TSLS estimators or by averaging them.
To ensure that the finite-sample MSE of the TSLS estimator exists, this DGP includes three instruments leading to two overidentifying restrictions \citep{Phillips1980}.\footnote{Alternatively, one could use fewer instruments in the DGP and work with trimmed MSE, as described in Online Appendix \ref{append:trim}.}
This design satisfies regularity conditions that are sufficient for Theorem \ref{thm:OLSvsIV} -- in particular it satisfies Assumption \ref{assump:OLSvsIV} from Online Appendix \ref{sec:sufficient_conditions} -- and keeps the variance of $x$ fixed at one so that $\pi = Cor(x_i, z_{1i} + z_{2i} + z_{3i})$ and $\rho = Cor(x_i,\epsilon_i)$.
The first-stage R-squared is simply $1 - \sigma_v^2/\sigma_x^2 = \pi^2$ so that larger values of $|\pi|$ \emph{reduce} the variance of the TSLS estimator.
Since $\rho$ controls the endogeneity of $x$, larger values of $|\rho|$ \emph{increase} the bias of the OLS estimator.

Figure \ref{fig:OLSvsIV_RMSEbaseline} compares the root mean-squared error (RMSE) of the post-FMSC estimator to those of the OLS and TSLS estimators.\footnote{Note that, while the first two moments of the TSLS estimator exist in this simulation design, none of its higher moments do.
This can be seen from the simulation results: even with 10,000 replications, the RMSE of the TSLS estimator shows a noticeable degree of simulation error.}
For any values of $N$ and $\pi$ there is a value of $\rho$ below which OLS outperforms TSLS: as $N$ and $\pi$ increase this value approaches zero; as they decrease it approaches one.
In practice, of course, $\rho$ in unknown so we cannot tell which of OLS and TSLS is to be preferred \emph{a priori}.
If we make it our policy to always use TSLS we will protect ourselves against bias at the potential cost of very high variance.
If, on the other hand, we make it our policy to always use OLS then we protect ourselves against high variance at the potential cost of severe bias. 
FMSC represents a compromise between these two extremes that does not require advance knowledge of $\rho$. 
When the RMSE of TSLS is high, the FMSC behaves more like OLS; when the RMSE of OLS is high it behaves more like TSLS.
Because the FMSC is itself a random variable, however, it sometimes makes moment selection mistakes.\footnote{For more discussion of this point, see Section \ref{sec:avg}.} 
For this reason it does not attain an RMSE equal to the lower envelope of the OLS and TSLS estimators.
The larger the RMSE difference between OLS and TSLS, however, the closer the FMSC comes to this lower envelope: costly mistakes are rare.

\begin{figure}
\centering
	\input{./SimulationOLSvsIV/Results/RMSE_coarse_pi_baseline.tex}
	\caption{RMSE values for the two-stage least squares (TSLS) estimator, the ordinary least squares (OLS) estimator, and the post-Focused Moment Selection Criterion (FMSC) estimator based on 10,000 simulation draws from the DGP given in Equations \ref{eq:OLSvsIVDGP1}--\ref{eq:OLSvsIVDGP3} using the formulas described in Section \ref{sec:OLSvsIVExample}.}
	\label{fig:OLSvsIV_RMSEbaseline}
\end{figure}

As shown above, the FMSC takes a very special form in this example: it is equivalent to a DHW test with $\alpha \approx 0.16$.
Accordingly, Figure \ref{fig:OLSvsIV_AVG} compares the RMSE of the post-FMSC estimator to those of DHW pre-test estimators with significance levels $\alpha = 0.05$ and $\alpha = 0.1$, indicated in the legend by DHW95 and DHW90.
Since these three procedures differ only in their critical values, they show similar qualitative behavior.
When $\rho$ is sufficiently close to zero, we saw from Figure \ref{fig:OLSvsIV_RMSEbaseline} that OLS has a lower RMSE than TSLS.
Since DHW95 and DHW90 require a higher burden of proof to reject OLS in favor of TSLS, they outperform FMSC in this region of the parameter space.
When $\rho$ crosses the threshold beyond which TSLS has a lower RMSE than OLS, the tables are turned: FMSC outperforms DHW95 and DHW90.
As $\rho$ increases further, relative to sample size and $\pi$, the three procedures become indistinguishable in terms of RMSE.
In addition to comparing the FMSC to DHW pre-test estimators, Figure \ref{fig:OLSvsIV_AVG} also presents the finite-sample RMSE of the minimum-AMSE moment average estimator presented in Equations \ref{eq:OLSvsIV_AVG1} and \ref{eq:OLSvsIV_AVG2}.
The performance of the moment average estimator is very strong: it provides the lowest worst-case RMSE and improves uniformly on the FMSC for all but the largest values of $\rho$.

\begin{figure}
\centering
	\input{./SimulationOLSvsIV/Results/RMSE_coarse_pi_relative_all.tex}
	\caption{RMSE values for the post-Focused Moment Selection Criterion (FMSC) estimator, Durbin-Hausman-Wu pre-test estimators with $\alpha = 0.1$ (DWH90) and $\alpha = 0.05$ (DHW95), and the minmum-AMSE averaging estimator, based on 10,000 simulation draws from the DGP given in Equations \ref{eq:OLSvsIVDGP1}--\ref{eq:OLSvsIVDGP3} using the formulas described in Sections \ref{sec:OLSvsIVExample} and \ref{sec:momentavgexample}.}
	\label{fig:OLSvsIV_AVG}
\end{figure}

Because this example involves a scalar target parameter, no selection or averaging scheme can provide a \emph{uniform} improvement over the minimax estimator, namely TSLS. 
But the cost of protection against the worst case is extremely poor performance when $\pi$ and $N$ are small.
When this is the case, there is a strong argument for preferring the FMSC or minimum-AMSE estimator: we can reap the benefits of OLS when $\rho$ is small without risking the extremely large biases that could result if $\rho$ is in fact large.

Further simulation results for $\pi \in \left\{ 0.01, 0.05, 0.1 \right\}$ appear in Online Appendix \ref{sec:appendWeak}.
For these parameter values the TSLS estimator suffers from a weak instrument problem leading the FMSC to substantially outperform the TSLS estimator.
See Online Appendix \ref{sec:appendWeak} for a more detailed discussion.

\subsection{Choosing Instrumental Variables Example}
\label{sec:chooseIVsim}
I now evaluate the performance of FMSC in the instrument selection example described in Section \ref{sec:chooseIVexample} using the following simulation design:
\begin{eqnarray}
		y_i &=& 0.5 x_i + \epsilon_i\\ 
		\label{eq:chooseIVDGP1}
		x_i &=& (z_{1i} + z_{2i} + z_{3i}) /3 + \gamma w_i + v_i 
		\label{eq:chooseIVDGP2}
	\end{eqnarray}
for $i=1, 2, \hdots, N$ where $(\epsilon_i, v_i, w_i, z_{i1}, z_{2i}, z_{3i})' \sim \mbox{ iid  } N(0,\mathcal{V})$ with

\begin{equation}	
	\mathcal{V} = \left[  \begin{array}
		{cc} \mathcal{V}_1 & 0 \\ 0 & \mathcal{V}_2
	\end{array}\right], \quad
	\mathcal{V}_1 = \left[ \begin{array}
		{ccc} 
		1 & (0.5 - \gamma \rho) & \rho \\
		(0.5 - \gamma \rho) & (8/9 - \gamma^2) & 0\\ 
		\rho & 0 & 1 \\ 
	\end{array} \right], \quad \mathcal{V}_2 = I_3 / 3
	\label{eq:chooseIVDGP3}
\end{equation}
This setup keeps the variance of $x$ fixed at one and the endogeneity of $x$, $Cor(x, \epsilon)$, fixed at $0.5$ while allowing the relevance, $\gamma = Cor(x,w)$, and endogeneity, $\rho = Cor(w, \epsilon)$, of the instrument $w$ to vary.
The instruments $z_1, z_2, z_3$ are valid and exogenous: they have first-stage coefficients of $1/3$ and are uncorrelated with the second stage error $\epsilon$.
The additional instrument $w$ is only relevant if $\gamma \neq 0$ and is only exogenous if $\rho = 0$.
Since $x$ has unit variance, the first-stage R-squared for this simulation design is simply $1 - \sigma_v^2 = 1/9 + \gamma^2$.
Hence, when  $\gamma = 0$, so that $w$ is irrelevant, the first-stage R-squared is just over 0.11.
Increasing $\gamma$ increases the R-squared of the first-stage.
This design satisfies the sufficient conditions for Theorem \ref{thm:chooseIV} given in Assumption \ref{assump:chooseIV} from Online Appendix \ref{sec:sufficient_conditions}.
When $\gamma = 0$, it is a special case of the DGP from Section \ref{sec:OLSvsIVsim}.

As in Section \ref{sec:OLSvsIVsim}, the goal of moment selection in this exercise is to estimate the effect of $x$ on $y$, as before 0.5, with minimum MSE.
In this case, however, the choice is between two TSLS estimators rather than OLS and TSLS: the \emph{valid} estimator uses only $z_1, z_2,$ and $z_3$ as instruments, while the \emph{full} estimator uses $z_1, z_2, z_3,$ and $w$.
The inclusion of $z_1, z_2$ and $z_3$ in both moment sets means that the order of over-identification is two for the valid estimator and three for the full estimator. 
Because the moments of the TSLS estimator only exist up to the order of over-identification \citep{Phillips1980}, this ensures that the small-sample MSE is well-defined.\footnote{Alternatively, one could use fewer instruments for the valid estimator and compare the results using \emph{trimmed} MSE. For details, see Online Appendix \ref{append:trim}.}
All estimators in this section are calculated via TSLS without a constant term using the expressions from Section \ref{sec:chooseIVexample} and 20,000 simulation replications.
 
Figure \ref{fig:chooseIVsim_RMSEbaseline} presents RMSE values for the valid estimator, the full estimator, and the post-FMSC estimator for various combinations of $\gamma$, $\rho$, and $N$.
The results are broadly similar to those from the OLS versus TSLS example presented in Figure \ref{fig:OLSvsIV_RMSEbaseline}.
For any combination $(\gamma,N)$ there is a positive value of $\rho$ below which the full estimator yields a lower RMSE than the full estimator.
As the sample size increases, this cutoff becomes smaller; as $\gamma$ increases, it becomes larger.
As in the OLS versus TSLS example, the post-FMSC estimator represents a compromise between the two estimators over which the FMSC selects.
Unlike in the previous example, however, when $N$ is sufficiently small there is a range of values for $\rho$ within which the FMSC yields a lower RMSE than \emph{both} the valid and full estimators.
This comes from the fact that the valid estimator is quite erratic for small sample sizes.
Such behavior is unsurprising given that its first stage is not especially strong, $\mbox{R-squared}\approx 11\%$, and it has only two moments.
In contrast, the full estimator has three moments and a stronger first stage.
As in the OLS versus TSLS example, the post-FMSC estimator does not uniformly outperform the valid estimator for all parameter values, although it does for smaller sample sizes.
The FMSC never performs much worse than the valid estimator, however, and often performs substantially better, particularly for small sample sizes.
\begin{figure}
\centering
	\input{./SimulationChooseIVs/Results/RMSE_coarse_gamma_baseline.tex}
	\caption{RMSE values for the valid estimator, including only $(z_1, z_2, z_3)$, the full estimator, including $(z_1, z_2, z_3, w)$, and the post-Focused Moment Selection Criterion (FMSC) estimator based on 20,000 simulation draws from the DGP given in Equations \ref{eq:chooseIVDGP1}--\ref{eq:chooseIVDGP3} using the formulas described in Section \ref{sec:chooseIVexample}.}
	\label{fig:chooseIVsim_RMSEbaseline}
\end{figure}

I now compare the FMSC to the GMM moment selection criteria of \cite{Andrews1999}, which take the form $MSC(S) = J_n(S) - h(|S|)\kappa_n$, where $J_n(S)$ is the $J$-test statistic under moment set $S$ and $-h(|S|)\kappa_n$ is a ``bonus term'' that rewards the inclusion of more moment conditions.
For each member of this family we choose the moment set that \emph{minimizes} $MSC(S)$. 
If we take $h(|S|) = (p + |S| - r)$, then $\kappa_n = \log{n}$ gives a GMM analogue of Schwarz's Bayesian Information Criterion (GMM-BIC) while $\kappa_n = 2.01 \log{\log{n}}$ gives an analogue of the Hannan-Quinn Information Criterion (GMM-HQ), and $\kappa_n = 2$ gives an analogue of Akaike's Information Criterion (GMM-AIC). 
Like the maximum likelihood model selection criteria upon which they are based, the GMM-BIC and GMM-HQ are consistent provided that Assumption \ref{assump:Andrews} holds, while the GMM-AIC, like the FMSC, is conservative.
Figure \ref{fig:chooseIVsim_RMSErelMSC} gives the RMSE values for the post-FMSC estimator alongside those of the post-GMM-BIC, HQ and AIC estimators.
I calculate the $J$-test statistic using a centered covariance matrix estimator, following the recommendation of \cite{Andrews1999}.
For small sample sizes, the GMM-BIC, AIC and HQ are quite erratic: indded for $N = 50$ the FMSC has a uniformly smaller RMSE.
This problem comes from the fact that the $J$-test statistic can be very badly behaved in small samples.\footnote{For more details, see Online Appendix \ref{sec:downwardJ}.}
As the sample size becomes larger, the classic tradeoff between consistent and conservative selection emerges.
For the smallest values of $\rho$ the consistent criteria outperform the conservative criteria; for moderate values the situation is reversed.
The consistent criteria, however, have the highest worst-case RMSE.
For a discussion of a combined strategy based on the GMM information criteria of \cite{Andrews1999} and the canonical correlations information criteria of \cite{HallPeixe2003}, see Online Appendix \ref{sec:CCIC}.
For a comparison with the downward $J$-test, see Online Appendix \ref{sec:downwardJ}.
Online Appendix \ref{sec:appendWeak} presents results for a modified simulation experiment in which the valid estimator suffers from a weak instrument problem. 
The FMSC performs very well in this case.

\begin{figure}
\centering
	\input{./SimulationChooseIVs/Results/RMSE_coarse_gamma_rel_MSC.tex}
	\caption{RMSE values for the post-Focused Moment Selection Criterion (FMSC) estimator and the GMM-BIC, HQ, and AIC estimators based on 20,000 simulation draws from the DGP given in Equations \ref{eq:chooseIVDGP1}--\ref{eq:chooseIVDGP3} using the formulas described in Section \ref{sec:chooseIVexample}.}
	\label{fig:chooseIVsim_RMSErelMSC}
\end{figure}

\subsection{Confidence Interval Simulations}
\label{sec:CIsim}
I now revisit the simulation experiments introduced above in Sections \ref{sec:OLSvsIVsim} and \ref{sec:chooseIVsim} to evaluate the finite-sample performance of the confidence intervals whose asymptotic performance was studied in Section \ref{sec:limitexperiment} above.
All results in this section are based on 1000 simulation replications from the relevant DGP.
Coverage probabilities and relative widths are all given in percentage points, rounded to the nearest whole percent.
In the interest of brevity I present only results for $N=100$.
Likewise, for the two-step confidence intervals I present results only for $\alpha_1 = \alpha/4, \alpha_2 = 3\alpha/4$.
Simulation results for $N=50$ and $500$ and other configurations of $\alpha_1,\alpha_2$ are available upon request.
Taking $N=100$ has the advantage of making the tables in this section directly comparable to those of Section \ref{sec:limitexperiment}.
Because I set $\sigma_x^2 = \sigma_\epsilon^2 = 1$ in both simulation experiments, this implies that $\sqrt{N}\rho = \sqrt{N} Cor(x_i,\epsilon_i) = \tau$ in the OLS versus TSLS example and $\sqrt{N}\rho = \sqrt{N} Cor(w_i, \epsilon_i) = \tau$ in the choosing IVs example. 
Thus when $N = 100$, taking $\rho \in \{0, 0.1, \dots, 0.5\}$ is the finite-sample analogue of $\tau \in \{0, 1, \dots, 5\}$.

To begin, Tables \ref{tab:CISim100Naive_OLSvsIV} and \ref{tab:CISim100Naive_ChooseIVs} present the coverage probability and average relative width of a na\"{i}ve confidence interval that ignores the effects of moment selection on inference.
These are the finite-sample analogues of Tables \ref{tab:LimitNaiveCover} and \ref{tab:LimitNaiveWidth}.
For the OLS versus IV example, expected relative width is calculated relative to a textbook confidence interval for the TSLS estimator while for the choosing IVs example it is calculated relative to a textbook confidence interval for the valid estimator that excludes $w$ from the instrument set. 
As in the asymptotic calculations presented above, we find that the na\"{i}ve procedure suffers from severe size distortions but results in much shorter intervals.
Results for the 2-Step confidence interval appear in Online Appendix \ref{append:conf_sim}.
With a small allowance for sampling variability, we see that the 2-step intervals indeed provide uniform coverage no lower than their nominal level but result in wider intervals than simply using TSLS or the valid estimator, respectively.
Tables \ref{tab:CISim100_1stepShort_OLSvsIV} and \ref{tab:CISim100_1stepShort_OLSvsIV}, the finite-sample analogues of Tables \ref{tab:Limit1StepShortOLSvsIV} and \ref{tab:Limit1StepShortChooseIVs} present results for the one-step confidence interval that assumes $\widehat{\tau} = \tau$.
As expected from the asymptotic calculations from Section \ref{sec:limitexperiment}, this interval presents a good trade-off between the na\"{i}ve and 2-step CIs: it can yield shorter intervals with far smaller size distortions.
Because it is also simple to compute, the 1-Step interval could prove quite valuable in practice.
It would be interesting to explore this interval further both theoretically and in simulation studies.

\begin{table}[h]
  \footnotesize
  \centering
  \begin{subtable}{0.48\textwidth}
    \caption{Coverage Probability}
    \input{./AdditionalSimulations/CISimResults/c_naive_OLSvsIV_100.tex}
  \end{subtable}
  ~
  \begin{subtable}{0.48\textwidth}
    \caption{Average Relative Width}
    \input{./AdditionalSimulations/CISimResults/w_naive_OLSvsIV_100.tex}
  \end{subtable}
  \caption{Na\"{i}ve CI, OLS vs IV Example, $N=100$}
  \label{tab:CISim100Naive_OLSvsIV}
\end{table}

\begin{table}[h]
  \footnotesize
  \centering
  \begin{subtable}{0.48\textwidth}
    \caption{Coverage Probability}
    \input{./AdditionalSimulations/CISimResults/c_naive_chooseIVs_100.tex}
  \end{subtable}
  ~
  \begin{subtable}{0.48\textwidth}
    \caption{Average Relative Width}
    \input{./AdditionalSimulations/CISimResults/w_naive_chooseIVs_100.tex}
  \end{subtable}
  \caption{Na\"{i}ve CI, Choosing IVs Example, $N=100$}
  \label{tab:CISim100Naive_ChooseIVs}
\end{table}

\begin{table}[h]
  \footnotesize
  \centering
  \begin{subtable}{0.48\textwidth}
    \caption{Coverage Probability}
    \input{./AdditionalSimulations/CISimResults/c_1short_OLSvsIV_100.tex}
  \end{subtable}
  ~
  \begin{subtable}{0.48\textwidth}
    \caption{Average Relative Width}
    \input{./AdditionalSimulations/CISimResults/w_1short_OLSvsIV_100.tex}
  \end{subtable}
  \caption{1-Step Shortest CI, OLS vs IV Example, $N=100$}
  \label{tab:CISim100_1stepShort_OLSvsIV}
\end{table}

\begin{table}[h]
  \footnotesize
  \centering
  \begin{subtable}{0.48\textwidth}
    \caption{Coverage Probability}
    \input{./AdditionalSimulations/CISimResults/c_1short_chooseIVs_100.tex}
  \end{subtable}
  ~
  \begin{subtable}{0.48\textwidth}
    \caption{Average Relative Width}
    \input{./AdditionalSimulations/CISimResults/w_1short_chooseIVs_100.tex}
  \end{subtable}
  \caption{1-Step Shortest CI, Choosing IVs Example, $N=100$}
  \label{tab:CISim100_1StepShort_ChooseIVs}
\end{table}

\section{Empirical Example: Geography or Institutions?}
\label{sec:application}
\cite{Carstensen2006} address a controversial question from the development literature: what is the causal effect of geography on income per capita after controlling for the quality of institutions?
A number of well-known studies find little or no direct effect of geographic endowments \citep{Acemoglu,Rodrik,Easterly}. \cite{Sachs}, on the other hand, shows that malaria transmission, a variable largely driven by ecological conditions, directly influences the level of per capita income, even after controlling for institutions.
Because malaria transmission is very likely endogenous, Sachs uses a measure of ``malaria ecology,'' constructed to be exogenous both to present economic conditions and public health interventions, as an instrument. 
\cite{Carstensen2006} extend Sachs's work using the following baseline regression for a sample of 44 countries:
\begin{equation}
	\mbox{ln\emph{gdpc}}_i = \beta_1 + \beta_2 \cdot \mbox{\emph{institutions}}_i + \beta_3 \cdot \mbox{\emph{malaria}}_i + \epsilon_i
\end{equation}
This model augments the baseline specification of \cite{Acemoglu} to include a direct effect of malaria transmission which, like institutions, is treated as endogenous.\footnote{Due to a lack of data for certain instruments, \cite{Carstensen2006} work with a smaller sample of countries than \cite{Acemoglu}.} 
Considering a variety of measures of both institutions and malaria transmission, and a number of instrument sets, \cite{Carstensen2006} find large negative effects of malaria transmission, lending support to Sach's conclusion.

In this section, I revisit and expand upon the instrument selection exercise given in Table 2 of \cite{Carstensen2006} using the FMSC and corrected confidence intervals described above. 
All results in this section are calculated by TSLS using the formulas from Section \ref{sec:chooseIVexample} and the variables described in Table \ref{tab:desc}, with ln\emph{gdpc} as the outcome  variable and \emph{rule} and \emph{malfal} as measures of institutions and malaria transmission.
In this exercise I imagine two hypothetical econometricians.
The first, like \cite{Sachs} and \cite{Carstensen2006}, seeks the best possible estimate of the causal effect of malaria transmission, $\beta_3$, after controlling for institutions by selecting over a number of possible instruments.
The second, in contrast, seeks the best possible estimate of the causal effect of \emph{institutions}, $\beta_2$, after controlling for malaria transmission by selecting over the same collection of instruments.
After estimating their desired target parameters, both econometricians also wish to report valid confidence intervals that account for the additional uncertainty introduced by instrument selection.
For the purposes of this example, to illustrate the results relevant to each of my hypothetical researchers, I take each of $\beta_2$ and $\beta_3$ \emph{in turn} as the target parameter.\footnote{A researcher interested in \emph{both} $\beta_2$ and $\beta_3$, however, should not proceed in this fashion, as it could lead to contradictory inferences. 
Instead, she should define a target parameter that includes both $\beta_2$ and $\beta_3$.}

\begin{table}[!tbp]
\small
\centering
\begin{tabular}{lll}
\hline \hline
Name& Description &\\
\hline
ln\emph{gdpc}&Real GDP/capita at PPP, 1995 International Dollars &Outcome\\
\emph{rule}&Institutional quality (Average Governance Indicator)&Regressor\\
\emph{malfal}&Fraction of population at risk of malaria transmission, 1994&Regressor\\
ln\emph{mort}&Log settler mortality (per 1000 settlers), early 19th century&Baseline\\
\emph{maleco}&Index of stability of malaria transmission&Baseline\\
\emph{frost}&Prop.\ of land receiving at least 5 days of frost in winter&Climate\\
\emph{humid}&Highest temp. in month with highest avg.\ afternoon humidity&Climate\\
\emph{latitude}&Distance from equator (absolute value of latitude in degrees)&Climate \\
\emph{eurfrac}&Fraction of pop.\ that speaks major West.\ European Language&Europe \\
\emph{engfrac}&Fraction of pop.\ that speaks English&Europe\\
\emph{coast}&Proportion of land area within 100km of sea coast&Openness\\
\emph{trade}&Log Frankel-Romer predicted trade share&Openness\\
\hline
\end{tabular}
\caption{Description of variables for Empirical Example.}
\label{tab:desc}
\end{table}

To apply the FMSC to the present example, we require a minimum of two valid instruments besides the constant term. 
Based on the arguments given by \cite{Acemoglu} and \cite{Sachs}, I proceed under the assumption that ln\emph{mort} and \emph{maleco}, measures of early settler mortality and malaria ecology, are exogenous.
Rather than selecting over all 128 possible instrument sets, I consider eight specifications formed from the four instrument blocks defined by \cite{Carstensen2006}.
The baseline block contains ln\emph{mort}, \emph{maleco} and a constant; the climate block contains \emph{frost}, \emph{humid}, and \emph{latitude}; the Europe block contains \emph{eurfrac} and \emph{engfrac}; and the openness block contains \emph{coast} and \emph{trade}. 
Full descriptions of these variables appear in Table \ref{tab:desc}.
Table \ref{tab:fullresults} lists the eight instrument sets considered here, along with TSLS estimates and traditional 95\% confidence intervals for each.\footnote{The results for the baseline instrument presented in panel 1 of Table \ref{tab:fullresults} are slightly different from those in \cite{Carstensen2006} as I exclude Vietnam to keep the sample fixed across instrument sets.}

\begin{table}[h]
  \small
\centering
\begin{tabular}{lrrrrrrrr}
\hline \hline 
& \multicolumn{2}{c}{1} & \multicolumn{2}{c}{2} & \multicolumn{2}{c}{3} & \multicolumn{2}{c}{4}\\ 
& \multicolumn{1}{c}{\emph{rule}} & \multicolumn{1}{c}{\emph{malfal}} & \multicolumn{1}{c}{\emph{rule}} & \multicolumn{1}{c}{\emph{malfal}} & \multicolumn{1}{c}{\emph{rule}} & \multicolumn{1}{c}{\emph{malfal}} & \multicolumn{1}{c}{\emph{rule}} & \multicolumn{1}{c}{\emph{malfal}}\\ 
 \hline 
 
coeff. & $0.89$ & $-1.04$ & $0.97$ & $-0.90$ & $0.81$ & $-1.09$ & $0.86$ & $-1.14$\\ 
SE & $0.18$ & $0.31$ & $0.16$ & $0.29$ & $0.16$ & $0.29$ & $0.16$ & $0.27$\\ 
lower & $0.53$ & $-1.66$ & $0.65$ & $-1.48$ & $0.49$ & $-1.67$ & $0.55$ & $-1.69$\\ 
upper & $1.25$ & $-0.42$ & $1.30$ & $-0.32$ & $1.13$ & $-0.51$ & $1.18$ & $-0.59$\\ 
& \multicolumn{2}{c}{Baseline} & \multicolumn{2}{c}{Baseline} & \multicolumn{2}{c}{Baseline} & \multicolumn{2}{c}{Baseline}\\ 
& \multicolumn{2}{c}{} & \multicolumn{2}{c}{Climate} & \multicolumn{2}{c}{} & \multicolumn{2}{c}{}\\ 
& \multicolumn{2}{c}{} & \multicolumn{2}{c}{} & \multicolumn{2}{c}{Openness} & \multicolumn{2}{c}{}\\ 
& \multicolumn{2}{c}{} & \multicolumn{2}{c}{} & \multicolumn{2}{c}{} & \multicolumn{2}{c}{Europe}\\ 
 \hline
\end{tabular} 
 
 \vspace{2em} 
 
 \begin{tabular}{lrrrrrrrr}
\hline \hline 
& \multicolumn{2}{c}{5} & \multicolumn{2}{c}{6} & \multicolumn{2}{c}{7} & \multicolumn{2}{c}{8}\\ 
& \multicolumn{1}{c}{\emph{rule}} & \multicolumn{1}{c}{\emph{malfal}} & \multicolumn{1}{c}{\emph{rule}} & \multicolumn{1}{c}{\emph{malfal}} & \multicolumn{1}{c}{\emph{rule}} & \multicolumn{1}{c}{\emph{malfal}} & \multicolumn{1}{c}{\emph{rule}} & \multicolumn{1}{c}{\emph{malfal}}\\ 
 \hline 
 
coeff. & $0.93$ & $-1.02$ & $0.86$ & $-0.98$ & $0.81$ & $-1.16$ & $0.84$ & $-1.08$\\ 
SE & $0.15$ & $0.26$ & $0.14$ & $0.27$ & $0.15$ & $0.27$ & $0.13$ & $0.25$\\ 
lower & $0.63$ & $-1.54$ & $0.59$ & $-1.53$ & $0.51$ & $-1.70$ & $0.57$ & $-1.58$\\ 
upper & $1.22$ & $-0.49$ & $1.14$ & $-0.43$ & $1.11$ & $-0.62$ & $1.10$ & $-0.58$\\ 
& \multicolumn{2}{c}{Baseline} & \multicolumn{2}{c}{Baseline} & \multicolumn{2}{c}{Baseline} & \multicolumn{2}{c}{Baseline}\\ 
& \multicolumn{2}{c}{Climate} & \multicolumn{2}{c}{Climate} & \multicolumn{2}{c}{} & \multicolumn{2}{c}{Climate}\\ 
& \multicolumn{2}{c}{} & \multicolumn{2}{c}{Openness} & \multicolumn{2}{c}{Openness} & \multicolumn{2}{c}{Openness}\\ 
& \multicolumn{2}{c}{Europe} & \multicolumn{2}{c}{} & \multicolumn{2}{c}{Europe} & \multicolumn{2}{c}{Europe}\\ 
 \hline
\end{tabular}
\caption{Two-stage least squares estimation results for all instrument sets.}
\label{tab:fullresults}

\end{table}

Table \ref{tab:FMSC_values} presents FMSC and ``positive-part'' FMSC results for instrument sets 1--8.
The positive-part FMSC sets a negative squared bias estimate to zero when estimating AMSE.
If the squared bias estimate is positive, FMSC and positive-part FMSC coincide; if the squared bias estimate is negative, positive-part FMSC is strictly greater than FMSC. 
Additional simulation results for the choosing instrumental variables experiment from Section \ref{sec:chooseIVsim}, available upon request, reveal that the positive-part FMSC never performs worse than the ordinary FMSC and sometimes performs slightly better, suggesting that it may be preferable in real-world applications.
For each criterion the table presents two cases: the first takes the effect of \emph{malfal}, a measure of malaria transmission, as the target parameter while the second uses the effect of \emph{rule}, a measure of institutions. 
In each case the two best instrument sets are numbers 5 (baseline, climate and Europe) and 8 (all instruments).
When the target parameter is the coefficient on \emph{malfal}, 8 is the clear winner under both the plain-vanilla and positive-part FMSC, leading to an estimate of $-1.08$ for the effect of malaria transmission on per-capita income.
When the target parameter is the coefficient on \emph{rule}, however, instrument sets 5 and 8 are virtually indistinguishable.
Indeed, while the plain-vanilla FMSC selects instrument set 8, leading to an estimate of $0.84$ for the effect of instutitions on per-capita income, the positive-part FMSC selects instrument set 5, leading to an estimate of $0.93$. 
Thus the FMSC methodology shows that, while helpful for estimating the effect of malaria transmission, the openness instruments \emph{coast} and \emph{trade} provide essentially no additional information for studying the effect of institutions.

\begin{table}[htbp]
  \small
	\centering
	\begin{tabular}{lcccccc}
\hline\hline
 & \multicolumn{3}{c}{$\mu = malfal$}& \multicolumn{3}{c}{$\mu = rule$}\\ 
 & FMSC & posFMSC & $\widehat{\mu}$ & FMSC & posFMSC & $\widehat{\mu}$ \\ 
 \hline
 (1) Valid & $ 3.03$ & $ 3.03$ & $-1.04$ & $1.27$ & $1.27$ & $0.89$\\ 
(2) Climate & $ 3.07$ & $ 3.07$ & $-0.90$ & $1.00$ & $1.00$ & $0.97$\\ 
(3) Openness & $ 2.30$ & $ 2.42$ & $-1.09$ & $1.21$ & $1.21$ & $0.81$\\ 
(4) Europe & $ 1.82$ & $ 2.15$ & $-1.14$ & $0.52$ & $0.73$ & $0.86$\\ 
(5) Climate, Europe & $ 0.85$ & $ 2.03$ & $-1.02$ & $0.25$ & $0.59$ & $0.93$\\ 
(6) Climate, Openness & $ 1.85$ & $ 2.30$ & $-0.98$ & $0.45$ & $0.84$ & $0.86$\\ 
(7) Openness, Europe & $ 1.63$ & $ 1.80$ & $-1.16$ & $0.75$ & $0.75$ & $0.81$\\ 
(8) Full & $ 0.53$ & $ 1.69$ & $-1.08$ & $0.23$ & $0.62$ & $0.84$ \\ 
 \hline
 \end{tabular}
		\caption{FMSC and and positive-part FMSC values corresponding to the instrument sets from Table \ref{tab:fullresults}}
		\label{tab:FMSC_values}
\end{table}

Table \ref{tab:postFMSC_CIs} presents three alternative post-selection confidence intervals for each of the instrument selection exercises from Table \ref{tab:FMSC_values}: Na\"{i}ve, 1-Step, and 2-Step.
These intervals are constructed as described in the simulation experiments from Section \ref{sec:CIsim} above. 
The simulation-based intervals are based on 10,000 random draws.
For the two-step interval I take $\alpha = \delta = 0.025$ which guarantees asymptotic coverage of at least 95\%.
From the resulting intervals, we see that both of our two hypothetical econometricians would report a statistically significant result even after accounting for the effects of instrument selection on inference and in spite of the conservatism of the 2-Step interval.

\begin{table}[htbp]
  \small
	\centering
	\begin{tabular}{lcccc} 
 \hline \hline 
 & \multicolumn{2}{c}{\emph{$\mu=$malfal}} & \multicolumn{2}{c}{$\mu=$\emph{rule}}\\ 
 & FMSC & posFMSC & FMSC & posFMSC\\ 
 \hline 
Na\"{i}ve & $(-1.66, -0.50)$ & $(-1.66, -0.50)$ & $(0.53, 1.14)$ & $(0.59, 1.27)$ \\ 
 1-Step & $(-1.58, -0.61)$ & $(-1.57, -0.62)$ & $(0.53, 1.12)$ & $(0.64, 1.21)$ \\ 
 2-Step & $(-1.69, -0.48)$ & $(-1.69, -0.48)$ & $(0.45, 1.22)$ & $(0.54, 1.31)$\\ 
 \hline 
\end{tabular}
	\caption{Post-selection CIs for the instrument selection exercise from Table \ref{tab:FMSC_values}.}
	\label{tab:postFMSC_CIs}
\end{table}


\section{Conclusion}
\label{sec:conclude}
This paper has introduced the FMSC, a proposal to choose moment conditions using AMSE. 
The criterion performs well in simulations, and the framework under which it is derived can be used produce confidence intervals that adjust for the effects of moment selection on subsequent inference. 
Moment selection is not a panacea, but the FMSC and related confidence interval procedures can yield sizeable benefits in empirically relevant settings.
While the discussion here concentrates on two cross-section examples, the FMSC could prove useful in any context in which moment conditions arise from more than one source. 
In a panel model, for example, the assumption of contemporaneously exogenous instruments may be plausible while that of predetermined instruments is more dubious.
Using the FMSC, we could assess whether the extra information contained in the lagged instruments outweighs their potential invalidity. 
Work in progress explores this idea in both static and dynamic panel settings by extending the FMSC to allow for simultaneous moment and model selection.
Other potentially fruitful extensions include the consideration of risk functions other than MSE, and an explicit treatment of weak identification and many moment conditions.

\newpage
\section*{Proofs}

\begin{proof}[Proof of Theorems \ref{thm:consist}, \ref{thm:normality}]
Essentially identical to the proofs of \cite{NeweyMcFadden1994} Theorems 2.6 and 3.1.
\end{proof}

\begin{proof}[Proof of Theorems \ref{thm:OLSvsIV}, \ref{thm:chooseIV}]
The proofs of both results are similar and standard, so I provide only a sketch of the argument for Theorem \ref{thm:chooseIV}. 
First substitute the DGP into the expression for $\widehat{\beta}_S$ and rearrange so that the left-hand side becomes $\sqrt{n}(\beta_S - \beta)$. 
The right-hand side has two factors: the first converges in probability to $-K_S$ by an $L_2$ argument and the second converges in distribution to $M + (0', \tau')'$ by the Lindeberg-Feller CLT. 
\end{proof}

\begin{proof}[Proof of Theorem \ref{thm:tau}]
By a mean-value expansion:
	\begin{eqnarray*}
	\widehat{\tau} &=& \sqrt{n} h_n\left(\widehat{\theta}_{v}\right) = \sqrt{n}h_n(\theta_0) + H \sqrt{n}\left(\widehat{\theta}_{v} - \theta_0\right) + o_p(1)\\
		&=&-HK_{v} \sqrt{n}g_n(\theta_0) + \mathbf{I}_q\sqrt{n}h_n(\theta_0) +o_p(1) = \Psi \sqrt{n}f_n(\theta_0) + o_p(1) 
\end{eqnarray*}
The result follows since $\sqrt{n}f_n(\theta_0) \rightarrow_d M + (0', \tau')'$ under Assumption \ref{assump:highlevel} (h).
\end{proof}

\begin{proof}[Proof of Corollary \ref{cor:tautau}]
By Theorem \ref{thm:tau} and the Continuous Mapping Theorem, we have $\widehat{\tau}\widehat{\tau}' \rightarrow_d UU'$  where $U =\Psi M + \tau$. Since $E[M]=0$, $E[UU'] = \Psi \Omega \Psi' + \tau\tau'$. 
\end{proof}

\begin{proof}[Proof of Theorem \ref{thm:DHW}] 
By Theorem \ref{thm:tauOLSvsIV}, $\sqrt{n}(\widehat{\beta}_{OLS} - \widetilde{\beta}_{TSLS}) \rightarrow_d N\left(\tau/\sigma_x^2, \Sigma \right)$
where $\Sigma =  \sigma_\epsilon^2 \left(1/\gamma^2 - 1/\sigma_x^2 \right)$.
Thus, under $H_0 \colon \tau = 0$, the DHW test statistic 
$$\widehat{T}_{DHW} = n\, \widehat{\Sigma}^{-1}(\widehat{\beta}_{OLS} - \widetilde{\beta}_{TSLS})^2 = \frac{n(\widehat{\beta}_{OLS} - \widetilde{\beta}_{TSLS})^2}{ \widehat{\sigma}_\epsilon^2 \left(1/\widehat{\gamma}^2 - 1/\widehat{\sigma}_x^2 \right)}$$
converges in distribution to a $\chi^2(1)$ random variable. 
Now, rewriting $\widehat{V}$, we find that
$$\widehat{V} = \widehat{\sigma}_\epsilon^2 \widehat{\sigma}_x^2\left(\frac{\widehat{\sigma}_v^2 }{\widehat{\gamma}^2}\right) = \widehat{\sigma}_\epsilon^2 \widehat{\sigma}_x^2\left(\frac{\widehat{\sigma}_x^2 - \widehat{\gamma}^2 }{\widehat{\gamma}^2}\right) = \widehat{\sigma}_\epsilon^2 \widehat{\sigma}_x^4\left(\frac{1}{\widehat{\gamma}^2} - \frac{1}{\widehat{\sigma}_x^2}\right) = \widehat{\sigma}_x^4 \,\widehat{\Sigma}$$
using the fact that $\widehat{\sigma}_v = \widehat{\sigma}_x^2 - \widehat{\gamma}^2$. 
Thus, to show that $\widehat{T}_{FMSC} = \widehat{T}_{DHW}$, all that remains is to establish that $\widehat{\tau}^2 = n\widehat{\sigma}_x^4 (\widehat{\beta}_{OLS} - \widetilde{\beta}_{TSLS})^2$, which we obtain as follows:
    $$\widehat{\tau}^2  =  \left[n^{-1/2} \textbf{x}'(\mathbf{y} - \mathbf{x}\widetilde{\beta})\right]^2 = n^{-1}\left[\mathbf{x}'\mathbf{x} \left( \widehat{\beta} - \widetilde{\beta}\right) \right]^2 = n^{-1}\left[n \widehat{\sigma}_x^2 \left( \widehat{\beta} - \widetilde{\beta}\right) \right]^2.$$
\end{proof}

\begin{proof}[Proof of Corollary \ref{cor:momentavg}]
We have
		$\sqrt{n}\left(\widehat{\mu} - \mu_0\right) = \sum_{S \in \mathscr{S}}\left[ \widehat{\omega}_S \sqrt{n}\left(\widehat{\mu}_S - \mu_0\right)\right]$
because the weights sum to one.
By Corollary \ref{cor:target}, 
$$\sqrt{n}\left(\widehat{\mu}_S - \mu_0\right)\rightarrow_d-\nabla_\theta\mu(\theta_0)'K_S \Xi_S \left(M +  \left[\begin{array}
	{c} 0 \\ \tau
\end{array} \right]\right)$$
and by the assumptions of this Corollary we find that $\widehat{\omega}_S \rightarrow_d\varphi_S(\tau,M)$ for each $S\in \mathscr{S}$, where $\varphi_S(\tau,M)$ is a function of $M$ and constants only. 
Hence $\widehat{\omega}_S$ and $\sqrt{n}\left(\widehat{\mu}_S - \mu_0\right)$ converge jointly in distribution to their respective functions of $M$, for all $S \in \mathscr{S}$. 
The result follows by application of the Continuous Mapping Theorem.
\end{proof}

\begin{proof}[Proof of Theorem \ref{thm:OLSvsIVavg}]
By Theorem \ref{thm:OLSvsIV}
$\sqrt{n}\left(\widehat{\beta}(\omega) - \beta \right)  \overset{d}{\rightarrow}  N\left(B(\omega), V(\omega)\right)$ 
since the weights sum to one, where
\begin{equation*}
	B(\omega) = \omega \left( \frac{\tau}{\sigma_x^2} \right), \quad 
	 V(\omega) =  \frac{\sigma_\epsilon^2}{\sigma_x^2} \left[(2\omega^2 - \omega)\left( \frac{\sigma_x^2}{\gamma^2} - 1\right)+\frac{\sigma_x^2}{\gamma^2} \right]
\end{equation*}
and accordingly
	$$\mbox{AMSE}\left[\widehat{\beta}(\omega) \right] =  \omega^2 \left(\frac{\tau^2}{\sigma_x^4} \right) + (\omega^2 - 2 \omega)\left(\frac{\sigma_\epsilon^2}{\sigma_x^2}\right)\left( \frac{\sigma_x^2}{\gamma^2} - 1\right) + \frac{\sigma_\epsilon^2}{\gamma^2}.$$
The preceding expression is a globally convex function of $\omega$. 
Taking the first order condition and rearranging gives the desired result.
\end{proof}

\begin{proof}[Proof of Theorem \ref{pro:jstat}]
By a mean-value expansion,
	$$\sqrt{n}\left[\Xi_S f_n\left(\widehat{\theta}_S\right)\right]  = \sqrt{n}\left[\Xi_S f_n(\theta_0)\right] + F_S  \sqrt{n}\left(\widehat{\theta}_S - \theta_0\right) + o_p(1).$$
Since $\sqrt{n}\left(\widehat{\theta}_S - \theta_0\right) \rightarrow_p -\left(F_S' W_S F_S  \right)^{-1}F_S'W_S\sqrt{n}\left[\Xi_S f_n(\theta_0)\right]$, we have
	$$\sqrt{n}\left[\Xi_S f_n(\widehat{\theta}_S)\right] = \left[I - F_S\left(F_S' W_S F_S  \right)^{-1}F_S'W_S\right] \sqrt{n}\left[\Xi_S f_n(\theta_0)\right] + o_p(1).$$
Thus, for estimation using the efficient weighting matrix 
$$\widehat{\Omega}^{-1/2}_S \sqrt{n}\left[\Xi_S f_n\left(\widehat{\theta}_S\right)\right] \rightarrow_d\left[I - P_S\right] \Omega_S^{-1/2}\Xi_S\left(M + \left[\begin{array}{c}0\\ \tau \end{array} \right] \right)$$
by Assumption \ref{assump:highlevel} (h), where $\widehat{\Omega}^{-1/2}_S$ is a consistent estimator of $\Omega_S^{-1/2}$ and $P_S$ is the projection matrix based on $\Omega^{-1/2}_S F_S$, the identifying restrictions.\footnote{See \cite{Hallbook}, Chapter 3.} The result follows by combining and rearranging these expressions.
\end{proof}

\begin{proof}[Proof of Theorem \ref{pro:andrews}]
Let $S_1$ and $S_2$ be arbitrary moment sets in $\mathscr{S}$ and let $|S|$ denote the cardinality of $S$. 
Further, define $\Delta_n(S_1, S_2) = MSC(S_1) - MSC(S_2)$
By Theorem \ref{pro:jstat}, $J_n(S) = O_p(1)$, $S \in \mathscr{S}$, thus
	\begin{eqnarray*}
			\Delta_n(S_1, S_2)	&=&   \left[J_{n}(S_1) - J_{n}(S_2)\right] - \left[h\left(p+|S_1|\right) - h\left(p+|S_2|\right)\right]\kappa_n\\
				&=& O_p(1) - C\kappa_n
	\end{eqnarray*}
where $C = \left[h\left(p+|S_1|\right) - h\left(p+|S_2|\right)\right]$. 
Since $h$ is strictly increasing, $C$ is positive for $|S_1|>|S_2|$, negative for $|S_1|<|S_2|$, and zero for $|S_1|=|S_2|$. 
Hence:
	\begin{eqnarray*}
		|S_1|>|S_2|&\implies& \Delta_n(S_1, S_2)  \rightarrow -\infty\\
		|S_1|=|S_2|&\implies&\Delta_n(S_1, S_2)  = O_p(1)\\
		|S_1|<|S_2|&\implies& \Delta_n(S_1, S_2)  \rightarrow \infty
\end{eqnarray*}
The result follows because the full moment set contains more moment conditions than any other moment set $S$.
\end{proof}

\begin{proof}[Proof of Theorem \ref{thm:sim}]
By Theorem \ref{thm:tau} and Corollary \ref{cor:momentavg},
\begin{eqnarray*}
	P\left\{\mu_0 \in \mbox{CI}_{sim} \right\} 
	&\rightarrow& P \left\{ a_{min} \leq \Lambda(\tau) \leq b_{max}\right\}
\end{eqnarray*}
where $a(\tau^*), b(\tau^*)$ define a collection of $(1-\alpha)\times 100\%$ intervals indexed by $\tau^*$, each of which is constructed under the assumption that $\tau = \tau^*$
$$P\left\{a(\tau^*) \leq \Lambda(\tau^*) \leq b(\tau^*) \right\} = 1-\alpha $$
and we define the shorthand $a_{min}, b_{max}$ as follows
	\begin{eqnarray*}
	a_{min}(\Psi M + \tau)&=&\min \left\{a(\tau^*)\colon \tau^* \in \mathscr{T}(\Psi M + \tau,\delta) \right\}\\
	b_{max}(\Psi M + \tau)&=&\max \left\{b(\tau^*)\colon \tau^* \in \mathscr{T}(\Psi M + \tau,\delta) \right\}\\
	\mathscr{T}(\Psi M + \tau,\delta) &=& \left\{\tau^* \colon  \Delta(\tau, \tau^*) \leq \chi^2_q(\delta) \right\}\\
	\Delta(\tau,\tau^*) &=&  (\Psi M + \tau - \tau^*)' (\Psi\Omega\Psi')^{-1} \left(\Psi M + \tau - \tau^*\right)
	\end{eqnarray*}
Now, let $A = \left\{ \Delta(\tau, \tau) \leq \chi^2_q(\delta) \right\}$ where $\chi^2_q(\delta)$ is the $1- \delta$ quantile of a $\chi^2_q$ random variable. 
This is the event that the \emph{limiting version} of the confidence region for $\tau$ contains the true bias parameter. 
Since $\Delta(\tau, \tau)\sim\chi^2_q$, $P(A) = 1 - \delta$. For every $\tau^*\in \mathscr{T}(\Psi M + \tau,\delta)$ we have
$$P\left[\left\{a(\tau^*) \leq \Lambda(\tau^*) \leq b(\tau^*)  \right\}\cap A \right] + P\left[\left\{a(\tau^*) \leq \Lambda(\tau) \leq b(\tau^*)  \right\}\cap A^c \right] = 1-\alpha$$
by decomposing $P\left\{a(\tau^*) \leq \Lambda(\tau^*) \leq b(\tau^*) \right\} $ into the sum of mutually exclusive events. 
But since
$$P\left[\left\{a(\tau^*) \leq \Lambda(\tau^*) \leq b(\tau^*)  \right\}\cap A^c \right] \leq P(A^c) = \delta$$
we see that
$$P\left[\left\{a(\tau^*) \leq \Lambda(\tau^*) \leq b(\tau^*)  \right\}\cap A \right]  
\geq 1-\alpha-\delta$$
for every $\tau^* \in \mathscr{T}(\Psi M + \tau,\delta)$. 
Now, by definition, if $A$ occurs then the true bias parameter $\tau$ is contained in $\mathscr{T}(\Psi M + \tau,\delta)$ and hence 
$$P\left[\left\{a(\tau) \leq \Lambda(\tau) \leq b(\tau)  \right\}\cap A \right]  
\geq 1-\alpha-\delta.$$
But when $\tau \in \mathscr{T}(\Psi M + \tau,\delta)$, $a_{min} \leq a(\tau)$ and $b(\tau) \leq b_{max}$. 
It follows that
	$$\left\{a(\tau) \leq \Lambda(\tau) \leq b(\tau)  \right\}\cap A \subseteq \{a_{min} \leq \Lambda(\tau) \leq b_{max}\}$$
and therefore
	$$1 -\alpha - \delta \leq P\left[\left\{a(\tau^*) \leq \Lambda(\tau^*) \leq b(\tau^*)  \right\}\cap A \right] \leq P\left[ \{a_{min} \leq \Lambda(\tau) \leq b_{max}\}\right]$$
as asserted.
\end{proof}

\bibliographystyle{elsarticle-harv}
\small
\bibliography{fmsc_refs}

\appendix
\normalsize
\clearpage
\pagenumbering{arabic}
\renewcommand*{\thepage}{A-\arabic{page}}
\numberwithin{equation}{section}
\numberwithin{table}{section}
\numberwithin{figure}{section}
\begin{center}
  {\Huge Online Appendix\\}
  \vspace{1em}
  {\Large Using Invalid Instruments on Purpose: Focused Moment \\
  Selection and Averaging for GMM \\}
  \vspace{2em}
  {\large Francis J.\ DiTraglia} 
\end{center}
\section{Computational Details}
\label{append:comp}
This paper is fully replicable using freely available, open-source software.
For full source code and replication details, see \url{https://github.com/fditraglia/fmsc}.
Results for the simulation studies and empirical example were generated using \texttt{R} \citep{R} and C\raisebox{0.5ex}{\tiny\textbf{++}} via the \texttt{Rcpp} \citep{RcppArticle,RcppBook} and \texttt{RcppArmadillo} \citep{RcppArmadillo} packages.
\texttt{RcppArmadillo} provides an interface to the Armadillo C\raisebox{0.5ex}{\tiny\textbf{++}} linear algebra library \citep{Armadillo}.
All figures in the paper were converted to tikz using the \texttt{tikzDevice} package \citep{tikzDevice}.
Confidence interval calculations for Sections \ref{sec:limitexperiment} and \ref{sec:CIsim} rely routines from my R package \texttt{fmscr}, available from \url{https://github.com/fditraglia/fmscr}.
The simulation-based intervals for the empirical example from Section \ref{sec:application} were constructed following Algorithm \ref{alg:conf} with  $J = 10,000$ using a mesh-adaptive search algorithm provided by the NOMAD C\raisebox{0.5ex}{\tiny\textbf{++}} optimization package \citep{NOMADalgorithm,NOMADcode,NOMADuserguide}, called from R using the \texttt{crs} package \citep{crs}.
TSLS results for Table \ref{tab:fullresults} were generated using version 3.1-4 of the \texttt{sem} package \citep{sem}.

\section{Failure of the Identification Condition}
\label{sec:digress}
When there are fewer moment conditions in the $g$-block than elements of the parameter vector $\theta$, i.e.\ when $r > p$, Assumption \ref{assump:Identification} fails: $\theta_0$ is not estimable by $\widehat{\theta}_v$ so $\widehat{\tau}$ is an infeasible estimator of $\tau$. 
A na\"{i}ve approach to this problem would be to substitute another consistent estimator of $\theta_0$ and proceed analogously. 
Unfortunately, this approach fails. To understand why, consider the case in which all moment conditions are potentially invalid so that the $g$--block is empty. 
Letting $\widehat{\theta}_f$ denote the estimator based on the full set of moment conditions in $h$,  $\sqrt{n}h_n(\widehat{\theta}_f) \rightarrow_d\Gamma  \mathcal{N}_q(\tau, \Omega)$ where $\Gamma = \mathbf{I}_q - H \left(H'WH\right)^{-1}H'W$, using an argument similar to that in the proof of Theorem \ref{thm:tau}. 
The mean, $\Gamma \tau$, of the resulting limit distribution does not equal $\tau$, and because $\Gamma$ has rank $q-r$ we cannot pre-multiply by its inverse to extract an estimate of $\tau$.
Intuitively, $q-r$ over-identifying restrictions are insufficient to estimate a $q$-vector: $\tau$ cannot be estimated without a minimum of $r$ valid moment conditions. 
However, the limiting distribution of $\sqrt{n}h_n(\widehat{\theta}_f)$ partially identifies $\tau$ even when we have no valid moment conditions at our disposal. 
A combination of this information with prior restrictions on the magnitude of the components of $\tau$ allows the use of the FMSC framework to carry out a sensitivity analysis when $r>p$. 
For example, the worst-case estimate of AMSE over values of $\tau$ in the identified region could still allow certain moment sets to be ruled out.
This idea shares similarities with \citet{Kraay} and \citet{Conleyetal}, two recent papers that suggest methods for evaluating the robustness of conclusions drawn from IV regressions when the instruments used may be invalid.
\section{Trimmed MSE}
\label{append:trim}
    Even in situations where finite sample MSE does not exist, it is still meaningful to consider comparisons of asymptotic MSE.
    To make the connection between the finite-sample and limit experiment a bit tidier in this case we can work in terms of \emph{trimmed} MSE,  following \cite{HansenShrink}. 
    To this end, define
    \begin{align*}
      MSE_n(\widehat{\mu}_S, \zeta) &= E\left[ \min\left\{ n(\widehat{\mu} - \mu_0)^2, \zeta  \right\} \right]\\
      AMSE(\widehat{\mu}_S) &= \lim_{\zeta \rightarrow \infty} \liminf_{n\rightarrow \infty} MSE_n(\widehat{\mu}_S, \zeta)
    \end{align*}
    where $\zeta$ is a positive constant that bounds the expectation for finite $n$.
    By Corollary \ref{cor:target} $\sqrt{n}(\widehat{\mu}_S-\mu_0) \rightarrow_d \Lambda$ where $\Lambda$ is a normally distributed random variable.
    Thus, by Lemma 1 of \cite{HansenShrink}, we have $AMSE(\widehat{\mu}_S) = E[\Lambda^2]$.
    In other words, working with a sequence of trimmed MSE functions leaves AMSE unchanged while ensuring that finite-sample risk is bounded.
    This justifies the approximation $MSE_n(\widehat{\mu}_S, \zeta) \approx E[\Lambda^2]$ for large $n$ and $\zeta$.
    In a simulation exercise in which ordinary MSE does not exist, for example instrumental variables with a single instrument, one could remove the largest 1\% of simulation draws in absolute value and evaluate the performance of the FMSC against the empirical MSE calculated for the remaining draws.

\section{The Case of Multiple Target Parameters}
\label{append:mult}
  The fundamental idea behind the FMSC is to approximate finite-sample risk with asymptotic risk under local mis-specification.
  Although the discussion given above is restricted to a scalar target parameter, the same basic idea is easily extended to accomodate a vector of target parameters.
  All that is required is to specify an appropriate risk function.
  Consider a generic weighted quadratic risk function of the form
  \[
    R(\widehat{\theta}_S,W) = E\left[ \left( \widehat{\theta}_S - \theta_0 \right)' W \left( \widehat{\theta} - \theta_0 \right)\right]
  \]
  where $W$ is a positive semi-definite matrix.
  The finite-sample distribution of $\widehat{\theta}$ is, in general, unknown, but by Theorem \ref{thm:normality} $\sqrt{n} \left( \widehat{\theta}_S - \theta_0\right) \rightarrow_d U_S$ where
  \[
    U_S = -K_S \Xi_S \left( M + \left[
    \begin{array}{c}
      0 \\ \tau
    \end{array}
  \right]\right)
  \]
  and $M\sim N(0,\Omega)$ so we instead consider the \emph{asymptotic risk}
  \begin{equation*}
    AR(\widehat{\theta}_S, W) = E\left[ U_S' W U_S \right] =
     \mbox{trace}\left\{ W^{1/2}K_S \Xi_S 
      \left( \left[
      \begin{array}{cc}
        0 & 0 \\
        0 & \tau \tau'
      \end{array}
    \right] + \Omega\right)
    \Xi_S' K_S'W^{1/2} \right\}
  \end{equation*}
  where $W^{1/2}$ is the symmetric, positive semi-definite square root of $W$.
  To construct an asymptotically unbiased estimator of $AR(\widehat{\theta}_S, W)$ we substitute consistent estimators of $\Omega$ and $K_S$ and the asymptotically unbiased estimator of $\widehat{\tau}\widehat{\tau}'$ from Corollary \ref{cor:tautau} yielding
  \[
    \widehat{AR}\left( \widehat{\theta}_S, W \right) = 
    \mbox{trace}\left\{ W^{1/2}\widehat{K}_S \Xi_S 
      \left( \left[
      \begin{array}{cc}
        0 & 0 \\
        0 & \widehat{\tau}\widehat{\tau}' - \widehat{\Psi}\widehat{\Omega}\widehat{\Psi}
      \end{array}
    \right] + \Omega\right)
    \Xi_S' \widehat{K}_S'W^{1/2} \right\}
  \]
  which is nearly identical to the expression for the FMSC given in Equation \ref{eq:fmsc}.
  The only difference is the presence of the weighting matrix $W$ and the trace operator in place of the vector of derivatives $\nabla_\theta\mu(\widehat{\theta})$.
  When $W$ is a diagonal matrix this difference disappears completely as this effectively amounts to defining a target parameter that is a weighted average of some subset of the elements of $\theta$.
  In this case the FMSC can be used without modification simply by defining the function $\mu$ appropriately.

\section{Low-Level Sufficient Conditions}
\label{sec:sufficient_conditions}

\begin{assump}[Sufficient Conditions for Theorem \ref{thm:OLSvsIV}]
\label{assump:OLSvsIV}
	Let $\{(\mathbf{z}_{ni}, v_{ni}, \epsilon_{ni})\colon 1\leq i \leq n, n = 1, 2, \hdots\}$ be a triangular array of random variables such that
	\begin{enumerate}[(a)]
		\item $(\mathbf{z}_{ni}, v_{ni}, \epsilon_{ni}) \sim$ iid and mean zero within each row of the array (i.e.\ for fixed $n$)
		\item $E[\mathbf{z}_{ni} \epsilon_{ni}]=\mathbf{0}$, $E[\mathbf{z}_{ni} v_{ni}]=\mathbf{0}$, and $E[\epsilon_{ni}v_{ni}] = \tau/\sqrt{n}$ for all $n$
		\item $E[\left|\mathbf{z}_{ni}\right|^{4+\eta}] <C$, $E[\left|\epsilon_{ni}\right|^{4+\eta}] <C$, and $E[\left|v_{ni}\right|^{4+\eta}] <C$ for some $\eta >0$, $C <\infty$
		\item $E[\mathbf{z}_{ni} \mathbf{z}_{ni}'] \rightarrow Q>0$, $E[v_{ni}^2]\rightarrow \sigma_v^2 >0$, and $E[\epsilon_{ni}^2] \rightarrow \sigma_\epsilon^2 >0$ as $n\rightarrow \infty$
		\item As $n\rightarrow \infty$, $E[\epsilon_{ni}^2 \mathbf{z}_{ni} \mathbf{z}_{ni}']- E[\epsilon_{ni}^2]E[ \mathbf{z}_{ni} \mathbf{z}_{ni}'] \rightarrow 0$, $E[\epsilon_i^2 v_{ni} \mathbf{z}_{ni}'] - E[\epsilon_{ni}^2]E[v_{ni} \mathbf{z}_{ni}'] \rightarrow 0$, and $E[\epsilon_{ni}^2 v_{ni}^2] - E[\epsilon_{ni}^2]E[v_{ni}^2] \rightarrow 0$
		\item $x_{ni} = \mathbf{z}_{ni}'\boldsymbol{\pi} + v_i$ where $\boldsymbol{\pi} \neq \mathbf{0}$, and $y_{ni} = \beta x_{ni} + \epsilon_{ni}$
	\end{enumerate}
\end{assump}
Parts (a), (b) and (d) correspond to the local mis-specification assumption, part (c) is a set of moment restrictions, and (f) is simply the DGP.
Part (e) is the homoskedasticity assumption: an \emph{asymptotic} restriction on the joint distribution of $v_{ni}$, $\epsilon_{ni}$, and $\mathbf{z}_{ni}$. 
This condition holds automatically, given the other asssumptions, if $(\mathbf{z}_{ni}, v_{ni}, \epsilon_{ni})$ are jointly normal, as in the simulation experiment described in the paper.

\begin{assump}[Sufficient Conditions for Theorem \ref{thm:chooseIV}.]
\label{assump:chooseIV} 
	Let $\{(\mathbf{z}_{ni}, \mathbf{v}_{ni}, \epsilon_{ni})\colon 1\leq i \leq n, n = 1, 2, \hdots\}$ be a triangular array of random variables with $\mathbf{z}_{ni} = (\mathbf{z}_{ni}^{(1)}$, $\mathbf{z}_{ni}^{(1)})$ such that
	\begin{enumerate}[(a)]
		\item $(\mathbf{z}_{ni}, \mathbf{v}_{ni}, \epsilon_{ni}) \sim$ iid within each row of the array (i.e.\ for fixed $n$)
		\item $E[\mathbf{v}_{ni}\mathbf{z}_{ni}']=\mathbf{0}$, $E[\mathbf{z}^{(1)}_{ni} \epsilon_{ni}]=\mathbf{0}$, and $E[\mathbf{z}^{(2)}_{ni} \epsilon_{ni}] = \boldsymbol{\tau}/\sqrt{n}$ for all $n$
		\item $E[\left|\mathbf{z}_{ni}\right|^{4+\eta}] <C$, $E[\left|\epsilon_{ni}\right|^{4+\eta}] <C$, and $E[\left|\mathbf{v}_{ni}\right|^{4+\eta}] <C$ for some $\eta >0$, $C <\infty$
		\item $E[\mathbf{z}_{ni} \mathbf{z}_{ni}'] \rightarrow Q>0$ and $E[\epsilon_{ni}^2 \mathbf{z}_{ni} \mathbf{z}_{ni}'] \rightarrow \Omega >0$ as $n\rightarrow \infty$
		\item $\mathbf{x}_{ni} =  \Pi_1' \mathbf{z}_{ni}^{(1)} + \Pi_2'\mathbf{z}_{ni}^{(2)} + \mathbf{v}_{ni}$ where $\Pi_1 \neq \mathbf{0}$, $\Pi_2 \neq \mathbf{0}$, and $y_i = \mathbf{x}_{ni}' \beta +  \epsilon_{ni}$
	\end{enumerate}
\end{assump}
These conditions are similar to although more general than those contained in Assumption \ref{assump:OLSvsIV} as they do not impose homoskedasticity.

\section{A Special Case of Post-FMSC Inference}
This appendix presents calculations and numerical results to supplement Section \ref{sec:limitexperiment}.

\subsection{The Limit Experiment}
\label{append:limitexperiment}
The joint limit distribution for the OLS versus TSLS example from Section \ref{sec:OLSvsIVExample} is as follows
\begin{equation*}
  \left[ 
  \begin{array}{c}
    \sqrt{n} \left( \widehat{\beta}_{OLS} - \beta \right)\\
    \sqrt{n} \left( \widetilde{\beta}_{TSLS} - \beta \right)\\
    \widehat{\tau}
\end{array}
\right] \overset{d}{\rightarrow} N\left( \left[
\begin{array}{c}
  \tau/\sigma_x^2 \\ 0 \\ \tau
\end{array}
\right], \sigma_{\epsilon}^2 
\left[
\begin{array}{ccc}
  1/\sigma_x^2 & 1/\sigma_x^2 & 0\\
  1/\sigma_x^2 & 1/\gamma^2 & -\sigma_v^2/\gamma^2\\
  0 & -\sigma_v^2/\gamma^2 & \sigma_x^2 \sigma_v^2/\gamma^2
\end{array}
\right]\right).
\end{equation*}
Now consider a slightly simplified version of the choosing instrumental variables example from Section \ref{sec:chooseIVexample}, namely
\begin{eqnarray*}
  y_{ni} &=& \beta x_{ni} + \epsilon_{ni}\\
  x_{ni} &=& \gamma w_{ni} + \mathbf{z}_{ni}' \boldsymbol{\pi} + v_{ni}
\end{eqnarray*}
where $x$ is the endogenous regressor of interest, $\mathbf{z}$ is a vector of exogenous instruments, and $w$ is a single potentially endogenous instrument.
Without loss of generality assume that $w$ and $\mathbf{z}$ are uncorrelated and that all random variables are mean zero.
For simplicity, further assume that the errors satisfy the same kind of asymptotic homoskedasticity condition used in the OLS versus TSLS example so that TSLS is the efficient GMM estimator.
Let the ``Full'' estimator denote the TSLS estimator using $w$ and $\mathbf{z}$ and the ``Valid'' estimator denote the TSLS estimator using only $\mathbf{z}$.
Then we have,
\begin{equation*}
  \left[ 
  \begin{array}{c}
    \sqrt{n} \left( \widehat{\beta}_{Full} - \beta \right)\\
    \sqrt{n} \left( \widetilde{\beta}_{Valid} - \beta \right)\\
    \widehat{\tau}
\end{array}
\right] \overset{d}{\rightarrow} N\left( \left[
\begin{array}{c}
  \tau\gamma/q^2_{F} \\ 0 \\ \tau
\end{array}
\right], \sigma_{\epsilon}^2 
\left[
\begin{array}{ccc}
  1/q_{F}^2 & 1/q_{F}^2 & 0\\
  1/q_{F}^2 & 1/q_{V}^2 & -\gamma\sigma_w^2/q^2_{V}\\ 
  0 & -\gamma\sigma_w^2/q^2_{V} & \sigma_w^2 q^2_{F}/q^2_{V}
\end{array}
\right]\right)
\end{equation*}
where $q^2_{F} = \gamma^2 \sigma_w^2 + q^2_{V}$, $q^2_{V} = \boldsymbol{\pi}'\Sigma_{zz}\boldsymbol{\pi}$, $\Sigma_{zz}$ is the covariance matrix of the valid instruments $\mathbf{z}$, and $\sigma_w^2$ is the variance of the ``suspect'' instrument $w$.
After some algebraic manipulations we see that both of these examples share the same structure, namely
\begin{equation}
  \label{eq:limitExperiment}
  \left[ 
  \begin{array}{c}
    \sqrt{n} \left( \widehat{\beta} - \beta \right)\\
    \sqrt{n} \left( \widetilde{\beta} - \beta \right)\\
    \widehat{\tau}
\end{array}
\right] \overset{d}{\rightarrow} 
\left[
\begin{array}{c}
  U \\ V \\ T
\end{array}
\right] \sim
N\left( \left[
\begin{array}{c}
  c\tau\\ 0 \\ \tau
\end{array}
\right], 
\left[
\begin{array}{ccc}
  \eta^2 & \eta^2 & 0\\
  \eta^2 & \eta^2 + c^2 \sigma^2 & -c\sigma^2\\ 
  0 & -c\sigma^2 & \sigma^2 
\end{array}
\right]\right)
\end{equation}
where $\widehat{\beta}$ denotes the low variance but possibly biased estimator, and $\widetilde{\beta}$ denotes the higher variance but unbiased estimator.
For any example with a limit distribution that takes this form, simple algebra shows that FMSC selection amounts to choosing $\widehat{\beta}$ whenever $|\widehat{\tau}|<\sqrt{2}\sigma$, and choosing $\widetilde{\beta}$ otherwise, in other words
\begin{eqnarray*}
  \sqrt{n}(\widehat{\beta}_{FMSC} - \beta) = \mathbf{1}\left\{ |\widehat{\tau}|<\sigma \sqrt{2} \right\} \sqrt{n}(\widehat{\beta} - \beta) +  \mathbf{1}\left\{ |\widehat{\tau}|\geq\sigma \sqrt{2} \right\}\sqrt{n}(\widetilde{\beta} - \beta)
\end{eqnarray*}
and so by the Continuous Mapping Theorem,
\begin{equation*}
  \sqrt{n}(\widehat{\beta}_{FMSC} - \beta) \overset{d}{\rightarrow}  \mathbf{1}\left\{ |T|<\sigma \sqrt{2} \right\} U +  \mathbf{1}\left\{ |T|\geq\sigma \sqrt{2} \right\} V.
\end{equation*}
Re-expressing Equation \ref{eq:limitExperiment} in terms of the marginal distribution of $T$ and the conditional distribution of $U$ and $V$ given $T$, we find that $T \sim N(\tau, \sigma^2)$ and 
\begin{equation*}
  \left.\left[
  \begin{array}{c}
   U \\ V 
  \end{array}
\right]\right| (T = t) \sim N\left(
\left[
\begin{array}{c}
  c \tau \\ c\tau - ct
\end{array}
\right], \eta^2
\left[
\begin{array}{cc}
  1 & 1 \\ 1 & 1
\end{array}
\right]
\right)
\end{equation*}
which is a \emph{singular distribution}.
While $U$ is independent of $T$, but conditional on $T$ the random variables $U$ and $V$ are perfectly correlated with the same variance.
Given $T$, the only difference between $U$ and $V$ is that the mean of $V$ is shifted by a distance that depends on the realization $t$ of $T$.
Thus, the conditional distribution of $V$ shows a \emph{random bias}: on average $V$ has mean zero because the mean of $T$ is $\tau$ but any particular realization $t$ of $T$ will not in general equal $\tau$.
Using the form of the conditional distributions we can express the distribution of $(U,V,T)'$ in a more transparent form as
\begin{eqnarray*}
  T &=& \sigma Z_1 + \tau\\
  U &=& \eta Z_2 + c\tau\\
  V &=& \eta Z_2 - c\sigma Z_1
\end{eqnarray*}
where $Z_1, Z_2$ are independent standard normal random variables.

\subsection{Numerical Results for the 2-Step Interval}
\label{append:limitexperiment_2step}
For the two-step procedure I take lower and upper bounds over a collection of equal-tailed intervals. 
It does not necessarily follow that the bounds over these intervals would be tighter if each interval in the collection were constructed to be a short as possible.
As we are free when using the 2-Step interval to choose any pair $(\alpha_1, \alpha_2)$ such that $\alpha_1 + \alpha_2 = \alpha$ I experimented with three possibilities: $\alpha_1 = \alpha_2 = \alpha/2$, followed by $\alpha_1 = \alpha/4, \alpha_2 = 3\alpha/4$ and $\alpha_1 = 3\alpha/4, \alpha_2 = \alpha/4$.
Setting $\alpha_1 = \alpha/4$ produced the shortest intervals so I discuss only results for the middle configuration here.
Additional results are available on request.
As we see from Table \ref{tab:Limit2StepWideTauOLSvsIV} for the OLS versus TSLS example and Table \ref{tab:Limit2StepWideTauChooseIVs} for the choosing IVs example, this procedure delivers on its promise that asymptotic coverage will never fall below $1-\alpha$ but this comes at the cost of extreme conservatism and correspondingly wider intervals.

\begin{table}[h]
  \centering
  \begin{subtable}{0.48\textwidth}
    \caption{Coverage Probability}
    \begin{tabular}{r|rrrrrr}
\hline\hline
 &\multicolumn{6}{c}{$\tau$} \\ 
 $\alpha = 0.05$ & $0$ & $1$ & $2$ & $3$ & $4$ & $5$ \\ 
 \hline$0.1$ & $97$ & $97$ & $97$ & $98$ & $98$ & $98$\\ 
$\pi^2\;\;\;$ $0.2$ & $97$ & $97$ & $98$ & $97$ & $97$ & $97$\\ 
$0.3$ & $98$ & $98$ & $98$ & $97$ & $96$ & $97$\\ 
$0.4$ & $98$ & $98$ & $97$ & $96$ & $97$ & $98$\\ 
 \hline 
 \end{tabular}
 
 \vspace{2em} 
 
\begin{tabular}{r|rrrrrr}
\hline\hline
 &\multicolumn{6}{c}{$\tau$} \\ 
 $\alpha = 0.1$ & $0$ & $1$ & $2$ & $3$ & $4$ & $5$ \\ 
 \hline$0.1$ & $94$ & $95$ & $96$ & $96$ & $95$ & $94$\\ 
$\pi^2\;\;\;$ $0.2$ & $95$ & $96$ & $96$ & $95$ & $94$ & $93$\\ 
$0.3$ & $96$ & $96$ & $95$ & $94$ & $92$ & $94$\\ 
$0.4$ & $96$ & $95$ & $94$ & $92$ & $94$ & $95$\\ 
 \hline 
 \end{tabular}
 
 \vspace{2em} 
 
\begin{tabular}{r|rrrrrr}
\hline\hline
 &\multicolumn{6}{c}{$\tau$} \\ 
 $\alpha = 0.2$ & $0$ & $1$ & $2$ & $3$ & $4$ & $5$ \\ 
 \hline$0.1$ & $91$ & $92$ & $92$ & $91$ & $90$ & $90$\\ 
$\pi^2\;\;\;$ $0.2$ & $93$ & $92$ & $91$ & $89$ & $87$ & $85$\\ 
$0.3$ & $93$ & $92$ & $89$ & $86$ & $85$ & $89$\\ 
$0.4$ & $93$ & $91$ & $86$ & $85$ & $88$ & $89$\\ 
 \hline 
 \end{tabular}
  \end{subtable}
  ~
  \begin{subtable}{0.48\textwidth}
    \caption{Relative Width}
    \begin{tabular}{r|rrrrrr}
\hline\hline
 &\multicolumn{6}{c}{$\tau$} \\ 
 $\alpha = 0.05$ & $0$ & $1$ & $2$ & $3$ & $4$ & $5$ \\ 
 \hline$0.1$ & $114$ & $115$ & $117$ & $119$ & $123$ & $126$\\ 
$\pi^2\;\;\;$ $0.2$ & $116$ & $117$ & $120$ & $121$ & $125$ & $126$\\ 
$0.3$ & $117$ & $117$ & $120$ & $122$ & $123$ & $123$\\ 
$0.4$ & $116$ & $118$ & $120$ & $121$ & $121$ & $120$\\ 
 \hline 
 \end{tabular}
 
 \vspace{2em} 
 
\begin{tabular}{r|rrrrrr}
\hline\hline
 &\multicolumn{6}{c}{$\tau$} \\ 
 $\alpha = 0.1$ & $0$ & $1$ & $2$ & $3$ & $4$ & $5$ \\ 
 \hline$0.1$ & $121$ & $123$ & $125$ & $128$ & $129$ & $131$\\ 
$\pi^2\;\;\;$ $0.2$ & $122$ & $124$ & $126$ & $129$ & $130$ & $131$\\ 
$0.3$ & $123$ & $125$ & $126$ & $127$ & $128$ & $128$\\ 
$0.4$ & $123$ & $123$ & $124$ & $125$ & $125$ & $123$\\ 
 \hline 
 \end{tabular}
 
 \vspace{2em} 
 
\begin{tabular}{r|rrrrrr}
\hline\hline
 &\multicolumn{6}{c}{$\tau$} \\ 
 $\alpha = 0.2$ & $0$ & $1$ & $2$ & $3$ & $4$ & $5$ \\ 
 \hline$0.1$ & $135$ & $139$ & $140$ & $140$ & $144$ & $145$\\ 
$\pi^2\;\;\;$ $0.2$ & $136$ & $136$ & $137$ & $139$ & $141$ & $141$\\ 
$0.3$ & $135$ & $135$ & $136$ & $137$ & $136$ & $135$\\ 
$0.4$ & $133$ & $133$ & $133$ & $133$ & $131$ & $128$\\ 
 \hline 
 \end{tabular}
  \end{subtable}
  \caption{OLS versus TSLS Example: Asymptotic coverage and expected relative width of two-step confidence interval with $\alpha_1 = \alpha/4,  \alpha_2 = 3\alpha/4$.}
  \label{tab:Limit2StepWideTauOLSvsIV}
\end{table}

\begin{table}[h]
  \centering
  \begin{subtable}{0.48\textwidth}
    \caption{Coverage Probability}
    \begin{tabular}{r|rrrrrr}
\hline\hline
 &\multicolumn{6}{c}{$\tau$} \\ 
 $\alpha = 0.05$ & $0$ & $1$ & $2$ & $3$ & $4$ & $5$ \\ 
 \hline$0.1$ & $98$ & $98$ & $97$ & $96$ & $96$ & $97$\\ 
$\gamma^2\;\;\;$ $0.2$ & $98$ & $98$ & $98$ & $97$ & $96$ & $96$\\ 
$0.3$ & $98$ & $98$ & $98$ & $97$ & $97$ & $96$\\ 
$0.4$ & $97$ & $97$ & $98$ & $98$ & $97$ & $97$\\ 
 \hline 
 \end{tabular}
 
 \vspace{2em} 
 
\begin{tabular}{r|rrrrrr}
\hline\hline
 &\multicolumn{6}{c}{$\tau$} \\ 
 $\alpha = 0.1$ & $0$ & $1$ & $2$ & $3$ & $4$ & $5$ \\ 
 \hline$0.1$ & $96$ & $96$ & $94$ & $93$ & $93$ & $94$\\ 
$\gamma^2\;\;\;$ $0.2$ & $96$ & $96$ & $95$ & $94$ & $93$ & $93$\\ 
$0.3$ & $96$ & $96$ & $95$ & $95$ & $93$ & $92$\\ 
$0.4$ & $95$ & $96$ & $96$ & $95$ & $94$ & $93$\\ 
 \hline 
 \end{tabular}
 
 \vspace{2em} 
 
\begin{tabular}{r|rrrrrr}
\hline\hline
 &\multicolumn{6}{c}{$\tau$} \\ 
 $\alpha = 0.2$ & $0$ & $1$ & $2$ & $3$ & $4$ & $5$ \\ 
 \hline$0.1$ & $93$ & $91$ & $87$ & $85$ & $87$ & $88$\\ 
$\gamma^2\;\;\;$ $0.2$ & $93$ & $92$ & $89$ & $86$ & $85$ & $87$\\ 
$0.3$ & $93$ & $92$ & $90$ & $88$ & $85$ & $85$\\ 
$0.4$ & $93$ & $92$ & $91$ & $89$ & $87$ & $85$\\ 
 \hline 
 \end{tabular}
  \end{subtable}
  ~
  \begin{subtable}{0.48\textwidth}
    \caption{Relative Width}
    \begin{tabular}{r|rrrrrr}
\hline\hline
 &\multicolumn{6}{c}{$\tau$} \\ 
 $\alpha = 0.05$ & $0$ & $1$ & $2$ & $3$ & $4$ & $5$ \\ 
 \hline$0.1$ & $117$ & $117$ & $118$ & $118$ & $118$ & $118$\\ 
$\gamma^2\;\;\;$ $0.2$ & $117$ & $117$ & $119$ & $121$ & $121$ & $122$\\ 
$0.3$ & $117$ & $117$ & $119$ & $122$ & $123$ & $124$\\ 
$0.4$ & $116$ & $116$ & $119$ & $122$ & $124$ & $125$\\ 
 \hline 
 \end{tabular}
 
 \vspace{2em} 
 
\begin{tabular}{r|rrrrrr}
\hline\hline
 &\multicolumn{6}{c}{$\tau$} \\ 
 $\alpha = 0.1$ & $0$ & $1$ & $2$ & $3$ & $4$ & $5$ \\ 
 \hline$0.1$ & $122$ & $122$ & $122$ & $122$ & $121$ & $121$\\ 
$\gamma^2\;\;\;$ $0.2$ & $123$ & $124$ & $125$ & $126$ & $126$ & $126$\\ 
$0.3$ & $123$ & $123$ & $125$ & $128$ & $128$ & $129$\\ 
$0.4$ & $122$ & $123$ & $126$ & $128$ & $130$ & $131$\\ 
 \hline 
 \end{tabular}
 
 \vspace{2em} 
 
\begin{tabular}{r|rrrrrr}
\hline\hline
 &\multicolumn{6}{c}{$\tau$} \\ 
 $\alpha = 0.2$ & $0$ & $1$ & $2$ & $3$ & $4$ & $5$ \\ 
 \hline$0.1$ & $131$ & $130$ & $129$ & $129$ & $128$ & $127$\\ 
$\gamma^2\;\;\;$ $0.2$ & $134$ & $134$ & $134$ & $134$ & $134$ & $134$\\ 
$0.3$ & $135$ & $135$ & $136$ & $137$ & $138$ & $138$\\ 
$0.4$ & $136$ & $136$ & $138$ & $138$ & $140$ & $140$\\ 
 \hline 
 \end{tabular}
  \end{subtable}
  \caption{Choosing IVs Example: Asymptotic coverage and expected relative width of two-step confidence interval with $\alpha_1 = \alpha/4,  \alpha_2 = 3\alpha/4$.}
  \label{tab:Limit2StepWideTauChooseIVs}
\end{table}

\section{Supplementary Simulation Results}
\label{sec:simsupplement}
This section discusses additional simulation results for the OLS versus IV example and the choosing instrumental variables example, as a supplement to those given in Sections \ref{sec:OLSvsIVsim}--\ref{sec:CIsim} of the paper.

\subsection{Downward J-Test}
\label{sec:downwardJ}
This appendix presents simulation results for the downward $J$-test -- an informal moment selection method that is fairly common in applied work -- for the choosing instrumental variables example from Section \ref{sec:chooseIVsim}.
In this simulation design the downward $J$-test amounts to simply using the full estimator unless it is rejected by a $J$-test.
Table \ref{fig:chooseIVsim_RMSErelJ} compares the RMSE of the post-FMSC estimator to that of the downward $J$-test with $\alpha = 0.1$ (J90), and $\alpha = 0.05$ (J95).
For robustness, I calculate the $J$-test statistic using a centered covariance matrix estimator, as in the FMSC formulas from section \ref{sec:chooseIVexample}.
Unlike the FMSC, the downward $J$-test is very badly behaved for small sample sizes, particularly for the smaller values of $\gamma$.
For larger sample sizes, the relative performance of the FMSC and the $J$-test is quite similar to what we saw in Figure \ref{fig:OLSvsIV_RMSEbaseline} for the OLS versus TSLS example: the $J$-test performs best for the smallest values of $\rho$, the FMSC performs best for moderate values, and the two procedures perform similarly for large values.
\begin{figure}
\centering
	\input{./SimulationChooseIVs/Results/RMSE_coarse_gamma_rel_J.tex}
	\caption{RMSE values for the post-Focused Moment Selection Criterion (FMSC) estimator and the downward $J$-test estimator with $\alpha = 0.1$ (J90) and $\alpha = 0.05$ (J95) based on 20,000 simulation draws from the DGP given in Equations \ref{eq:chooseIVDGP1}--\ref{eq:chooseIVDGP3} using the formulas from Sections \ref{sec:chooseIVexample}.}
	\label{fig:chooseIVsim_RMSErelJ}
\end{figure}
These results are broadly similar to those for the GMM moment selection criteria of \cite{Andrews1999} considered in Section \ref{sec:chooseIVsim}, which should not come as a surprise since the J-test statistic is an ingredient in the construction of the GMM-AIC, BIC and HQ. 

\subsection{Canonical Correlations Information Criterion}
\label{sec:CCIC}
Because the GMM moment selection criteria suggested by \cite{Andrews1999} consider only instrument exogeneity, not relevance, \cite{HallPeixe2003} suggest combining them with their canonical correlations information criterion (CCIC), which aims to detect and eliminate ``redundant instruments.''
Including such instruments, which add no information beyond that already contained in the other instruments, can lead to poor finite-sample performance in spite of the fact that the first-order limit distribution is unchanged.
For the choosing instrumental variables simulation example, presented in Section \ref{sec:chooseIVsim}, the CCIC takes the following simple form
	\begin{equation}
	\mbox{CCIC}(S) = n \log\left[1 - R_n^2(S) \right] + h(p + |S|)\kappa_n
	\end{equation}
where $R_n^2(S)$ is the first-stage $R^2$ based on instrument set $S$ and $h(p + |S|)\kappa_n$ is a penalty term \citep{Jana2005}. 
Instruments are chosen to \emph{minimize} this criterion.
If we define $h(p + |S|) = (p + |S| - r)$, setting $\kappa_n = \log{n}$ gives the CCIC-BIC, while $\kappa_n = 2.01 \log{\log{n}}$ gives the CCIC-HQ and $\kappa_n = 2$ gives the CCIC-AIC.
By combining the CCIC with an Andrews-type criterion, \cite{HallPeixe2003} propose to first eliminate invalid instruments and then redundant ones.
A combined GMM-BIC/CCIC-BIC criterion for the simulation example from section \ref{sec:chooseIVsim} uses the valid estimator unless both the GMM-BIC \emph{and} CCIC-BIC select the full estimator.
Combined HQ and AIC-type procedures can be defined analogously.
In the simulation design from this paper, however, \emph{each} of these combined criteria gives results that are practically identical to those of the valid estimator.
This hold true across all parameter values and sample sizes.
Full details are available upon request.

\subsection{Simulation Results for the 2-Step Confidence Interval}
\label{append:conf_sim}

This appendix presents results for the 2-Step confidence interval in the simulation experiment from Section \ref{sec:CIsim}.
Tables \ref{tab:CISim100_2stepWideTau_OLSvsIV} and \ref{tab:CISim100_2stepWideTau_ChooseIVs} 
present coverage probabilities and average relative width of the two-step confidence interval procedure with $\alpha_1 = \alpha/4$ and $\alpha_2 = 3\alpha/4$, the finite sample analogues to Tables \ref{tab:Limit2StepWideTauOLSvsIV} and \ref{tab:Limit2StepWideTauChooseIVs}. 
Results for other configurations of $\alpha_1, \alpha_2$, available upon request, result in even wider intervals.

\begin{table}[h]
  \centering
  \begin{subtable}{0.48\textwidth}
    \caption{Coverage Probability}
    \input{./AdditionalSimulations/CISimResults/c_2tauwide_OLSvsIV_100.tex}
  \end{subtable}
  ~
  \begin{subtable}{0.48\textwidth}
    \caption{Average Relative Width}
    \input{./AdditionalSimulations/CISimResults/w_2tauwide_OLSvsIV_100.tex}
  \end{subtable}
  \caption{2-step CI, $\alpha_1 = \alpha/4,\alpha_2 = 3\alpha/4$, OLS vs IV Example, $N=100$}
  \label{tab:CISim100_2stepWideTau_OLSvsIV}
\end{table}

\begin{table}[h]
  \centering
  \begin{subtable}{0.48\textwidth}
    \caption{Coverage Probability}
    \input{./AdditionalSimulations/CISimResults/c_2tauwide_chooseIVs_100.tex}
  \end{subtable}
  ~
  \begin{subtable}{0.48\textwidth}
    \caption{Average Relative Width}
    \input{./AdditionalSimulations/CISimResults/w_2tauwide_chooseIVs_100.tex}
  \end{subtable}
  \caption{2-step CI, $\alpha_1 = \alpha/4,  \alpha_2 = 3\alpha/4$, Choosing IVs Example, $N=100$}
  \label{tab:CISim100_2stepWideTau_ChooseIVs}
\end{table}

\newpage

\subsection{Weak Instruments} 
\label{sec:appendWeak}
The FMSC is derived under an asymptotic sequence that assumes strong identification.
But what if this assumption fails? 
The following simulation results provide a partial answer to this question by extending the RMSE comparisons from Sections \ref{sec:OLSvsIVsim} and \ref{sec:chooseIVsim} to the case in which the ``valid'' estimator suffers from a weak instruments problem.

Figures \ref{fig:OLSvsIV_RMSEbaseline_weak} and \ref{fig:OLSvsIV_AVG_weak} present further results for the OLS versus IV example from Section \ref{sec:OLSvsIVsim} with $\pi \in \left\{0.1, 0.05, 0.01\right\}$.
When $\pi = 0.01$ the TSLS estimator suffers from a severe weak instrument problem.
All other parameters values are identical to those in the corresponding figures from the body of the paper (Figures \ref{fig:OLSvsIV_RMSEbaseline} and \ref{fig:OLSvsIV_AVG}).
We see from Figure \ref{fig:OLSvsIV_RMSEbaseline_weak} that the post-FMSC estimator dramatically outperforms the TSLS estimator in the presence of a weak instrument.
Indeed, the RMSE curves for the these two estimators only cross in the bottom right panel where $\pi = 0.1$ and $N = 500$. 
Turning our attention to Figure \ref{fig:OLSvsIV_AVG_weak}, the minimum-AMSE averaging estimator provides a uniform improvement over the post-FMSC estimator although the advantage is relatively small unless $\pi = 0.1$ and $N=500$. 
Moreover, the DHW test with $\alpha = 0.05$ performs extremely well unless $\rho$ is large.
This is because, by construction, it is more likely to choose OLS than the other methods -- the correct decision if the instrument is sufficiently weak.

\begin{figure}[h]
\centering
	\input{./WeakOLSvsIV/Results/RMSE_coarse_pi_baseline.tex}
	\caption{RMSE values for the two-stage least squares (TSLS) estimator, the ordinary least squares (OLS) estimator, and the post-Focused Moment Selection Criterion (FMSC) estimator based on 10,000 simulation draws from the DGP given in Equations \ref{eq:OLSvsIVDGP1}--\ref{eq:OLSvsIVDGP3} using the formulas from Section \ref{sec:OLSvsIVExample}.}
	\label{fig:OLSvsIV_RMSEbaseline_weak}
\end{figure}

\begin{figure}[h]
\centering
	\input{./WeakOLSvsIV/Results/RMSE_coarse_pi_relative_all.tex}
	\caption{RMSE values for the post-Focused Moment Selection Criterion (FMSC) estimator, Durbin-Hausman-Wu pre-test estimators with $\alpha = 0.1$ (DWH90) and $\alpha = 0.05$ (DHW95), and the minmum-AMSE averaging estimator, based on 10,000 simulation draws from the DGP given in Equations \ref{eq:OLSvsIVDGP1}--\ref{eq:OLSvsIVDGP3} using the formulas from Sections \ref{sec:OLSvsIVExample} and \ref{sec:momentavgexample}.}
	\label{fig:OLSvsIV_AVG_weak}
\end{figure}

Figures \ref{fig:chooseIVsim_RMSErelMSC_weak} and \ref{fig:chooseIVsim_RMSEbaseline_weak} present RMSE comparisons for a slightly more general version of the simulation experiment from Section \ref{sec:chooseIVsim} in which the strength of the valid instruments can vary according to a scalar parameter $\pi$, specifically
\begin{eqnarray}
		y_i &=& 0.5 x_i + \epsilon_i\\ 
		\label{eq:chooseIVDGP1_weak}
		x_i &=& \pi (z_{1i} + z_{2i} + z_{3i}) + \gamma w_i + v_i 
		\label{eq:chooseIVDGP2_weak}
	\end{eqnarray}
for $i=1, 2, \hdots, N$ where $(\epsilon_i, v_i, w_i, z_{i1}, z_{2i}, z_{3i})' \sim \mbox{ iid  } N(0,\mathcal{V})$ with
\begin{equation}	
	\mathcal{V} = \left[  \begin{array}
		{cc} \mathcal{V}_1 & 0 \\ 0 & \mathcal{V}_2
	\end{array}\right], \quad
	\mathcal{V}_1 = \left[ \begin{array}
		{ccc} 
		1 & (0.5 - \gamma \rho) & \rho \\
		(0.5 - \gamma \rho) & (1 - \pi^2 - \gamma^2) & 0\\ 
		\rho & 0 & 1 \\ 
	\end{array} \right], \quad \mathcal{V}_2 = I_3 / 3
	\label{eq:chooseIVDGP3_weak}
\end{equation}
As in Section \ref{sec:chooseIVsim}, this setup keeps the variance of $x$ fixed at one and the endogeneity of $x$, $Cor(x, \epsilon)$, fixed at $0.5$ while allowing the relevance, $\gamma = Cor(x,w)$, and endogeneity, $\rho = Cor(w, \epsilon)$, of the instrument $w$ to vary.
The instruments $z_1, z_2, z_3$ remain valid and exogenous and the meaning of the parameters $\rho$ and $\gamma$ is unchanged.
By varying $\pi$, however, the present design allows the strength of the first-stage to vary: the first-stage R-squared is $1 - \sigma_v^2 = \pi^2 + \gamma^2$.
Setting $\pi$ sufficiently small creates a weak instrument problem for the ``valid'' estimator that uses only $z_1, z_2$ and $z_3$ as instruments.
Figures \ref{fig:chooseIVsim_RMSEbaseline_weak} and \ref{fig:chooseIVsim_RMSErelMSC_weak} present results for $\pi = 0.01$.
The results are qualitatively similar to those of Figures \ref{fig:chooseIVsim_RMSEbaseline} and \ref{fig:chooseIVsim_RMSErelMSC} although somewhat starker.
When the valid estimator suffers from a weak instruments problem, the post-FMSC estimator in general dramatically outperforms both the valid estimator and the GMM moment selection criteria of \cite{Andrews1999}.
There are only two exceptions. 
First when $N = 500$ and $\gamma = 0.2$, the valid estimator outperforms FMSC for $\rho$ greater than $0.25$.
Second, when $N = 500$, GMM-BIC outperforms FMSC for the smallest values of $\rho$.

\begin{figure}[h]
\centering
	\input{./WeakChooseIVs/Results/RMSE_coarse_gamma_baseline.tex}
	\caption{RMSE values for the valid estimator, including only $(z_1, z_2, z_3)$, the full estimator, including $(z_1, z_2, z_3, w)$, and the post-Focused Moment Selection Criterion (FMSC) estimator based on 20,000 simulation draws from the DGP given in Equations \ref{eq:chooseIVDGP1_weak}--\ref{eq:chooseIVDGP3_weak} with $\pi = 0.01$ using the formulas from Section \ref{sec:chooseIVexample}.}
	\label{fig:chooseIVsim_RMSEbaseline_weak}
\end{figure}

\begin{figure}[h]
\centering
	\input{./WeakChooseIVs/Results/RMSE_coarse_gamma_rel_MSC.tex}
	\caption{RMSE values for the post-Focused Moment Selection Criterion (FMSC) estimator and the GMM-BIC, HQ, and AIC estimators based on 20,000 simulation draws from the DGP given in Equations \ref{eq:chooseIVDGP1_weak}--\ref{eq:chooseIVDGP3_weak} with $\pi = 0.01$ using the formulas from Section \ref{sec:chooseIVexample}.}
	\label{fig:chooseIVsim_RMSErelMSC_weak}
\end{figure}

\end{document}